\newtheoremstyle{aps}
  {\topsep}{\topsep}
  {\rmfamily}
  {\parindent}
  {\itshape}{.\,}{ }
  {\thmname{#1}\thmnumber{ #2}\thmnote{ (#3)}}
\theoremstyle{aps}
\newtheorem{thm}{Theorem}
\newtheorem{dfn}{Definition}
\newtheorem{lem}{Lemma}
\begin{document}
\title{Classification and enumeration of solid-solid phase transition mechanisms}
\author{Fang-Cheng Wang}
\affiliation{State Key Laboratory for Artificial Microstructure and Mesoscopic Physics, Frontier Science Center for Nano-optoelectronics, School of Physics, Peking University, Beijing 100871, People's Republic of China}
\author{Qi-Jun Ye}
\email{qjye@pku.edu.cn}
\affiliation{State Key Laboratory for Artificial Microstructure and Mesoscopic Physics, Frontier Science Center for Nano-optoelectronics, School of Physics, Peking University, Beijing 100871, People's Republic of China}
\affiliation{Interdisciplinary Institute of Light-Element Quantum Materials, Research Center for Light-Element Advanced Materials, and Collaborative Innovation Center of Quantum Matter, Peking University, Beijing 100871, People's Republic of China}
\author{Yu-Cheng Zhu}
\affiliation{State Key Laboratory for Artificial Microstructure and Mesoscopic Physics, Frontier Science Center for Nano-optoelectronics, School of Physics, Peking University, Beijing 100871, People's Republic of China}
\author{Xin-Zheng Li}
\email{xzli@pku.edu.cn}
\affiliation{State Key Laboratory for Artificial Microstructure and Mesoscopic Physics, Frontier Science Center for Nano-optoelectronics, School of Physics, Peking University, Beijing 100871, People's Republic of China}
\affiliation{Interdisciplinary Institute of Light-Element Quantum Materials, Research Center for Light-Element Advanced Materials, and Collaborative Innovation Center of Quantum Matter, Peking University, Beijing 100871, People's Republic of China}
\affiliation{Peking University Yangtze Delta Institute of Optoelectronics, Nantong, Jiangsu 226010, People's Republic of China}

\date{\today}

\begin{abstract}
	Crystal-structure match (CSM), the atom-to-atom correspondence between two crystalline phases, is used extensively to describe solid-solid phase transition (SSPT) mechanisms.
	However, existing computational methods cannot account for all possible CSMs.
	Here, we propose a formalism to classify all CSMs into a tree structure, which is independent of the choices of unit cell and supercell.
	We rigorously proved that only a finite number of noncongruent CSMs are of practical interest.
	By representing CSMs as integer matrices, we introduce the \textsc{crystmatch} method to \textit{exhaustively} enumerate them, which uncontroversially solves the CSM optimization problem under \textit{any} geometric criterion.
	%
	%
	For most SSPTs, \textsc{crystmatch} can reproduce all known deformation mechanisms and CSMs within 10 CPU minutes, while also revealing thousands of new candidates.
	The resulting database can be further used for comparing experimental phenomena, high-throughput energy barrier calculations, or machine learning.
\end{abstract}

\maketitle

\section{Introduction}\label{sec:intro}
In solid-solid phase transition (SSPT), atoms in one crystal structure are rearranged to form another crystal structure.
This process establishes an atom-to-atom correspondence between the initial and final structures, which is called a crystal-structure match (CSM)~\cite{stevanovic2018predicting,therrien2020matching,wang2024crystal}.
Every transition path on the potential energy surface (PES) determines a CSM, while the latter describes the SSPT more concisely and flexibly.
Consequently, it is common for researchers to identify the CSM, rather than the transition path, to determine which SSPT mechanism is involved~\cite{zahn2005mechanism,shimojo2004atomistic,cai2007first,zahn2004nucleation,badin2021nucleating}.
This is especially true in nucleation simulations~\cite{zahn2004nucleation,badin2021nucleating,santos2022size,khaliullin2011nucleation}, where the term ``mechanism'' is almost synonymous with ``CSM'', which serves as a common language for both nucleation and concerted mechanism studies~\cite{wang2024crystal}.
It has long been believed that SSPTs, like chemical reactions, occur along a minimum energy path (MEP) with the lowest energy barrier~\cite{eyring1935activated,sheppard2012generalized,therrien2021metastable}.
Just as each transition path determines a CSM, each CSM can also yield an MEP through methods such as the solid-state nudged elastic band (NEB)~\cite{sheppard2012generalized}.
However, NEB-like methods can only produce an MEP with the user-specified CSM, ignoring all other CSMs and their MEPs, as illustrated in Fig.~\ref{fig:neb}.
Hence, despite their great success, NEB-like methods may still fail to give the lowest-barrier MEP, which has been an open question in the SSPT mechanism research for the past decade~\cite{therrien2021metastable,zhu2019phase,shang2014stochastic}.
The main difficulty is that unlike the atom-to-atom correspondence in a chemical reaction, possible CSMs in an SSPT are extremely numerous---due to the large number of identical atoms and possible ways of lattice deformation---and thus hard to determine by human intuition~\cite{stevanovic2018predicting,therrien2020matching,wang2024crystal}.
To investigate what CSMs are favored by the SSPT, efforts have been made on two fronts: variable-CSM MEP calculation~\cite{shang2014stochastic,zhu2019phase} and pure geometric CSM optimization~\cite{therrien2021metastable}.
The former includes the stochastic surface walking (SSW)~\cite{shang2014stochastic,guan2015energy,zhang2015variable} and \textsc{pallas}~\cite{zhu2019phase}, where intensive energy calculations hinder the enumeration of CSMs.
The latter relies on a specific optimization criterion, such as minimum strain~\cite{chen2016determination}, minimum number of broken bonds~\cite{stevanovic2018predicting}, or minimum total distance traveled by the atoms~\cite{therrien2020matching,huang2024interface}.
However, geometrically optimized CSMs may not have the lowest energy barrier, and some are even suboptimal under their own criteria~\cite{wang2024crystal}.
It would be useful to list \textit{all} possible CSMs, both to guide MEP calculations and to identify the best CSM under \textit{any} geometric criterion---yet, such a methodology is lacking.
\begin{figure*}[ht!]
  \centering
  \includegraphics[width=0.75\textwidth]{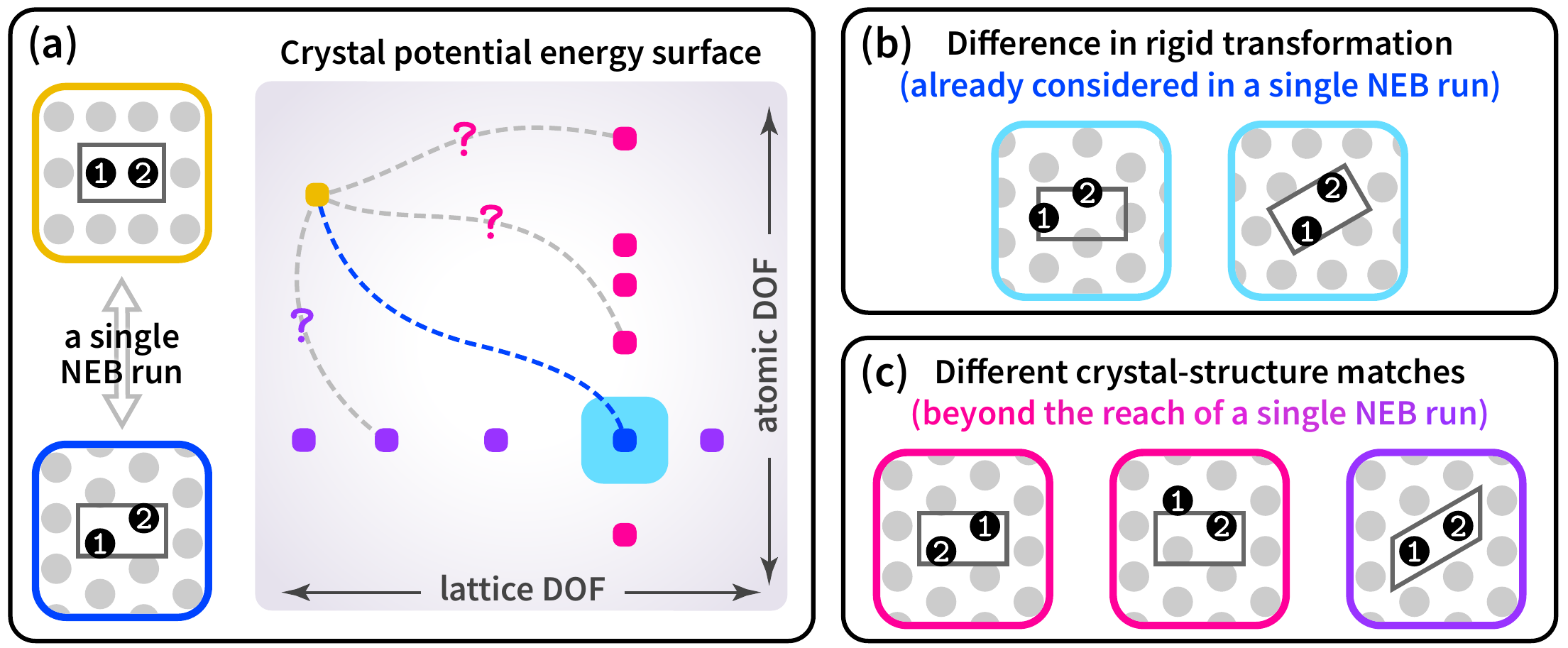}
  \caption{A schematic diagram showing that the initial structure (orthorhombic, yellow) may evolve to the final structure (hexagonal, other colors) in a number of ways.
	(a)
	After specifying the final structure as shown in the blue box, the NEB method can find the MEP denoted by the blue dashed line.
	However, to become the final structure, neither the lattice nor the atoms must end up as the blue box.
	Other possible endpoints spread out in both lattice and atomic degrees of freedom (DOF) on the PES, whose MEPs (gray dashed lines) are inaccessible.
	(b) Endpoints that differ in rigid transformations (cyan) can be easily taken into account by the NEB method~\cite{field1999practical}.
	(c) Endpoints that differ in CSMs, including those with the same SLM (pink) and different SLMs (purple), which we use \textsc{crystmatch} to exhaust.}
  \label{fig:neb}
\end{figure*}
In this paper, we present a complete classification of CSMs, and the \textsc{crystmatch} method for \textit{exhaustively} enumerating them.
To this end, a mathematical formalism is developed in Section~\ref{sec:formal} such that the deformation gradient, sublattice match (SLM), multiplicity and shuffle distance of a CSM can be rigorously defined in a \textit{cell-independent} manner.
These concepts organize all CSMs into a tree-like structure, as shown in Fig.~\ref{fig:tree}, whose matrix representation and pruning scheme are discussed in Section~\ref{sec:rep-prun}.
To traverse this tree, three algorithms are detailed in Section~\ref{sec:algs}.
We demonstrate their application using the B1--B2 and graphite-to-diamond transitions as examples in Section~\ref{sec:app}.
\begin{figure}[b!]
  \centering
  \includegraphics[width=0.95\linewidth]{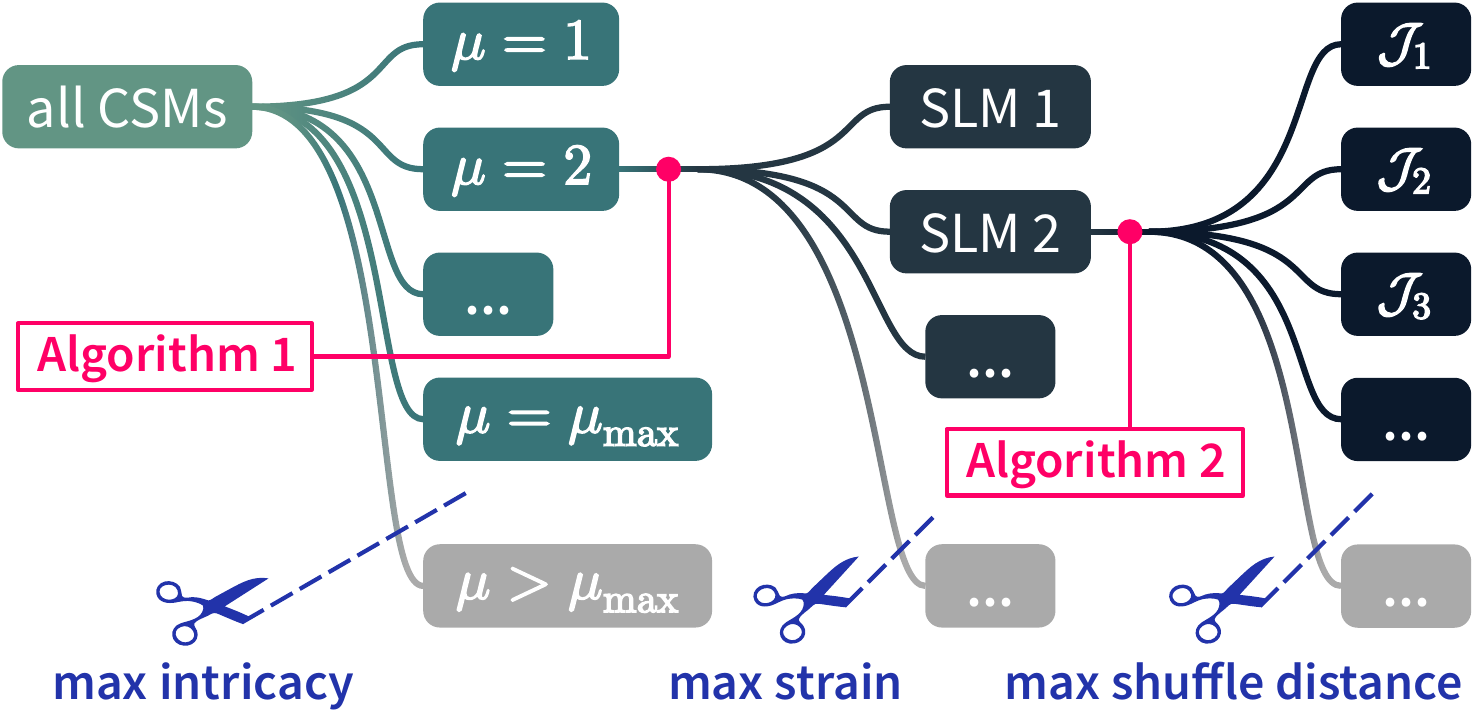}
  \caption{Given the initial and final crystal structures, all CSMs between them can be classified into a tree by their multiplicities and SLMs.
  	Without restrictions, the tree would have infinite breadth at each level.
	For empirical (see Table~\ref{tab:mech}) and computational considerations, an upper bound $\mu_\text{max}$ for the multiplicity $\mu$ is introduced.
	Also, CSMs with too much strain or shuffle distance are deemed nonphysical.
	Under these prunings, the remaining tree contains only a finite number of CSMs, which can be obtained via Algorithms~\ref{alg:imt} and \ref{alg:pct}.}
  \label{fig:tree}
\end{figure}
It should be noted that the molecular dynamics (MD) has also been used extensively to study SSPTs~\cite{scandolo1995pressure}.
Among different transition paths, the one with the lowest energy barrier is most likely to be obtained by MD.
This bypasses the CSM problem, but it takes an average time $\bar t\propto\exp(\beta \Delta^\ddagger )$ to see a rare event with energy barrier $\Delta^\ddagger $ to happen.
Such $\Delta^\ddagger$ increases linearly with the supercell volume until it can accommodate a critical nucleus~\cite{sheppard2012generalized,khaliullin2011nucleation,badin2021nucleating,santos2022size}, making the MD simulation very hard---if not infeasible.
Fortunately, enhanced sampling techniques like the metadynamics (MetaD)~\cite{laio2002escaping,martovnak2003predicting} can reinforce the occurrence of rare events by adding a history-dependent bias potential, making it possible to simulate nucleation in SSPTs~\cite{badin2021nucleating,santos2022size}.
Nevertheless, the bias potential may elevate different energy barriers \textit{unequally}, thus altering the lowest-barrier path and CSM.
We do not discuss here how \textsc{crystmatch} can serve MetaD, but simply show that the enumeration results of the former contain the CSM yielded by the latter.
\begin{figure}[b!]
  \centering
  \includegraphics[width=\linewidth]{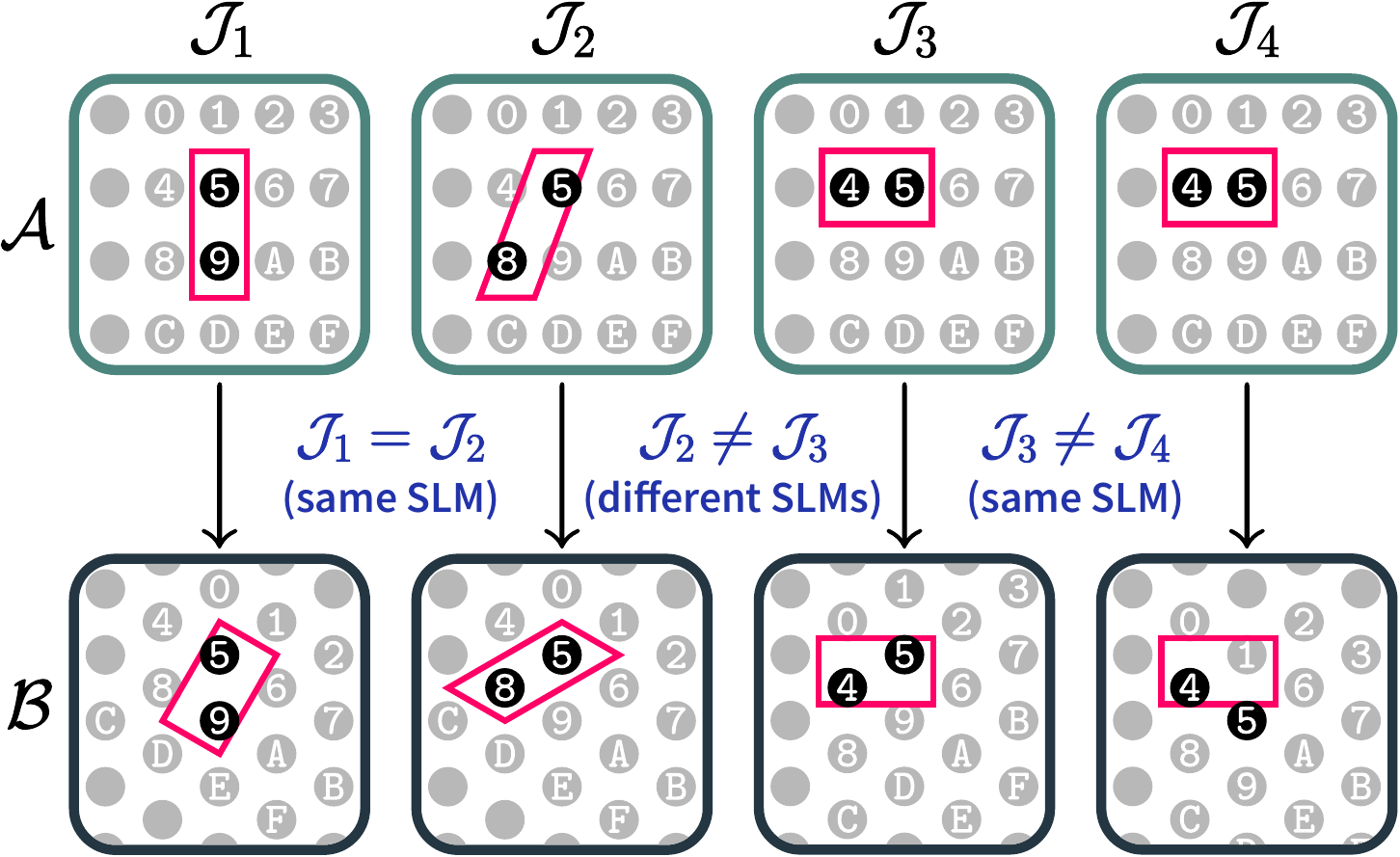}
  \caption{Four CSMs from $\mathcal{A}$ (orthorhombic) to $\mathcal{B}$ (hexagonal).
	Each CSM is described by the correspondences between the initial and final supercell vectors (pink), and two translational-inequivalent atoms (black).
	From the labels on the atoms (\texttt{0-9} and \texttt{A-F}) in $\mathcal{B}$, one can see that $\mathcal{J}_1=\mathcal{J}_2\neq \mathcal{J}_3\neq \mathcal{J}_4$.}
  \label{fig:diff}
\end{figure}
\section{Cell-independent formalism}\label{sec:formal}
Conventionally, CSMs are described by a pair of initial and final supercells, as shown in Fig.~\ref{fig:diff}.
However, the multivaluedness of this description causes severe inconveniences for both the theory and the enumeration of CSMs~\cite{wang2024crystal,therrien2020matching,li2022smallest}.
The purpose of this section is to establish a \textit{cell-independent} formalism, making it self-evident that multiplicity and SLM---the main subjects in Fig.~\ref{fig:tree}---are general, singlevalued, intrinsic properties of CSMs.
We also define congruence relations among CSMs and SLMs, which are crucial for the efficiency of \textsc{crystmatch}.
To ensure the unambiguousness of the narration and proofs, we define fundamental terms (e.g., ``lattice'' and ``crystal structure'') using the language of set theory.
All lemmas are proved in Appendix~\ref{append:lemma}, while frequently used symbols are summarized in Table~\ref{tab:symbol-formal}.
Commutative diagrams, as a tool for depicting the composition of homomorphisms, are also adopted to simplify the exposition.
Let us begin with a set of points in $\mathbb{R}^3$, where each point is endowed with an atomic species; we denote by $\mathbb{X}$ the set of all atomic species.
\begin{dfn}[Atomic Structure]\label{def:as}
	An atomic structure is a nonempty set $\mathcal{A}\subset \mathbb{R}^3$ together with a mapping $\chi_A\colon \mathcal{A}\to\mathbb{X}$.
\end{dfn}
\begin{dfn}[Atom-to-Atom Correspondence]\label{def:ata}
    Let $\mathcal{A}$ and $\mathcal{B}$ be atomic structures.
	We say that a mapping $\mathcal{J}\colon\mathcal{A}\to\mathcal{B}$ is an atom-to-atom correspondence if
	(1) $\mathcal{J}$ is bijective, and
	(2) $\mathcal{J}$ preserves atom species, i.e., makes the following diagram commute:
\begin{equation}\label{ajbta}
	\begin{tikzcd}[column sep=small]
		\mathcal{A}\arrow[rr, "{\mathcal{J}}"]&&\mathcal{B}\\
						      &\mathbb{X}\arrow[from=ul, "\chi_A"']\arrow[from=ur, "\chi_B"]&
  \end{tikzcd}
.\end{equation}
\end{dfn}
The above diagram is essentially a directed graph, whose vertices are sets and arrows are mappings.
Paths composed of multiple arrows naturally represent a composite mapping, from its starting point to its endpoint.
Commutativity of Eq.~(\ref{ajbta}) means that any two paths with the same starting point and endpoint are equal, i.e.,
\begin{equation}\label{tatbj}
		\chi_A=\chi_B\circ\mathcal{J}
.\end{equation}
One can also interpret Eq.~(\ref{ajbta}) as follows: For any $\mathbf{a}\in\mathcal{A}$, it is mapped along different paths to
\begin{equation}
	\chi_A(\mathbf{a})=\chi_B(\mathcal{J}(\mathbf{a}))
,\end{equation}
which means that the species of atom $\mathbf{a}$ is the same as that of its counterpart $\mathcal{J}(\mathbf{a})$.
Commutative diagrams are primarily used in category theory and homological algebra, where the arrows represent homomorphisms.
In this spirit, we redefine for all subsequent diagrams:
Arrows between two atomic structures \textit{always} represent atom-to-atom correspondences, i.e., species-preserving bijections.
Consequently, the symbols $\chi_A$ and $\mathbb{X}$ no longer appear in most diagrams.
This approach handles polyatomic SSPTs once and for all.
We hope the reader keeps in mind that $\!\!\begin{tikzcd}\mathcal{A}\arrow[r,"\ldots"]&\mathcal{B}\end{tikzcd}\!\!$ denotes an ``isomorphism'' between atomic structures.
\begin{table}[t!]
	\caption{Frequently used symbols in Section~\ref{sec:formal}.}\label{tab:symbol-formal}
\begin{ruledtabular}
	\begin{tabular}{cl}
		Symbol & Meaning \\ \hline
		$\mathcal{A},\mathcal{B}$ & Crystal structure (atomic structure) \\
	$S$ & Nonsingular linear transformation \\
	$+\mathbf{t}$ & Translation by $\mathbf{t}\in\mathbb{R}^3$\\
	$S\mathcal{A}+\mathbf{t}$ & Structure deformed by $S$, then translated by $\mathbf{t}$ \\
	$L_A$ & Lattice of a crystal structure $\mathcal{A}$ \\
	$\tilde L_A$ & Sublattice of a crystal structure $\mathcal{A}$ \\
	$Z_A$ & Number of atoms in a primitive cell of $\mathcal{A}$ \\
	$\mathcal{J}$ & CSM (atom-to-atom correspondence) \\
	$(\tilde L_A,\tilde L_B,S)$ & An SLM from $\mathcal{A}$ to $\mathcal{B}$ \\
	$\tilde Z$ & Period of an SLM or a CSM \\
	$\mu$ & Multiplicity of an SLM or a CSM\\
	$(S,\mathbf{t})$ & Affine transformation $\mathbf{a}\mapsto S\mathbf{a}+\mathbf{t}$
\end{tabular}
\end{ruledtabular}
\end{table}
\subsection{Lattices and crystal structures}
\begin{dfn}[Lattice]\label{def:l}
	We say that a vector set $L\subset \mathbb{R}^3$ is a lattice if it (1) is a group under vector addition, (2) is full rank (has three linealy independent elements), and (3) has a positive lower bound $\lambda$ on the distance between any two of its elements, i.e.,
	\begin{equation}\label{t1t2l}
	  \forall \mathbf{t}_1,\mathbf{t}_2\in L,\quad \mathbf{t}_1\neq\mathbf{t}_2\implies\left|\mathbf{t}_1-\mathbf{t}_2\right|\ge \lambda
	.\end{equation}
\end{dfn}
\begin{dfn}[Crystal Structure]\label{def:la}
	Let $\mathcal{A}$ be an atomic structure.
	We say that a vector $\mathbf{t}\in\mathbb{R}^3$ is a translation element of $\mathcal{A}$ if the mapping $\mathbf{a}\mapsto \mathbf{a}+\mathbf{t}$ is an atom-to-atom correspondence from $\mathcal{A}$ to itself, i.e.,
\begin{equation}
	\begin{tikzcd}
		\mathcal{A}\arrow[r,"+\mathbf{t}"]&\mathcal{A}
	\end{tikzcd}
.\end{equation}
	We say that an atomic structure $\mathcal{A}$ is a crystal structure if all its translation elements form a lattice, denoted as $L_A$.
	We refer to $L_A$ as the lattice of $\mathcal{A}$, and sublattices of $L_A$ sublattices of $\mathcal{A}$.
\end{dfn}
Let $\tilde L_A$ be a sublattice of a crystal structure $\mathcal{A}$.
The relation $\sim$ on $\mathcal{A}$ defined as
\begin{equation}\label{a1a2t}
	\mathbf{a}\sim\mathbf{a}'\quad\iff\quad \exists  \mathbf{t}\in\tilde L_A,\quad\mathbf{a}'=\mathbf{a}+\mathbf{t}
\end{equation}
is an equivalence relation, whose reflexivity, symmetry, and transitivity are ensured by the identity element, invertibility, and closure of the group $(\tilde L_A,+)$, respectively.
We say that $\mathbf{a},\mathbf{a}'\in\mathcal{A}$ are $\tilde L_A$-equivalent, and denote the $\tilde L_A$-equivalence class of $\mathbf{a}$ by
\begin{equation}
  \mathbf{a}+\tilde L_A=\left\{\mathbf{a}+\mathbf{t}\,\middle|\,\mathbf{t}\in\tilde L_A\right\}
,\end{equation}
and the quotient set by
\begin{equation}
  \mathcal{A}/\tilde L_A=\left\{\mathbf{a}+\tilde L_A\,\middle|\,\mathbf{a}\in\mathcal{A}\right\}
.\end{equation}
For any crystal structure $\mathcal{A}$, we denote $Z_A=\lvert\mathcal{A}/L_A\rvert$, which equals to the number of atoms in a primitive cell.
Actual crystal structures all have a finite $Z_A$ since they have a finite atomic density.
We will assume this henceforth, but the reader can also modify Definition~\ref{def:as} to make it rigorous; see Lemma~\ref{lem:finite-z}.
\begin{dfn}[CSM]
	A crystal structure match (CSM) is an atom-to-atom correspondence between two crystal structures.
	We denote the set of all CSMs from crystal structure $\mathcal{A}$ to crystal structure $\mathcal{B}$ by $\operatorname{CSM}(\mathcal{A},\mathcal{B})$.
\end{dfn}
\subsection{Decomposing crystal-structure matches}
Now we aim at defining the deformation gradient of a CSM.
Our approach is to first define the \textit{shuffle}, a type of CSM that is considered to have no deformation.
Then, we will define the deformation of a general CSM as the difference between that CSM and a shuffle, as shown in Fig.~\ref{fig:slm}(a).
In the following text, the composition of mappings (including CSMs, linear transformations, and translations) will appear more frequently.
When there is no risk of confusion, we will denote $f \circ g$ simply as $fg$.
\begin{dfn}[Shuffle]\label{def:shuffle}
	Let $\mathcal{A}$ and $\mathcal{B}$ be crystal structures.
	We say that a vector $\mathbf{t}\in L_A\cap L_B$ is a translation element of $\mathcal{J}\in\operatorname{CSM}(\mathcal{A},\mathcal{B})$ if we have the commutative diagram:
	\begin{equation}\label{ajbtt}
	  \begin{tikzcd}
		  \mathcal{A}\arrow[r, "{\mathcal{J}}"]&\mathcal{B}\\
		  \mathcal{A}\arrow[from=u, "{+\mathbf{t}}"']\arrow[r, "{\mathcal{J}}"']&\mathcal{B}\arrow[from=u, "{+\mathbf{t}}"]
	  \end{tikzcd}
	,\end{equation}
	i.e., $\mathcal{J}(\mathbf{a}+\mathbf{t})=\mathcal{J}(\mathbf{a})+\mathbf{t}$ for all $\mathbf{a}\in\mathcal{A}$.
	If all translation elements of $\mathcal{J}$ form a lattice, we refer to it as the shuffle lattice of $\mathcal{J}$ and say that $\mathcal{J}$ is a shuffle.
	We use a squiggly arrow $\!\!\begin{tikzcd}\mathcal{A}\arrow[r,rightsquigarrow,"\cdots"]&\mathcal{B}\end{tikzcd}\!\!$ to denote a shuffle.
\end{dfn}
In actual SSPTs, $\mathcal{A}$ and $\mathcal{B}$ in different phases generally have no common translation elements (except $\mathbf{0}$), in which case $\mathcal{J}\in\operatorname{CSM}(\mathcal{A},\mathcal{B})$ can never be a shuffle.
However, it is possible to deform $\mathcal{A}$ such that the deformed structure shares more translation elements with $\mathcal{B}$, making the deformed CSM a shuffle.
To formalize this idea, we denote by $ S\mathcal{A}=\left\{S\mathbf{a}\,\middle|\,\mathbf{a}\in\mathcal{A}\right\} $ the crystal structure deformed from $\mathcal{A}$ via $S\in\operatorname{GL}(3,\mathbb{R})$.
It has a canonical atom-species mapping $\chi_AS^{-1}$, which is the only mapping that makes the following diagram commutes:
\begin{equation}\label{assac}
  \begin{tikzcd}
	\mathcal{A}\arrow[r, "S"]&S\mathcal{A}\\
	\mathbb{X}\arrow[from=u, "\chi_A"']\arrow[from=ur, "\chi_AS^{-1}"]&
  \end{tikzcd}
,\end{equation}
thus making $S\in\operatorname{CSM}(\mathcal{A},S\mathcal{A})$.
Now we draw the commutative diagram corresponding to Fig.~\ref{fig:slm}(a).
\begin{figure}[t!]
  \centering
  \includegraphics[width=\linewidth]{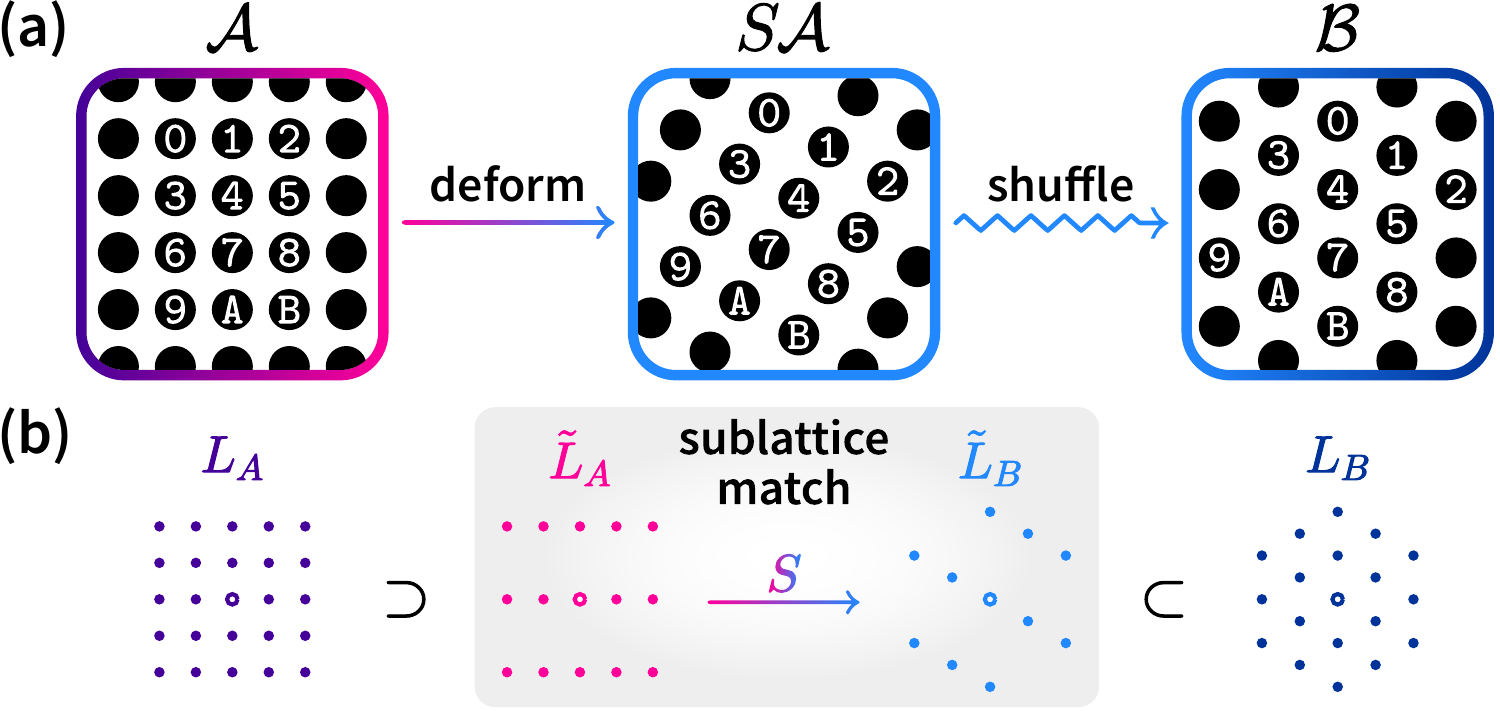}
  \caption{(a) An example of $\mathcal{J}\in\operatorname{CSM}(\mathcal{A},\mathcal{B})$ being decomposed into a deformation gradient $S\in\operatorname{CSM}(\mathcal{A},S\mathcal{A})$ and a shuffle $\mathcal{J}S^{-1}\in\operatorname{CSM}(S\mathcal{A},\mathcal{B})$. (b) In the above decomposition, a sublattice $\tilde L_A\subset L_A$ is deformed to the shuffle lattice $\tilde L_B\subset L_B$. We call the triplet $(\tilde L_A,\tilde L_B,S)$ the SLM of $\mathcal{J}$.}
  \label{fig:slm}
\end{figure}
\begin{dfn}[Deformation Gradient]\label{def:dg}
	We say that $S\in\operatorname{GL}(3,\mathbb{R})$ is a deformation gradient of $\mathcal{J}\in\operatorname{CSM}(\mathcal{A},\mathcal{B})$ if we have the commutative diagram:
	\begin{equation}\label{ajbss}
		\begin{tikzcd}[column sep=small]
			\mathcal{A}\arrow[rr, "{\mathcal{J}}"]\arrow[rd, "S"']&&\mathcal{B}\\
			&S\mathcal{A}\arrow[ur, rightsquigarrow]&
	  \end{tikzcd}
	,\end{equation}
	i.e., $\mathcal{J}S^{-1}\colon S\mathcal{A}\to\mathcal{B}$ is a shuffle.
\end{dfn}
\subsection{Classifying crystal-structure matches}
Next, we will show the uniqueness of the decomposition in Fig.~\ref{fig:slm}(a).
This makes the concept of SLM well-defined, which is crucial for classifying and representing CSMs, as shown in Fig.~\ref{fig:slm}(b).
\begin{thm}\label{thm:s}
	The deformation gradient of a CSM is either unique or nonexistent.
\end{thm}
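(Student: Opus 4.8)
The statement asserts that the deformation gradient is either unique or nonexistent, so the only real content is uniqueness; the nonexistent case needs no argument. Thus I would suppose that $S_1,S_2\in\operatorname{GL}(3,\mathbb{R})$ are both deformation gradients of a single $\mathcal{J}\in\operatorname{CSM}(\mathcal{A},\mathcal{B})$ and aim to conclude $S_1=S_2$. The first step is to rewrite the shuffle condition of Definition~\ref{def:shuffle} as a condition on $\mathcal{J}$ itself. Since the lattice of $S_i\mathcal{A}$ is $S_iL_A$, saying that $\mathcal{J}S_i^{-1}\colon S_i\mathcal{A}\to\mathcal{B}$ is a shuffle means that its translation elements, lying in $S_iL_A\cap L_B$, form a full-rank lattice $\tilde L_{B,i}$. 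Substituting $\mathbf{x}=S_i\mathbf{a}$ into the shuffle commutativity of Eq.~(\ref{ajbtt}) and setting $\mathbf{s}=S_i^{-1}\mathbf{t}$ converts each translation element $\mathbf{t}\in\tilde L_{B,i}$ into the identity $\mathcal{J}(\mathbf{a}+\mathbf{s})=\mathcal{J}(\mathbf{a})+S_i\mathbf{s}$, valid for all $\mathbf{a}\in\mathcal{A}$ and all $\mathbf{s}$ in the full-rank sublattice $\tilde L_{A,i}:=S_i^{-1}\tilde L_{B,i}\subset L_A$ (the image of a full-rank lattice under an invertible map).

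The second step is the comparison. For any $\mathbf{s}$ in the intersection $\tilde L_{A,1}\cap\tilde L_{A,2}$, both identities apply to the same value $\mathcal{J}(\mathbf{a}+\mathbf{s})$, so $\mathcal{J}(\mathbf{a})+S_1\mathbf{s}=\mathcal{J}(\mathbf{a})+S_2\mathbf{s}$ and hence $(S_1-S_2)\mathbf{s}=\mathbf{0}$. Therefore the linear map $S_1-S_2$ annihilates every vector of $\tilde L_{A,1}\cap\tilde L_{A,2}$; if this intersection spans $\mathbb{R}^3$, then $S_1-S_2=0$, which is exactly the desired conclusion.

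The one point that needs genuine justification---and which I expect to be the main obstacle---is that the intersection of two full-rank sublattices of $L_A$ is itself full rank. I would prove this by a finite-index argument: each $\tilde L_{A,i}$ has finite index $n_i:=[L_A:\tilde L_{A,i}]$ in $L_A$ because it is full rank, so $n_i\mathbf{v}\in\tilde L_{A,i}$ for every $\mathbf{v}\in L_A$ (the order of $\mathbf{v}+\tilde L_{A,i}$ divides $\lvert L_A/\tilde L_{A,i}\rvert$). Choosing a basis $\mathbf{e}_1,\mathbf{e}_2,\mathbf{e}_3$ of $L_A$, the three vectors $n_1n_2\mathbf{e}_k$ then lie in both $\tilde L_{A,1}$ and $\tilde L_{A,2}$ and remain linearly independent, so the intersection spans $\mathbb{R}^3$. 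If an analogous lattice fact is already available in Appendix~\ref{append:lemma}, I would simply invoke it instead. Combining this with the comparison step yields $S_1=S_2$, establishing that the deformation gradient is unique whenever it exists.
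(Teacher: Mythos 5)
Your proof is correct and follows essentially the same route as the paper's: both arguments compare the two shuffle conditions on the intersection of the two shuffle lattices, invoke the fact that the intersection of two full-rank sublattices is full rank (the paper's Lemma~\ref{lem:intersect}, which you reprove via a finite-index argument), and conclude $S_1=S_2$ from agreement on three linearly independent vectors. The only cosmetic difference is that you pull everything back to sublattices of $L_A$ and obtain $S_1\mathbf{s}=S_2\mathbf{s}$, whereas the paper works with sublattices of $L_B$ and obtains $S_1^{-1}\mathbf{t}=S_2^{-1}\mathbf{t}$; these are trivially equivalent.
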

\begin{proof}[\indent Proof]
	Assume that $\mathcal{J}\in\operatorname{CSM}(\mathcal{A},\mathcal{B})$ has two deformation gradients $S_1$ and $S_2$.
	By Definition~\ref{def:dg} we have the commutative diagram:
	\begin{equation}\label{as2s2}
		\begin{tikzcd}[column sep=large]
		  \mathcal{A}\arrow[r,"S_2"]\arrow[d,"S_1"']\arrow[rd,"\mathcal{J}"']&S_2\mathcal{A}\\
		  S_1\mathcal{A}&\mathcal{B}\arrow[from=u,rightsquigarrow,"\mathcal{J}S_2^{-1}"]\arrow[from=l,rightsquigarrow,"\mathcal{J}S_1^{-1}"']
	  \end{tikzcd}
	.\end{equation}
	We denote the shuffle lattices of $\mathcal{J}S_1^{-1}$ and $\mathcal{J}S_2^{-1}$ by $\tilde L_1$ and $\tilde L_2$, respectively.
	For each $\mathbf{t}\in\tilde L_1\cap\tilde L_2$, we have the commutative diagram:
	\begin{equation}\label{s1ajs}
		\begin{tikzcd}[column sep=large]
		  S_1\mathcal{A}\arrow[r, "\mathcal{J}S_1^{-1}",rightsquigarrow]&\mathcal{B}&S_2\mathcal{A}\arrow[l, "\mathcal{J}S_2^{-1}"',rightsquigarrow]\\
		  S_1\mathcal{A}\arrow[r, "\mathcal{J}S_1^{-1}"',rightsquigarrow]&\mathcal{B}&S_2\mathcal{A}\arrow[l, "\mathcal{J}S_2^{-1}",rightsquigarrow]
		  \arrow[from=1-1, to=2-1, "+\mathbf{t}"']
		  \arrow[from=1-2, to=2-2, "+\mathbf{t}"']
		  \arrow[from=1-3, to=2-3, "+\mathbf{t}"]
	  \end{tikzcd}
	.\end{equation}
	Combining Eqs.~(\ref{as2s2}) and (\ref{s1ajs}), we obtain:
	\begin{equation}\label{cube}
	\begin{tikzcd}[row sep=scriptsize, column sep=scriptsize]
& \mathcal{A} \arrow[rd, "\mathcal{J}"', near start]\arrow[dl, "S_1"', near start] \arrow[rr, "S_2", near start] & & S_2\mathcal{A} \arrow[dl, "\mathcal{J}S_2^{-1}", rightsquigarrow, near end] \arrow[dd, "+\mathbf{t}", near start] \\ S_1\mathcal{A} \arrow[rr, crossing over, "\mathcal{J}S_1^{-1}"', rightsquigarrow] \arrow[dd, "+\mathbf{t}"', near start] & & \mathcal{B} \\
& \mathcal{A} \arrow[rd, "\mathcal{J}"', near start]\arrow[dl, "S_1"', near start] \arrow[rr, "S_2", near start] & & S_2\mathcal{A} \arrow[dl, "\mathcal{J}S_2^{-1}", rightsquigarrow, near end] \\
		S_1\mathcal{A} \arrow[rr, crossing over, "\mathcal{J}S_1^{-1}"', rightsquigarrow] & & \mathcal{B} \arrow[from=uu, crossing over, "+\mathbf{t}", near start]
\end{tikzcd}
	,\end{equation}
	from which we can extract the subdiagram:
	\begin{equation}\label{as1s1}
		\begin{tikzcd}
			S_1\mathcal{A}&\mathcal{A}\arrow[l,"S_1"']\arrow[r,"S_2"]&S_2\mathcal{A}\\
			S_1\mathcal{A}&\mathcal{A}\arrow[l,"S_1"']\arrow[r,"S_2"]&S_2\mathcal{A}
		  \arrow[from=1-1, to=2-1, "+\mathbf{t}"']
		  \arrow[from=1-3, to=2-3, "+\mathbf{t}"]
		\end{tikzcd}
	.\end{equation}
	The commutativity of Eq.~(\ref{as1s1}) means that $S_1^{-1}\mathbf{t}=S_2^{-1}\mathbf{t}$.
	By Definitions~\ref{def:shuffle}, both $\tilde L_1$ and $\tilde L_2$ are sublattices of $L_B$ so that $\tilde L_1\cap \tilde L_2$ is full rank (Lemma~\ref{lem:intersect}).
	Hence, there exist three linearly independent vectors $\mathbf{t}_1,\mathbf{t}_2,\mathbf{t}_3$ such that
	\begin{equation}
		S_1^{-1}[\mathbf{t}_1,\mathbf{t}_2,\mathbf{t}_3]=S_2^{-1}[\mathbf{t}_1,\mathbf{t}_2,\mathbf{t}_3]
	.\end{equation}
	Since $[\mathbf{t}_1,\mathbf{t}_2,\mathbf{t}_3]$ is invertible, we have $S_1=S_2$.
\end{proof}
Theorem~\ref{thm:s} enables us to classify CSMs by their deformation and shuffle.
Now, we merge the deformation gradient and the shuffle lattice into a single concept, whose utility will become evident in Sections~\ref{sec:rep-prun} and \ref{sec:algs}.
\begin{table*}[t!]
	\caption{Some previously proposed CSMs (also known as mechanisms or paths).
	Congruent CSMs (e.g., the Bain mechanism and many shear mechanisms of the FCC-to-BCC transition~\cite{therrien2020minimization}) are only displayed once.
	See Ref.~\citenum{capillas2007maximal} for more details.}\label{tab:mech}
\begin{ruledtabular}
\begin{tabular}{clrrrlcc}
	SSPT prototype&
	CSM\footnote{The abbreviations stand for Wang-Ye-Zhu-Li (WYZL), Watanabe-Tokonami-Morimoto (WTM), Tol{\'e}dano-Knorr-Ehm-Depmeier (TKED), Therrien-Graf-Stevanovi{\'c} (TGS), Tolbert-Alivisatos-Sowa (TAS), and Zhu-Cohen-Strobel (ZCS).} &
$\mu$ & $\tilde Z$ &
RMSS\footnote{The CSMs in the table are defined independently of specific materials. However, to calculate the root-mean-square strain (RMSS) and root-mean-square displacement (RMSD), lattice parameters must be specified. Here, lattice parameters of Fe, CsCl, and ZnO are used.\label{tabnote1}} &
RMSD\textsuperscript{\ref{tabnote1}} &
Discovered via &
Experiment/simulation evidence\footnote{The orientation relationship (OR) of a CSM is defined in the rotation-free manner, which is also adopted by Refs.~\citenum{wang2024crystal,shimojo2004atomistic,stokes2004mechanisms,therrien2020minimization}.} \\
\hline
\multirow{4}{*}{\makecell{A1--A2\\(FCC-to-BCC)}} & Bain & 1 & 1 & 15.9\% & 0 & Inference~\cite{bain1924nature} &
MD~\cite{sandoval2009bain} \\
					     & Therrien-Stevanovi{\'c} & 6 & 6 & 9.0\% & 0.713\,\AA & \textsc{p2ptrans}~\cite{therrien2020minimization} & Pitsch OR~\cite{pitsch1959martensite} \\
					     & WYZL $\mathcal{J}_2$ & 6 & 6 & 9.0\% & 0.884\,\AA & \textsc{crystmatch}~\cite{wang2024crystal} & Nishiyama-Wassermann OR~\cite{nishiyama2012martensitic} \\
					     & WYZL $\mathcal{J}_1$ & 36 & 36 & 4.3\% & 0.961\,\AA & \textsc{crystmatch}~\cite{wang2024crystal} & Kurdjumov-Sachs OR~\cite{kurdjumow1930mechanismus} \\
\hline &\\[-2.5ex]
\multirow{4}{*}{\makecell{B1--B2\\(NaCl-to-CsCl)}} & Buerger & 1 & 2 & 27.5\% & 0 & Inference~\cite{buerger1951phase}
						   & MD~\cite{nga1992mechanism,zahn2004molecular}
 \\
						   & WTM & 2 & 4 & 17.4\% & 1.135\,\AA &
Inference~\cite{watanabe1977transition} & Watanabe-Blaschko OR~\cite{watanabe1977transition,blaschko1979investigations}
 \\
					& TKED & 4 & 8 & 17.4\% & 1.605\,\AA & Inference~\cite{toledano2003phenomenological}
					& MD~\cite{zahn2004nucleation}, MetaD~\cite{badin2021nucleating}
 \\
							& TGS B1--B2 & 6 & 12 & 10.9\% & 1.654\,\AA &
\textsc{p2ptrans}~\cite{therrien2020matching} & Very low strain \\
\hline &\\[-2.5ex]
\multirow{4}{*}{\makecell{B1--B4\\(NaCl-to-wurtzite)}} & TAS & 1 & 4 & 27.1\% & 0.500\,\AA & Inference~\cite{tolbert1995wurtzite,sowa2001transition}
					   &  MD~\cite{wilson2002transformations,zahn2005mechanism}, MetaD~\cite{santos2022size}
 \\
					   & ZCS II & 1 & 4 & 18.4\% & 0.822\,\AA & \textsc{pallas}~\cite{zhu2019phase}
					   & See Ref.~\citenum{zhu2019phase} \\
				    & Shimojo \textit{et al.} II
				    & 2 & 8 & 15.0\% & 0.865\,\AA & MD~\cite{shimojo2004atomistic}
				    & MD~\cite{shimojo2004atomistic,cai2007first}
 \\
				    & TGS B1--B4 & 3 & 12 & 8.5\% & 1.084\,\AA &
\textsc{p2ptrans}~\cite{therrien2020matching} & Very low strain \\[-2pt]
\end{tabular}
\end{ruledtabular}
\end{table*}
\begin{dfn}[SLM]
	Let $\mathcal{A}$, $\mathcal{B}$ be crystal structures, $\tilde L_A$, $\tilde L_B$ their respective sublattices, and $S\in\operatorname{GL}(3,\mathbb{R})$.
	We say that $(\tilde L_A,\tilde L_B,S)$ is a sublattice match (SLM) from $\mathcal{A}$ to $\mathcal{B}$ if
	\begin{align}\label{alabl}
		\lvert \mathcal{A}/\tilde L_A\rvert &=\lvert \mathcal{B}/\tilde L_B\rvert,\\
		SL_A&=L_B\label{slalb}
	.\end{align}
	We refer to the value of Eq.~(\ref{alabl}) as the period of the SLM, denoted by $\tilde Z$, and say that the SLM has multiplicity
	\begin{equation}\label{mzlcm}
		\mu=\frac{\tilde Z}{\operatorname{lcm}(Z_A,Z_B)}
	.\end{equation}
	Denote by $\operatorname{SLM}(\mathcal{A},\mathcal{B})$ the set of all SLMs from $\mathcal{A}$ to $\mathcal{B}$, and those with multiplicity $\mu$ by $\operatorname{SLM}(\mathcal{A},\mathcal{B};\mu)$.
\end{dfn}
\begin{thm}\label{thm:slm}
	If $\mathcal{J}\in\operatorname{CSM}(\mathcal{A},\mathcal{B})$ has a deformation gradient $S$, then $(\tilde L_A,\tilde L_B,S)\in\operatorname{SLM}(\mathcal{A},\mathcal{B})$, where $\tilde L_B$ is the shuffle lattice of $\mathcal{J}S^{-1}$ and $\tilde L_A=S^{-1}\tilde L_B$.
\end{thm}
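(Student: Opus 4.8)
The plan is to write $\phi := \mathcal{J}S^{-1}$ and verify, in order of increasing difficulty, the three requirements that make $(\tilde L_A,\tilde L_B,S)$ an SLM: that $\tilde L_A$ and $\tilde L_B$ are honest sublattices, that the periods match (Eq.~(\ref{alabl})), and---the hardest---that $SL_A=L_B$ (Eq.~(\ref{slalb})). By Definition~\ref{def:shuffle}, every translation element of the shuffle $\phi\colon S\mathcal{A}\to\mathcal{B}$ lies in $L_{SA}\cap L_B$, where $L_{SA}$ denotes the lattice of $S\mathcal{A}$; hence its shuffle lattice satisfies $\tilde L_B\subseteq L_B$, and since $\tilde L_B$ is by hypothesis a lattice (so full rank), it is a sublattice of $\mathcal{B}$.

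First I would record the elementary identity $L_{SA}=SL_A$: a vector $\mathbf{t}$ is a period of $S\mathcal{A}$ iff $S^{-1}\mathbf{t}$ is a period of $\mathcal{A}$, because $S$ intertwines the two translation actions. Then $\tilde L_B\subseteq L_{SA}=SL_A$ forces $\tilde L_A:=S^{-1}\tilde L_B\subseteq L_A$, and as $S^{-1}$ maps a lattice to a lattice, $\tilde L_A$ is a full-rank sublattice of $\mathcal{A}$. This legitimizes the triple as a candidate SLM.

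Next I would obtain the period equality by chaining two bijections of quotient sets. Since $S\tilde L_A=\tilde L_B$, the map $S\colon\mathcal{A}\to S\mathcal{A}$ carries $\tilde L_A$-classes bijectively onto $\tilde L_B$-classes, so $\lvert\mathcal{A}/\tilde L_A\rvert=\lvert S\mathcal{A}/\tilde L_B\rvert$. Because $\phi$ commutes with every translation in $\tilde L_B$ (Eq.~(\ref{ajbtt})) and is bijective, it descends to a bijection $S\mathcal{A}/\tilde L_B\to\mathcal{B}/\tilde L_B$, giving $\lvert S\mathcal{A}/\tilde L_B\rvert=\lvert\mathcal{B}/\tilde L_B\rvert$; composing yields Eq.~(\ref{alabl}).

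The main obstacle is Eq.~(\ref{slalb}). Since $L_{SA}=SL_A$, it is equivalent to $L_{SA}=L_B$, i.e., to the assertion that the shuffle $\phi$ alters nothing about the lattice. I expect this to be the subtle step: the bijection $S\mathcal{A}/\tilde L_B\cong\mathcal{B}/\tilde L_B$ of the previous paragraph, fed through the index-equals-covolume-ratio formula, only delivers equality of the number densities $Z_A/\!\det L_{SA}=Z_B/\!\det L_B$, which is strictly weaker than $L_{SA}=L_B$. To close the gap I would have to exploit the no-deformation character of a shuffle more fully---for instance by showing directly that each generator of $L_B$ is itself a period of $S\mathcal{A}$ and conversely---most cleanly packaged as a separate lemma stating that the domain and codomain of any shuffle share a common lattice. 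Granting such a lemma for $\phi\colon S\mathcal{A}\to\mathcal{B}$ gives $L_{SA}=L_B$, hence $SL_A=L_B$, and the three SLM conditions are complete.
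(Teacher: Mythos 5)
Your verification of the period equality Eq.~(\ref{alabl}) is correct and is in substance identical to the paper's: the paper checks directly that $\mathbf{a}+\tilde L_A\mapsto\mathcal{J}(\mathbf{a})+\tilde L_B$ is a well-defined bijection $\mathcal{A}/\tilde L_A\to\mathcal{B}/\tilde L_B$, invoking the shuffle property of $\mathcal{J}S^{-1}$ exactly where you do; your factorization through the intermediate quotient $S\mathcal{A}/\tilde L_B$ is a cosmetic difference. Your preliminary observations ($L_{S\mathcal{A}}=SL_A$, and $\tilde L_A$, $\tilde L_B$ being full-rank sublattices) are also fine.

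The gap is precisely where you place it, and it is not closed. You reduce Eq.~(\ref{slalb}) to a lemma asserting that the domain and codomain of any shuffle share a common lattice, and then ``grant'' that lemma rather than prove it. In fact it cannot be proved from Definition~\ref{def:shuffle} as written: the shuffle lattice is only required to be a full-rank lattice inside $L_{S\mathcal{A}}\cap L_B$, and combined with the quotient bijection this yields only $[L_{S\mathcal{A}}:\tilde L_B]\,Z_A=[L_B:\tilde L_B]\,Z_B$, i.e.\ equality of number densities --- exactly the weaker statement you flag. A concrete obstruction: let $\mathcal{A}=\mathbb{Z}^3$ with one atom per cell, let $\tilde L=\operatorname{diag}(2,1,1)(\mathbb{Z}^3)$, and let $\mathcal{B}=\tilde L\cup\big((1,0.3,0)+\tilde L\big)$ with a single species, so that $L_B=\tilde L$ and $Z_B=2$. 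The bijection sending $(2m,n,p)\mapsto(2m,n,p)$ and $(2m+1,n,p)\mapsto(2m+1,n+0.3,p)$ is a shuffle with shuffle lattice $\tilde L$ and deformation gradient $S=I$, yet $SL_A=\mathbb{Z}^3\neq L_B$. So the route you sketch fails unless the definition of shuffle (or of deformation gradient) is strengthened to force $L_{S\mathcal{A}}=L_B$. To be fair, the paper's own proof supplies nothing here either --- it opens with the bare assertion that Eq.~(\ref{slalb}) ``is already satisfied'' --- so your instinct that this is the one step genuinely requiring an argument is sound; the proposal is nonetheless incomplete because that argument is neither given nor obtainable from the stated definitions.
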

\begin{proof}[\indent Proof]
	Eq.~(\ref{slalb}) is already satisfied and we only need to show Eq.~(\ref{alabl}).
	Consider a mapping $p\colon \mathcal{A}/\tilde L_A\to\mathcal{B}/\tilde L_B$ such that the following diagram commutes:
	\begin{equation}\label{ajbla}
	\begin{tikzcd}[column sep=scriptsize]
		\mathcal{A}\arrow[d, "+\tilde L_A"']\arrow[r, "\mathcal{J}"]&\mathcal{B}\arrow[d, "+\tilde L_B"]\\
		\mathcal{A}/\tilde L_A\arrow[r, "p"']&\mathcal{B}/\tilde L_B
	\end{tikzcd}
	,\end{equation}
	i.e., $p\colon\mathbf{a} + \tilde{L}_A \mapsto \mathcal{J}(\mathbf{a}) + \tilde{L}_B$.
	To show that $p$ is well-defined, we must prove its singlevaluedness.
	For any $\mathbf{a}_1$ and $\mathbf{a}_2$ in $\mathcal{A}$, we have
	\begin{align}
		\phantom{\iff{}}&\quad\mathbf{a}_1+\tilde L_A=\mathbf{a}_2+\tilde L_A\label{a1laa}\\
	    \iff{}&\quad \exists \mathbf{t}\in\tilde L_A,\quad \mathbf{a}_1=\mathbf{a}_2+\mathbf{t}\\
	    \iff{}&\quad\exists \mathbf{t}\in\tilde L_A,\quad S\mathbf{a}_1=S\mathbf{a}_2+S\mathbf{t}\\
	    \iff{}&\quad \exists \mathbf{t}'\in\tilde L_B,\quad S\mathbf{a}_1=S\mathbf{a}_2+\mathbf{t}'\\
	    \iff{}&\quad \exists \mathbf{t}'\in\tilde L_B,\quad \mathcal{J}(\mathbf{a}_1)=\mathcal{J}(\mathbf{a}_2)+\mathbf{t}'\label{tlba1}\\
	    \iff{}&\quad \mathcal{J}(\mathbf{a}_1)+\tilde L_B=\mathcal{J}(\mathbf{a}_2)+\tilde L_B\label{ja1lb}
	,\end{align}
	where the shuffle property of $\mathcal{J}S^{-1}\colon S\mathcal{A}\rightsquigarrow\mathcal{B}$ is used to derive Eq.~(\ref{tlba1}).
	The equivalence of Eqs.~(\ref{a1laa}) and (\ref{ja1lb}) means that $p$ is injective, while $p$ is surjective since $\mathcal{J}$ is surjective.
	So far, we can see that $p$ is bijective and thus Eq.~(\ref{alabl}) holds.
\end{proof}
\begin{figure}[b!]
  \centering
  \includegraphics[width=\linewidth]{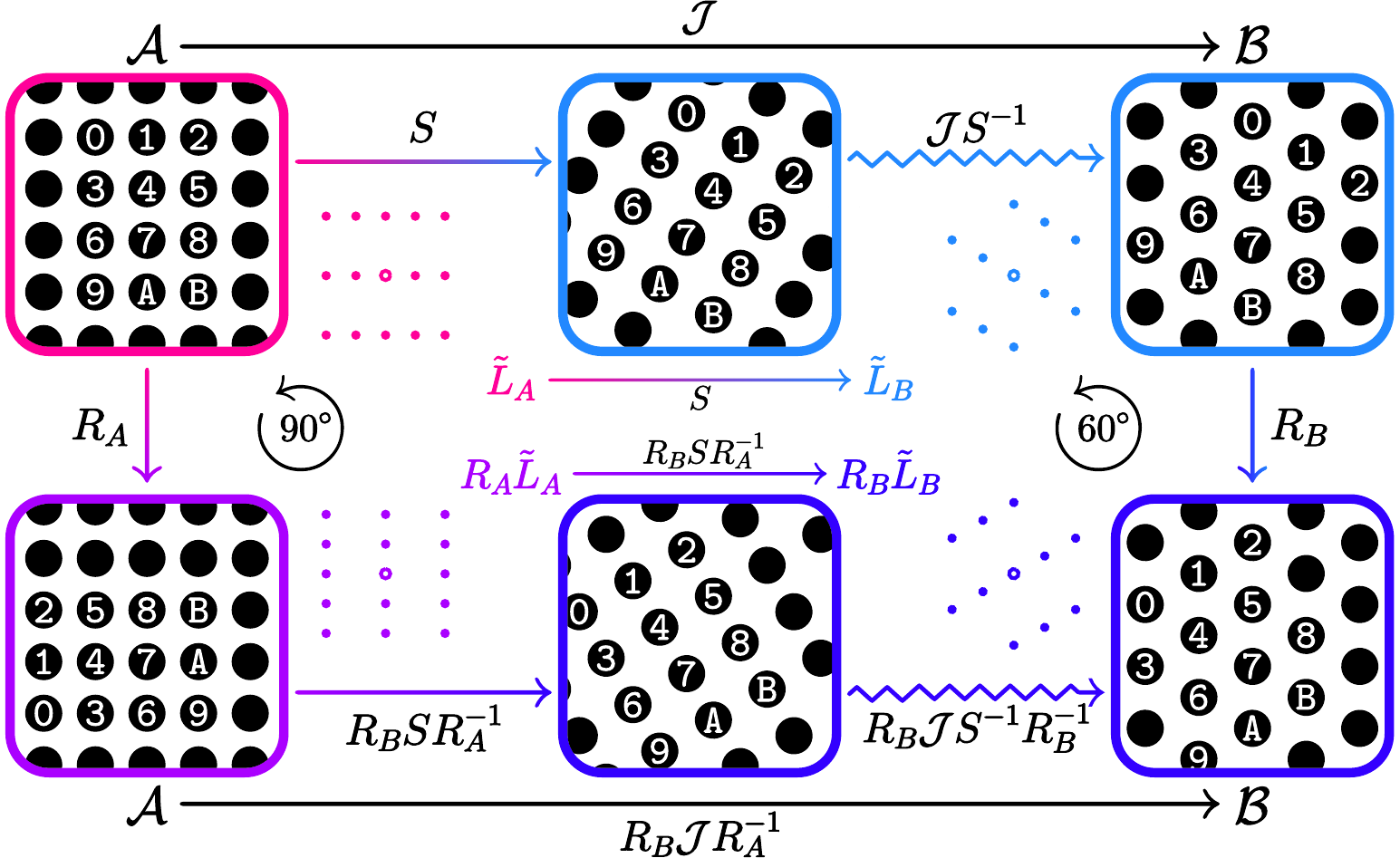}
  \caption{An example of $R_A$ (a rotation by $90^\circ$) and $R_B$ (a rotation by $60^\circ$) being symmetry operations of $\mathcal{A}$ (tetragonal) and $\mathcal{B}$ (hexagonal), respectively.
	  For each $\mathcal{J}\in\operatorname{CSM}(\mathcal{A},\mathcal{B})$, there exists a congruent $R_B\mathcal{J}R_A^{-1}\in\operatorname{CSM}(\mathcal{A},\mathcal{B})$.
	Their respective SLMs are $(\tilde L_A,\tilde L_B,S)$ and $(R_A\tilde L_A,R_B\tilde L_B,R_BSR_A^{-1})$.}
	\label{fig:cong}
\end{figure}
\begin{dfn}[Multiplicity]\label{def:mu}
	If $\mathcal{J}\in\operatorname{CSM}(\mathcal{A},\mathcal{B})$ has a deformation gradient, we say that the $(\tilde L_A,\tilde L_B,S)$ defined in Theorem~\ref{thm:slm} is the SLM of $\mathcal{J}$.
	In such case, we refer to the period and multiplicity of the SLM as those of $\mathcal{J}$.
	If the deformation gradient of $\mathcal{J}$ does not exist, we say that the period and multiplicity of $\mathcal{J}$ are $\infty$.
\end{dfn}
The multiplicity $\mu$ defined in Eq.~(\ref{mzlcm}) is always a positive integer, since the period $\tilde Z$ is a common multiple of $Z_A$ and $Z_B$ (Lemma~\ref{lem:quotient}).
The value of $\mu$ and $\tilde Z$ reflect the intricacy of the CSM---the size of the \textit{smallest} supercell required to describe it.
It must be emphasized that $\mu$ and $\tilde Z$ are properties of the CSM, which are conceptually distinct from the number of atoms used in SSPT simulations.
For example, MetaD calculations using up to ${\sim}10^5$ atoms often yield CSMs with $\mu$ and $\tilde Z$ less than ${\sim}10^1$; see Table~\ref{tab:mech}.
Practically, we focus on solely those CSMs with finite multiplicity.
\subsection{Congruence relations}
Applying rigid transformations to $\mathcal{A}$ and $\mathcal{B}$ yields ``congruent'' yet set-theoretically different CSMs, as illustrated in Fig.~\ref{fig:cong}.
Such CSMs are equivalent as inputs to NEB-like methods, so we only need to enumerate all ``noncongruent'' CSMs.
We denote by $ S\mathcal{A}+\mathbf{t}=\{S\mathbf{a}+\mathbf{t}\,|\,\mathbf{a}\in\mathcal{A}\} $ the crystal structure transformed from $\mathcal{A}$ via an affine transformation $(S,\mathbf{t})$.
In the same sense as in Eq.~(\ref{assac}), we have $(S,\mathbf{t})\in\operatorname{CSM}(\mathcal{A},S\mathcal{A}+\mathbf{t})$.
\begin{dfn}[Congruent CSM]\label{def:cong}
	We say that a CSM $\mathcal{J}'$ is congruent to $\mathcal{J}\in\operatorname{CSM}(\mathcal{A},\mathcal{B})$ if there exist proper rigid transformations $(R_A,\mathbf{t}_A)$ and $(R_B,\mathbf{t}_B)$ such that:
	\begin{equation}\label{ajbra}
		\begin{tikzcd}[column sep=large]
		\mathcal{A}\arrow[r, "\mathcal{J}"]&\mathcal{B}\\
		R_A\mathcal{A}+\mathbf{t}_A\arrow[from=u, "{(R_A,\mathbf{t}_A)}"']\arrow[r, "\mathcal{J}'"']&R_B\mathcal{B}+\mathbf{t}_B\arrow[from=u, "{(R_B,\mathbf{t}_B)}"]
	\end{tikzcd}
	,\end{equation}
	i.e., $\mathcal{J}'=(R_B,\mathbf{t}_B)\mathcal{J}(R_A,\mathbf{t}_A)^{-1}$.
\end{dfn}
In general, two congruent CSMs have different domains and images, i.e., we have $\mathcal{J}' \notin \operatorname{CSM}(\mathcal{A}, \mathcal{B})$ in Eq.~(\ref{ajbra}), unless $(R_A, \mathbf{t}_A)$ maps $\mathcal{A}$ onto itself and $(R_B, \mathbf{t}_B)$ does the same.
In this section, we specifically focus on the case where the proper rigid transformations are symmetry operations of the crystal structures, which define a congruence relation on \(\operatorname{CSM}(\mathcal{A}, \mathcal{B})\).
We denote by $G_A$ the group formed by all proper rigid transformations $(R,\mathbf{t})$ such that
\begin{equation}\label{arta}
  \begin{tikzcd}
	  \mathcal{A}\arrow[r, "{(R,\mathbf{t})}"]&\mathcal{A}
  \end{tikzcd}
,\end{equation}
and define the point group
\begin{equation}\label{garrt}
	G_A' =\left\{R\,\middle|\,(R,\mathbf{t})\in G_A\right\}
.\end{equation}
It should be noted that $G_A'$ is \textit{not} a subgroup of $G_A$ unless the latter is symmorphic~\cite{dresselhaus2007group}, i.e., $G_A=G_A'\ltimes L_A$.
We will see that $G_A'$ and $G_B' $ induce an equivalence relation on $\operatorname{SLM}(\mathcal{A},\mathcal{B})$, which is inherited from Definition~\ref{def:cong}.
\begin{lem}\label{lem:inducedshuffle}
	If $\mathcal{J}\in\operatorname{CSM}(\mathcal{A},\mathcal{B})$ is a shuffle, then for any $\mathbf{t}_A,\mathbf{t}_B\in\mathbb{R}^3$, the composite mapping $(+\mathbf{t}_B)\circ\mathcal{J}\circ(-\mathbf{t}_A)$ is also a shuffle from $\mathcal{A}+\mathbf{t}_A$ to $\mathcal{B}+\mathbf{t}_B$, whose shuffle lattice is the same as $\mathcal{J}$.
\end{lem}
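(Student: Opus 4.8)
The plan is to show that $\mathcal{J}':=(+\mathbf{t}_B)\circ\mathcal{J}\circ(-\mathbf{t}_A)$ has \emph{exactly} the same translation elements as $\mathcal{J}$, which immediately yields both that $\mathcal{J}'$ is a shuffle and that its shuffle lattice equals that of $\mathcal{J}$. First I would record two preliminary facts. The composite
\begin{equation}
\begin{tikzcd}
\mathcal{A}+\mathbf{t}_A\arrow[r,"-\mathbf{t}_A"]&\mathcal{A}\arrow[r,"\mathcal{J}"]&\mathcal{B}\arrow[r,"+\mathbf{t}_B"]&\mathcal{B}+\mathbf{t}_B
\end{tikzcd}
\end{equation}
is a chain of atom-to-atom correspondences, so $\mathcal{J}'\in\operatorname{CSM}(\mathcal{A}+\mathbf{t}_A,\mathcal{B}+\mathbf{t}_B)$. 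Moreover, translating a crystal structure does not change its lattice: a vector $\mathbf{t}$ leaves $\mathcal{A}+\mathbf{t}_A$ invariant iff it leaves $\mathcal{A}$ invariant, so $L_{\mathcal{A}+\mathbf{t}_A}=L_A$ and likewise $L_{\mathcal{B}+\mathbf{t}_B}=L_B$. Hence the candidate translation elements of $\mathcal{J}'$ range over the \emph{same} set $L_A\cap L_B$ as those of $\mathcal{J}$, and only the equivariance condition remains to be compared.

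The core step is a direct computation. Writing a generic point of $\mathcal{A}+\mathbf{t}_A$ as $\mathbf{a}+\mathbf{t}_A$ with $\mathbf{a}\in\mathcal{A}$, the definition of $\mathcal{J}'$ gives $\mathcal{J}'(\mathbf{a}+\mathbf{t}_A)=\mathcal{J}(\mathbf{a})+\mathbf{t}_B$. For $\mathbf{t}\in L_A\cap L_B$ I would then evaluate both sides of the shuffle condition for $\mathcal{J}'$ at $\mathbf{a}+\mathbf{t}_A$, obtaining
\begin{equation}
\mathcal{J}'\big((\mathbf{a}+\mathbf{t}_A)+\mathbf{t}\big)=\mathcal{J}(\mathbf{a}+\mathbf{t})+\mathbf{t}_B,\qquad \mathcal{J}'(\mathbf{a}+\mathbf{t}_A)+\mathbf{t}=\mathcal{J}(\mathbf{a})+\mathbf{t}_B+\mathbf{t},
\end{equation}
where I have used that $+\mathbf{t}_A$ and $+\mathbf{t}$ commute to rewrite $(\mathbf{a}+\mathbf{t}_A)+\mathbf{t}=(\mathbf{a}+\mathbf{t})+\mathbf{t}_A$. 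Cancelling the common $+\mathbf{t}_B$ shows that the two right-hand sides agree for every $\mathbf{a}$ iff $\mathcal{J}(\mathbf{a}+\mathbf{t})=\mathcal{J}(\mathbf{a})+\mathbf{t}$ for every $\mathbf{a}$, i.e., iff $\mathbf{t}$ is a translation element of $\mathcal{J}$.

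Combining the two facts, the set of translation elements of $\mathcal{J}'$ coincides with that of $\mathcal{J}$. Since $\mathcal{J}$ is a shuffle, this set is the shuffle lattice $\tilde L$ of $\mathcal{J}$, which is in particular a lattice; hence $\mathcal{J}'$ is a shuffle whose shuffle lattice is $\tilde L$, as claimed. I do not expect any serious obstacle, as the argument is bookkeeping resting on the commutativity of translations. The one point that deserves care is that the reduction is an \emph{equality} of translation-element sets rather than a one-sided inclusion: this hinges on first confirming that the ambient lattices are unchanged by the translations $\pm\mathbf{t}_A,+\mathbf{t}_B$, so that $\mathcal{J}$ and $\mathcal{J}'$ are tested against the identical pool of candidate vectors $L_A\cap L_B$.
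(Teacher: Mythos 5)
Your proposal is correct and follows essentially the same route as the paper: both arguments show that $\mathcal{J}$ and $(+\mathbf{t}_B)\circ\mathcal{J}\circ(-\mathbf{t}_A)$ have exactly the same set of translation elements, the paper via a reversible commutative-diagram manipulation and you via the equivalent pointwise computation. Your explicit remark that $L_{\mathcal{A}+\mathbf{t}_A}=L_A$ and $L_{\mathcal{B}+\mathbf{t}_B}=L_B$, so that both CSMs are tested against the same candidate pool $L_A\cap L_B$, is a small point the paper leaves implicit but does not change the substance of the argument.
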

\begin{thm}\label{thm:sym}
	For any $(R_A,\mathbf{t}_A)\in G_A$, $(R_B,\mathbf{t}_B)\in G_B$, and $\mathcal{J}\in\operatorname{CSM}(\mathcal{A},\mathcal{B})$, the CSM $\mathcal{J}'$ defined by Eq.~(\ref{ajbra}) is also in $\operatorname{CSM}(\mathcal{A},\mathcal{B})$.
	If $\mathcal{J}$ has an SLM $(\tilde L_A,\tilde L_B,S)$, then $\mathcal{J}'$ has the SLM $(R_A\tilde L_A,R_B\tilde L_B,R_BSR_A^{-1})$.
\end{thm}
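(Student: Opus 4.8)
The plan is to settle the membership claim directly, then reduce the SLM claim to a single shuffle computation so that Theorems~\ref{thm:s} and \ref{thm:slm} absorb all the lattice bookkeeping. For $\mathcal{J}'\in\operatorname{CSM}(\mathcal{A},\mathcal{B})$: because $(R_A,\mathbf{t}_A)\in G_A$ and $(R_B,\mathbf{t}_B)\in G_B$ satisfy Eq.~(\ref{arta}), they are species-preserving automorphisms of $\mathcal{A}$ and $\mathcal{B}$, so the lower corners of Eq.~(\ref{ajbra}) collapse, $R_A\mathcal{A}+\mathbf{t}_A=\mathcal{A}$ and $R_B\mathcal{B}+\mathbf{t}_B=\mathcal{B}$. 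Hence $\mathcal{J}'=(R_B,\mathbf{t}_B)\mathcal{J}(R_A,\mathbf{t}_A)^{-1}$ is a composite of three species-preserving bijections $\mathcal{A}\to\mathcal{A}\to\mathcal{B}\to\mathcal{B}$, and is therefore a CSM from $\mathcal{A}$ to $\mathcal{B}$.

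For the SLM claim I would first guess the deformation gradient to be $S'=R_BSR_A^{-1}$ and verify only that $\sigma':=\mathcal{J}'(S')^{-1}$ is a shuffle $S'\mathcal{A}\rightsquigarrow\mathcal{B}$ with shuffle lattice $R_B\tilde L_B$. Granting this, Theorem~\ref{thm:s} certifies $S'$ as \emph{the} deformation gradient, and Theorem~\ref{thm:slm} returns the SLM $(\,(S')^{-1}R_B\tilde L_B,\,R_B\tilde L_B,\,S')$, whose first slot collapses to $R_AS^{-1}R_B^{-1}R_B\tilde L_B=R_AS^{-1}\tilde L_B=R_A\tilde L_A$ via $\tilde L_A=S^{-1}\tilde L_B$. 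This is exactly the claimed triple, so Eqs.~(\ref{alabl})--(\ref{slalb}) need no separate verification; for the record, $S'L_A=R_BSR_A^{-1}L_A=R_BSL_A=R_BL_B=L_B$, using that $R_A^{-1}L_A=L_A$ and $R_BL_B=L_B$ since $R_A\in G_A'$, $R_B\in G_B'$ permute the respective translation lattices.

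The core is the shuffle computation. With $\sigma:=\mathcal{J}S^{-1}\colon S\mathcal{A}\rightsquigarrow\mathcal{B}$ the known shuffle, unwinding the affine algebra---using $(R_A,\mathbf{t}_A)^{-1}R_A=(+(-R_A^{-1}\mathbf{t}_A))$ to expose a bare translation and $S(+\mathbf{u})S^{-1}=(+S\mathbf{u})$ to push it through $S^{-1}$---gives $\sigma'=(R_B,\mathbf{t}_B)\circ\sigma\circ\rho$, where $\rho=(R_B^{-1},-SR_A^{-1}\mathbf{t}_A)$ is a rigid isomorphism $S'\mathcal{A}\to S\mathcal{A}$ (its codomain being $S\mathcal{A}$ uses $R_A^{-1}\mathcal{A}=\mathcal{A}+R_A^{-1}\mathbf{t}_A$). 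Fixing $\mathbf{t}\in\tilde L_B$ and setting $\mathbf{t}'=R_B\mathbf{t}$, the two outer rigid maps intertwine the translations, $\rho\circ(+\mathbf{t}')=(+\mathbf{t})\circ\rho$ and $(R_B,\mathbf{t}_B)\circ(+\mathbf{t})=(+\mathbf{t}')\circ(R_B,\mathbf{t}_B)$; stacking these two squares around the shuffle square of $\sigma$ for the period $\mathbf{t}$ yields $\sigma'\circ(+\mathbf{t}')=(+\mathbf{t}')\circ\sigma'$. Thus every $\mathbf{t}'\in R_B\tilde L_B$ is a translation element of $\sigma'$, and $\mathbf{t}'\in L_{S'\mathcal{A}}\cap L_B=L_B$ because $L_{S'\mathcal{A}}=S'L_A=L_B$.

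The main obstacle is proving the shuffle lattice of $\sigma'$ is \emph{exactly} $R_B\tilde L_B$ rather than something larger, which is what genuinely makes $S'$ a deformation gradient and pins down the SLM. I would close this by the symmetry of the construction: $(R_A,\mathbf{t}_A)^{-1}\in G_A$ and $(R_B,\mathbf{t}_B)^{-1}\in G_B$ recover $\mathcal{J}$ from $\mathcal{J}'$, so the same intertwining argument run backwards sends any translation element $\mathbf{t}'$ of $\sigma'$ to a translation element $R_B^{-1}\mathbf{t}'$ of $\sigma$, which lies in $\tilde L_B$ because that is by definition the \emph{full} shuffle lattice of $\sigma$. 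The resulting two-sided inclusion exhibits $\mathbf{t}\mapsto R_B\mathbf{t}$ as a group isomorphism from $\tilde L_B$ onto the translation elements of $\sigma'$, so the latter form the lattice $R_B\tilde L_B$ and $\sigma'$ is indeed a shuffle. Lemma~\ref{lem:inducedshuffle} can absorb the pure-translation part of this bookkeeping, but since $R_A\neq R_B$ in general it is the rotational conjugation that forces the explicit intertwining rather than a one-line appeal to that lemma.
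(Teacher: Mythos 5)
Your proof is correct and follows essentially the same route as the paper's: both express $\mathcal{J}'(R_BSR_A^{-1})^{-1}$ as the known shuffle $\mathcal{J}S^{-1}$ sandwiched between rigid maps and show that conjugation carries the translation elements bijectively onto $R_B\tilde L_B$. The only cosmetic differences are that you obtain the exactness of the shuffle lattice by running the construction backwards with the inverse symmetry operations (where the paper uses an if-and-only-if chain of commuting squares), and you fold the residual translation into the intertwining map $\rho$ rather than discharging it at the end via Lemma~\ref{lem:inducedshuffle}.
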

\begin{proof}[\indent Proof]
	When $(R_A,\mathbf{t}_A)\in G_A$ and $(R_B,\mathbf{t}_B)\in G_B$, the bottom line of Eq.~\ref{ajbra} becomes $\!\!\begin{tikzcd}\mathcal{A}\arrow[r, "\mathcal{J}'"]&\mathcal{B}\end{tikzcd}\!\!$.
	To show that $\mathcal{J}'$ has the SLM $(R_A\tilde L_A,R_B\tilde L_B,R_BSR_A^{-1})$, we only need to show that the deformed CSM
	\begin{align}
		&[(R_B,\mathbf{t}_B)\mathcal{J}(R_A,\mathbf{t}_A)^{-1}](R_BSR_A^{-1})^{-1}
		\\={}&(R_B,\mathbf{t}_B)\mathcal{J}R_A^{-1}\circ(-\mathbf{t}_A)\circ R_AS^{-1}R_B^{-1}
		\\={}&(R_B,\mathbf{t}_B)\mathcal{J}S^{-1}R_B^{-1}\circ(-R_BSR_A^{-1}\mathbf{t}_A)\label{rbtbj}
	\end{align}
	is a shuffle with shuffle lattice $R_B\tilde L_B$.
	By Definition~\ref{def:shuffle}, a vector $\mathbf{t}\in\mathbb{R}^3$ is in $\tilde L_B$ if and only if we have:
	\begin{equation}
		\begin{tikzcd}[column sep=large]
		  S\mathcal{A}\arrow[r, rightsquigarrow, "\mathcal{J}S^{-1}"]&\mathcal{B}\\
		  S\mathcal{A}\arrow[r, rightsquigarrow, "\mathcal{J}S^{-1}"']&\mathcal{B}
		  \arrow[from=1-1, to=2-1, "+\mathbf{t}"']
		  \arrow[from=1-2, to=2-2, "+\mathbf{t}"]
	  \end{tikzcd}
	.\end{equation}
	Applying $(R_B,\mathbf{t}_B)$ to each crystal structure, we obtain:
	\begin{equation}\label{zoom}
	\begin{tikzcd}[row sep=small, column sep=1.5em]
		R_BS\mathcal{A}+\mathbf{t}_B\arrow[from=rd, "{(R_B,\mathbf{t}_B)}", near start]&&&&R_B\mathcal{B}+\mathbf{t}_B\arrow[from=ld, "{(R_B,\mathbf{t}_B)}"', near start]\\
					      &S\mathcal{A}&&\mathcal{B}&\\&&&&\\
					      &S\mathcal{A}&&\mathcal{B}&\\
		R_BS\mathcal{A}+\mathbf{t}_B\arrow[from=ru, "{(R_B,\mathbf{t}_B)}"', near start]&&&&R_B\mathcal{B}+\mathbf{t}_B\arrow[from=lu, "{(R_B,\mathbf{t}_B)}", near start]
		\arrow[from=2-2, to=2-4, rightsquigarrow, "\mathcal{J}S^{-1}"]
		\arrow[from=4-2, to=4-4, rightsquigarrow, "\mathcal{J}S^{-1}"']
		\arrow[from=2-2, to=4-2, "+\mathbf{t}"']
		\arrow[from=2-4, to=4-4, "+\mathbf{t}"]
		\arrow[from=1-1, to=1-5, "{(R_B,\mathbf{t}_B)\mathcal{J}S^{-1}(R_B,\mathbf{t}_B)^{-1}}"]
		\arrow[from=5-1, to=5-5, "{(R_B,\mathbf{t}_B)\mathcal{J}S^{-1}(R_B,\mathbf{t}_B)^{-1}}"']
		\arrow[from=1-1, to=5-1, "+R_B\mathbf{t}"']
		\arrow[from=1-5, to=5-5, "+R_B\mathbf{t}"]
	\end{tikzcd}
	.\end{equation}
	Taking the outer loop of the above diagram and using $(R_B,\mathbf{t}_B)\in G_B$, we can see that $\mathbf{t}\in \tilde L_B$ if and only if:
	\begin{equation}
		\begin{tikzcd}[column sep=9.6em]
		  R_BS\mathcal{A}+\mathbf{t}_B\arrow[r, "{(R_B,\mathbf{t}_B)\mathcal{J}S^{-1}(R_B,\mathbf{t}_B)^{-1}}"]&\mathcal{B}\\
		  R_BS\mathcal{A}+\mathbf{t}_B\arrow[r, "{(R_B,\mathbf{t}_B)\mathcal{J}S^{-1}(R_B,\mathbf{t}_B)^{-1}}"']&\mathcal{B}
		  \arrow[from=1-1, to=2-1, "+R_B\mathbf{t}"']
		  \arrow[from=1-2, to=2-2, "+R_B\mathbf{t}"]
	  \end{tikzcd}
	,\end{equation}
	which implies that $(R_B,\mathbf{t}_B)\mathcal{J}S^{-1}(R_B,\mathbf{t}_B)^{-1}$ is a shuffle with shuffle lattice $R_B\tilde L_B$.
	Eq.~(\ref{rbtbj}) differs from this shuffle by only a translation, and is therefore also a shuffle with shuffle lattice $R_B\tilde L_B$ according to Lemma~\ref{lem:inducedshuffle}.
\end{proof}
\begin{dfn}[Congruent SLM]\label{def:slmcong}
	We say that an SLM is congruent to $(\tilde L_A,\tilde L_B,S)\in\operatorname{SLM}(\mathcal{A},\mathcal{B})$ if it can be written as $(R_A\tilde L_A,R_B\tilde L_B,R_BSR_A^{-1})$ for some $R_A\in G_A', R_B\in G_B' $.
\end{dfn}
Before describing how to exhaustively enumerate CSMs, we need to clarify what kind of CSM set is \textit{complete}.
We expect that they can, via NEB-like methods which allow rigid transformations, yield all SSPT mechanisms we are concerned with.
Hence, there is no need to enumerate two or more congruent CSMs, and we say that a subset of $X$ comprising CSMs or SLMs is complete as long as it contains an element of each congruence class in $X$.
\begin{thm}\label{thm:complete}
	If $X$ is a complete subset of $\operatorname{SLM}(\mathcal{A},\mathcal{B})$, then $\bigcup_{x\in X}Y_x$ is a complete subset of $\operatorname{CSM}(\mathcal{A},\mathcal{B})$ as long as $Y_x$ is a complete subset of those CSMs with SLM $x$.
\end{thm}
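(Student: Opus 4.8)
The plan is to show that every congruence class in $\operatorname{CSM}(\mathcal{A},\mathcal{B})$ meets the set $\bigcup_{x\in X}Y_x$, where throughout I restrict attention to CSMs that admit an SLM (equivalently, have finite multiplicity), since these are the classes organized by the $Y_x$ and the only ones of practical interest. First I would record that congruence of CSMs is genuinely an equivalence relation: by Theorem~\ref{thm:sym} the prescription $\mathcal{J}\mapsto(R_B,\mathbf{t}_B)\mathcal{J}(R_A,\mathbf{t}_A)^{-1}$ defines an action of the group $G_A\times G_B$ on $\operatorname{CSM}(\mathcal{A},\mathcal{B})$ whose orbits are exactly the congruence classes of Definition~\ref{def:cong}; symmetry and transitivity are then automatic. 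Hence it suffices to produce, for an arbitrary $\mathcal{J}\in\operatorname{CSM}(\mathcal{A},\mathcal{B})$ with SLM $y=(\tilde L_A,\tilde L_B,S)$, a congruent CSM lying in $\bigcup_{x\in X}Y_x$.

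The key step is to lift SLM-congruence to CSM-congruence. Since $X$ is complete in $\operatorname{SLM}(\mathcal{A},\mathcal{B})$, there is some $x\in X$ congruent to $y$, say $x=(R_A\tilde L_A,R_B\tilde L_B,R_BSR_A^{-1})$ with $R_A\in G_A'$ and $R_B\in G_B'$. By the definition of the point groups in Eq.~(\ref{garrt}), each such $R_A$ is the linear part of some genuine symmetry operation $(R_A,\mathbf{t}_A)\in G_A$, and likewise $(R_B,\mathbf{t}_B)\in G_B$. I would then set $\mathcal{J}''=(R_B,\mathbf{t}_B)\mathcal{J}(R_A,\mathbf{t}_A)^{-1}$; Theorem~\ref{thm:sym} guarantees that $\mathcal{J}''\in\operatorname{CSM}(\mathcal{A},\mathcal{B})$ and that its SLM is precisely $(R_A\tilde L_A,R_B\tilde L_B,R_BSR_A^{-1})=x$. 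Thus $\mathcal{J}''$ is congruent to $\mathcal{J}$ and belongs to the set of CSMs with SLM $x$.

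Finally, I would close the argument with one more application of completeness. Because $Y_x$ is complete among the CSMs with SLM $x$, the congruence class of $\mathcal{J}''$ within that set has a representative $\mathcal{J}'\in Y_x$; in particular $\mathcal{J}'$ is congruent to $\mathcal{J}''$ through a pair of symmetry operations, hence congruent to $\mathcal{J}$ in the full set by transitivity. Since $\mathcal{J}'\in Y_x\subseteq\bigcup_{x\in X}Y_x$, this exhibits the required representative, and the completeness of the union follows.

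I expect the only real obstacle to be the bookkeeping at the bridge between the two congruence relations: one must check that the point-group element $R_A$ appearing in Definition~\ref{def:slmcong} can always be realized as the linear part of an element of $G_A$ (true by the very construction of $G_A'$, even when $G_A$ is nonsymmorphic and $G_A'$ is not a subgroup of $G_A$), and that Theorem~\ref{thm:sym} then transports the SLM of $\mathcal{J}$ exactly onto the chosen representative $x$ rather than merely onto something congruent to it. Everything else is a formal consequence of the group action and the transitivity of congruence.
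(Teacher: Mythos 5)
Your proposal is correct and follows essentially the same route as the paper's proof: use completeness of $X$ to find $x=(R_A\tilde L_A,R_B\tilde L_B,R_BSR_A^{-1})\in X$ congruent to the SLM of $\mathcal{J}$, lift $R_A,R_B$ to space-group elements $(R_A,\mathbf{t}_A)\in G_A$, $(R_B,\mathbf{t}_B)\in G_B$, invoke Theorem~\ref{thm:sym} to obtain a congruent CSM whose SLM is exactly $x$, and finish with completeness of $Y_x$ and transitivity. Your added remarks on the group action and the nonsymmorphic case are harmless elaborations of what the paper leaves implicit.
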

\begin{proof}[\indent Proof]
	As long as $\mathcal{J}\in\operatorname{CSM}(\mathcal{A},\mathcal{B})$ has an SLM, denoted by $(\tilde L_A,\tilde L_B,S)$, the completeness of $X$ ensures that there exist $R_A\in G_A' $ and $R_B\in G_B' $ such that
	\begin{equation}\label{ralar}
	  (R_A\tilde L_A,R_B\tilde L_B,R_BSR_A^{-1})\in X
	.\end{equation}
	Denote by $x$ the left-hand side of Eq.~(\ref{ralar}).
	Take $\mathbf{t}_A,\mathbf{t}_B$ such that $(R_A,\mathbf{t}_A)\in G_A$ and $(R_B,\mathbf{t}_B)\in G_B$.
	Since $\mathcal{J}$ is congruent to $(R_B,\mathbf{t}_B)\mathcal{J}(R_A,\mathbf{t}_A)^{-1}$ whose SLM is $x$ (Theorem~\ref{thm:sym}), the transitivity of congruence means that $\mathcal{J}$ is also congruent to some CSM in $Y_x$.
\end{proof}
Note that $\bigcup_{x\in X}Y_x$ is also a complete subset of
\begin{equation}\label{csmar}
	\bigcup_{\substack{R_1\in\operatorname{SO}(3),\boldsymbol{\tau}_1\in\mathbb{R}^3\\R_2\in\operatorname{SO}(3),\boldsymbol{\tau}_2\in\mathbb{R}^3}}\operatorname{CSM}(R_1\mathcal{A}+\boldsymbol{\tau}_1,R_2\mathcal{B}+\boldsymbol{\tau}_2)
,\end{equation}
which incorporates all possible initial and final structures of an SSPT.
Hence, the CSMs in $\bigcup_{x\in X}Y_x$ are sufficient as inputs to NEB-like methods.
We will demonstrate how to compute $X$ and $Y_x$ in Section~\ref{sec:algs}.
\section{Representation and pruning}\label{sec:rep-prun}
Before we describe the \textsc{crystmatch} method for exhaustively enumerating all noncongruent CSMs, it is necessary to represent CSMs as \textit{matrices} for computer processing.
We also need to introduce some physical constraints to make $\operatorname{CSM}(\mathcal{A},\mathcal{B})$ finite.
In this section, we will show how to represent SLM and shuffle separately, and discuss the general form of pruning criteria.
Afterwards, the finiteness of candidate CSMs will be proved.
Symbols frequently used in this section are summarized in Tables~\ref{tab:symbol-rep-prune}.
\subsection{Representing crystal-structure matches}\label{ssec:rep}
As elaborated in Section~\ref{sec:formal}, a CSM with finite multiplicity can always be decomposed into an SLM and a shuffle, both of which are cell-independent.
However, just as an element of a vector space acquires a coordinate representation only after a basis is chosen, to represent CSMs as matrices, we also need to select a pair of ``primitive cells'' for $\mathcal{A}$ and $\mathcal{B}$.
\subsubsection{Representing sublattice matches}\label{sssec:slm}
\begin{dfn}[Base Matrix]\label{def:bm}
	Let $C\in \mathbb{R}^{3\times 3}$ be a nonsingular matrix.
	We say that $ C(\mathbb{Z}^3)=\left\{C\mathbf{k}\,\middle|\,\mathbf{k}\in\mathbb{Z}^3\right\} $ is the lattice generated by $C$, and $C$ is the base matrix of $C(\mathbb{Z}^3)$.
\end{dfn}
\begin{lem}\label{lem:basis}
	Any lattice has a base matrix.
\end{lem}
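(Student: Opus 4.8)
The plan is to show that a lattice $L$, in the sense of Definition~\ref{def:l}, is a free abelian group of rank three, so that a $\mathbb{Z}$-basis of $L$ arranged as columns yields the desired nonsingular base matrix of Definition~\ref{def:bm}. First I would extract the key consequence of axiom (3): every bounded subset of $\mathbb{R}^3$ contains only finitely many elements of $L$. This follows by contradiction—an infinite bounded subset would have an accumulation point by Bolzano--Weierstrass, forcing two distinct lattice elements within distance less than $\lambda$ and violating Eq.~(\ref{t1t2l}). This local finiteness is exactly what guarantees that the minima used below are genuinely attained rather than merely approached.

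Next I would argue by induction on the rank $r$ of the $\mathbb{R}$-span of $L$; since $L$ is full rank the final case is $r=3$, but the inductive hypothesis requires the general statement. Choose linearly independent $\mathbf{v}_1,\dots,\mathbf{v}_r\in L$ and set $H=\operatorname{span}(\mathbf{v}_1,\dots,\mathbf{v}_{r-1})$. The group $L\cap H$ is again a discrete subgroup, of rank $r-1$, so by induction it has a basis $\mathbf{c}_1,\dots,\mathbf{c}_{r-1}$. Using these together with $\mathbf{v}_r$ as a coordinate frame, I would write each $\mathbf{x}\in L$ as $\sum_{i<r}\beta_i\mathbf{c}_i+\beta_r\mathbf{v}_r$ and focus attention on the last coordinate $\beta_r$.

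The heart of the argument is to select $\mathbf{c}_r\in L$ realizing the smallest strictly positive value of $\beta_r$. To see this minimum exists, fix any $\mathbf{w}\in L$ with $\beta_r(\mathbf{w})>0$; subtracting integer multiples of $\mathbf{c}_1,\dots,\mathbf{c}_{r-1}$ (which lie in $H$ and hence leave $\beta_r$ unchanged) confines the search for smaller positive values to the bounded box $0\le\beta_i<1$ for $i<r$ and $0<\beta_r\le\beta_r(\mathbf{w})$, in which local finiteness leaves only finitely many lattice points. Writing $m=\beta_r(\mathbf{c}_r)$, a division-with-remainder argument shows $\beta_r(\mathbf{x})\in m\mathbb{Z}$ for every $\mathbf{x}\in L$: otherwise $\mathbf{x}-\lfloor\beta_r(\mathbf{x})/m\rfloor\,\mathbf{c}_r$ would have a strictly smaller positive last coordinate, contradicting minimality. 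Hence $\mathbf{x}-q\mathbf{c}_r\in L\cap H$ for the appropriate integer $q$, and the inductive basis expresses the remainder, so $\mathbf{c}_1,\dots,\mathbf{c}_r$ generate $L$; they are linearly independent by construction, and $C=[\mathbf{c}_1,\mathbf{c}_2,\mathbf{c}_3]$ is the nonsingular base matrix with $L=C(\mathbb{Z}^3)$.

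I expect the main obstacle to be the bookkeeping that makes every minimum truly attained: one must confine the search to a bounded fundamental region before invoking local finiteness, and verify that the reductions by $\mathbf{c}_1,\dots,\mathbf{c}_{r-1}$ preserve the quantity being minimized while staying inside $L$. Once attainment is secured, the division-with-remainder step and the two inclusions establishing $L=C(\mathbb{Z}^3)$ are routine.
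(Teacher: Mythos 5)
Your proof is correct, but it takes a genuinely different route from the paper's. Both arguments are inductions on the rank, both rest on the same local-finiteness consequence of Eq.~(\ref{t1t2l}) via Bolzano--Weierstrass, and both finish with a division-with-remainder step showing a transverse coefficient must be an integer. The difference is in how the rank is reduced. The paper picks a \emph{shortest} nonzero vector $\mathbf{t}_1\in L$ and passes to the \emph{orthogonal projection} $\pi(L)$ onto $\operatorname{span}_{\mathbb{R}}\{\mathbf{t}_1\}^\perp$, applying the inductive hypothesis to the quotient-like image; the delicate point there is that $\pi(L)$ must be shown to remain discrete (the paper argues that otherwise $L$ would have a cluster point), and the minimality of $\lvert\mathbf{t}_1\rvert$ is what forces the leftover coefficient $c_1-\lfloor c_1\rfloor$ to vanish. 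You instead pass to the \emph{intersection} $L\cap H$ with a hyperplane spanned by $r-1$ lattice vectors, which inherits discreteness for free as a subset of $L$; your delicate point is shifted to showing that the minimal positive transverse coordinate is attained, which you correctly secure by reducing modulo $\mathbf{c}_1,\dots,\mathbf{c}_{r-1}$ into a bounded fundamental box and invoking local finiteness. Your version needs no shortest-vector selection and no argument that an image under a linear map stays discrete; the paper's version needs no coordinate bookkeeping in a non-orthogonal frame. Both are complete; just make sure to state the trivial base case ($r=1$, or $r=0$) explicitly, which your hyperplane construction in fact already covers since $H=\{\mathbf{0}\}$ there.
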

\begin{lem}\label{lem:sublat}
	Let $C,\tilde C\in\mathbb{R}^{3\times 3}$ be nonsingular matrices.
	$\tilde C(\mathbb{Z}^3)$ is a sublattice of $C(\mathbb{Z}^3)$ if and only if there exists an $M\in\mathbb{Z}^{3\times 3}$ such that $\tilde C=CM$.
	In such case, the index of subgroup $\tilde C(\mathbb{Z}^3)$ in $C(\mathbb{Z}^3)$ is $\lvert \det M\rvert $.
\end{lem}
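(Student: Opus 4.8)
The plan is to prove the two assertions in turn: first the biconditional characterizing when $\tilde C(\mathbb{Z}^3)$ is a sublattice of $C(\mathbb{Z}^3)$, and then the index formula. The organizing device throughout is the group isomorphism $\phi\colon\mathbb{Z}^3\to C(\mathbb{Z}^3)$, $\mathbf{k}\mapsto C\mathbf{k}$, which lets me transport every statement back to the reference lattice $\mathbb{Z}^3$ and thereby reduce to the purely integer-matrix situation (the case $C=I$).

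For the biconditional, the reverse direction is immediate: if $\tilde C=CM$ with $M\in\mathbb{Z}^{3\times 3}$, then for every $\mathbf{k}\in\mathbb{Z}^3$ one has $\tilde C\mathbf{k}=C(M\mathbf{k})$ with $M\mathbf{k}\in\mathbb{Z}^3$, so $\tilde C(\mathbb{Z}^3)\subseteq C(\mathbb{Z}^3)$; since $\tilde C$ is nonsingular, $\tilde C(\mathbb{Z}^3)$ is a genuine full-rank lattice as well as an additive subgroup, hence a sublattice. For the forward direction I would feed the standard basis vectors $\mathbf{e}_1,\mathbf{e}_2,\mathbf{e}_3$ into the hypothesis: each column $\tilde C\mathbf{e}_i$ lies in $\tilde C(\mathbb{Z}^3)\subseteq C(\mathbb{Z}^3)$, so there is an $\mathbf{m}_i\in\mathbb{Z}^3$ with $\tilde C\mathbf{e}_i=C\mathbf{m}_i$; assembling $M=[\mathbf{m}_1,\mathbf{m}_2,\mathbf{m}_3]$ yields $\tilde C=CM$ with $M$ integral. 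Note that $M=C^{-1}\tilde C$ is automatically nonsingular.

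For the index, I would apply $\phi^{-1}$ to identify $[C(\mathbb{Z}^3):\tilde C(\mathbb{Z}^3)]$ with $[\mathbb{Z}^3:M(\mathbb{Z}^3)]$, using $\tilde C(\mathbb{Z}^3)=C\bigl(M(\mathbb{Z}^3)\bigr)$ and the fact that the bijection $\phi$ induces a bijection on cosets. It then remains to establish $[\mathbb{Z}^3:M(\mathbb{Z}^3)]=\lvert\det M\rvert$. The cleanest route is the Smith normal form $M=UDV$ with $U,V\in\operatorname{GL}(3,\mathbb{Z})$ unimodular and $D=\operatorname{diag}(d_1,d_2,d_3)$, $d_i$ positive integers: since $U$ and $V$ are automorphisms of $\mathbb{Z}^3$, the index is unchanged on replacing $M$ by $D$, and $\mathbb{Z}^3/D(\mathbb{Z}^3)\cong\bigoplus_i\mathbb{Z}/d_i\mathbb{Z}$ has order $d_1d_2d_3=\det D=\lvert\det M\rvert$.

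I expect essentially no genuine difficulty here: the biconditional is bookkeeping with $\phi$, and the only substantive ingredient is the determinant identity $[\mathbb{Z}^3:M(\mathbb{Z}^3)]=\lvert\det M\rvert$. The ``main obstacle'' is therefore just a matter of presentation—choosing to invoke a structure theorem (Smith normal form, or equivalently a triangular Hermite basis) rather than reproving it—after which the counting of the quotient order is automatic.
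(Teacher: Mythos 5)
Your proposal is correct and follows essentially the same route as the paper: the reverse direction via $\tilde C(\mathbb{Z}^3)=C\bigl(M(\mathbb{Z}^3)\bigr)$, the forward direction by expressing the columns of $\tilde C$ as integer combinations of those of $C$, and the index via the Smith normal form, which is exactly the elementary divisor theorem the paper cites. The only difference is that you spell out the quotient computation $\mathbb{Z}^3/D(\mathbb{Z}^3)\cong\bigoplus_i\mathbb{Z}/d_i\mathbb{Z}$ explicitly, whereas the paper delegates it to a reference.
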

\begin{lem}\label{lem:quotient}
	Let $\mathcal{A}$ be a crystal structure and $\tilde L_A$ its sublattice.
	Denote by $k$ the index of subgroup $\tilde L_A$ in $L_A$, which is finite by Lemmas~\ref{lem:basis} and \ref{lem:sublat}.
	We have $\lvert \mathcal{A}/\tilde L_A\rvert=kZ_A$ as long as $Z_A=\lvert\mathcal{A} / L_A\rvert$ is finite.
\end{lem}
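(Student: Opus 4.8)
The plan is to count the $\tilde L_A$-equivalence classes by organizing them according to the coarser $L_A$-classes that contain them. Since $\tilde L_A\subseteq L_A$, the relation ``$\tilde L_A$-equivalent'' refines ``$L_A$-equivalent,'' so each $\tilde L_A$-class sits inside exactly one $L_A$-class. This yields a well-defined surjection
\[
\pi\colon\mathcal{A}/\tilde L_A\to\mathcal{A}/L_A,\qquad \mathbf{a}+\tilde L_A\mapsto\mathbf{a}+L_A,
\]
which is well-posed because $\mathbf{a}'=\mathbf{a}+\mathbf{t}$ with $\mathbf{t}\in\tilde L_A\subseteq L_A$ forces $\mathbf{a}'+L_A=\mathbf{a}+L_A$. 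Because $\lvert\mathcal{A}/L_A\rvert=Z_A$ is assumed finite, the claimed identity $\lvert\mathcal{A}/\tilde L_A\rvert=kZ_A$ reduces to showing that \emph{every} fiber of $\pi$ has exactly $k$ elements.

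First I would fix a single $L_A$-class $\mathbf{a}+L_A$ and describe its fiber, i.e., the collection of $\tilde L_A$-classes it contains. By Lemmas~\ref{lem:basis} and \ref{lem:sublat} the index $k=[L_A:\tilde L_A]$ is finite, so I can choose coset representatives $\mathbf{s}_1,\dots,\mathbf{s}_k\in L_A$ with $L_A=\bigsqcup_{i=1}^k(\mathbf{s}_i+\tilde L_A)$. Every point of $\mathbf{a}+L_A$ then has the form $\mathbf{a}+\mathbf{s}_i+\mathbf{t}$ for some $i$ and some $\mathbf{t}\in\tilde L_A$, so its $\tilde L_A$-class is $\mathbf{a}+\mathbf{s}_i+\tilde L_A$. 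Thus the $k$ classes $\mathbf{a}+\mathbf{s}_i+\tilde L_A$ exhaust the fiber, giving a surjection from the coset set $L_A/\tilde L_A$ onto $\pi^{-1}(\mathbf{a}+L_A)$.

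The key step is injectivity of this correspondence. Suppose $\mathbf{a}+\mathbf{s}_i+\tilde L_A=\mathbf{a}+\mathbf{s}_j+\tilde L_A$; then $\mathbf{a}+\mathbf{s}_i=\mathbf{a}+\mathbf{s}_j+\mathbf{t}'$ for some $\mathbf{t}'\in\tilde L_A$. Cancelling the basepoint $\mathbf{a}$ (immediate vector arithmetic in $\mathbb{R}^3$, reflecting the freeness of the translation action) gives $\mathbf{s}_i-\mathbf{s}_j=\mathbf{t}'\in\tilde L_A$, so $\mathbf{s}_i$ and $\mathbf{s}_j$ lie in the same coset and hence $i=j$. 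Therefore the $k$ classes are pairwise distinct, and $\lvert\pi^{-1}(\mathbf{a}+L_A)\rvert=k$ uniformly over all $L_A$-classes.

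Summing over the $Z_A$ distinct $L_A$-classes then gives $\lvert\mathcal{A}/\tilde L_A\rvert=\sum_{\text{fibers}}k=kZ_A$, as desired. I expect this lemma to be mild rather than hard: the only genuine subtlety is confirming that each fiber has the \emph{same} size $k$, which the coset-representative argument settles. The real inputs are the finiteness of the index $k$ (Lemmas~\ref{lem:basis} and \ref{lem:sublat}) and of $Z_A$; everything else is coset bookkeeping built on the fact that nonzero lattice vectors genuinely displace points, so that cancellation is legitimate.
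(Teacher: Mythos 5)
Your proof is correct and follows essentially the same route as the paper's: both decompose $L_A$ into its $k$ cosets of $\tilde L_A$ and count $\tilde L_A$-classes as $k$ per $L_A$-class, the paper writing this as a double disjoint union $\mathcal{A}=\bigsqcup_{i}\bigsqcup_{j}((\mathbf{a}_i+\mathbf{t}_j)+\tilde L_A)$ while you phrase it as fibers of the projection $\mathcal{A}/\tilde L_A\to\mathcal{A}/L_A$. Your version is marginally more explicit about why the $k$ classes within a fiber are pairwise distinct, a point the paper asserts without detail.
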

Take $C_A,C_B\in\mathbb{R}^{3\times 3}$ that generate $L_A$ and $L_B$, respectively.
For each SLM $(\tilde L_A,\tilde L_B,S)$, there exists an $M_A\in\mathbb{Z}^{3\times 3}$ such that $C_AM_A(\mathbb{Z}^3)=\tilde L_A$.
From Eq.~(\ref{slalb}), we can see that $SC_AM_A(\mathbb{Z}^3)=\tilde L_B$, so $SC_AM_A$ is a base matrix of $\tilde L_B$ which equals $C_BM_B$ for some $M_B\in\mathbb{Z}^{3\times 3}$.
This provides a way to represent an SLM $(\tilde L_A,\tilde L_B,S)$ by a pair of integer matrices $(M_A,M_B)$ satisfying
\begin{align}
	C_AM_A(\mathbb{Z}^3)&=\tilde L_A,\label{camaz}\\
	C_BM_B(\mathbb{Z}^3)&=\tilde L_B,\label{cbmbz}\\
	SC_AM_A&=C_BM_B,\label{scama}\\
	Z_A\lvert\det M_A\rvert&=Z_B\lvert\det M_B\rvert\label{zadma}
,\end{align}
where Eq.~(\ref{zadma}) is derived from Eq.~(\ref{alabl}) using Lemmas~\ref{lem:sublat} and \ref{lem:quotient}.
We can also see that any nonsingular $3\times 3$ integer matrix pair $(M_A, M_B)$ satisfying Eq.~(\ref{zadma}) determines an SLM according to Eqs.~(\ref{camaz}--\ref{scama}).
This representation of SLM is intuitive and extensively used~\cite{sheppard2012generalized,chen2016determination,li2022smallest,huang2024interface}, but is not suitable for SLM enumeration since many different $(M_A,M_B)$ can yield the same SLM~\cite{li2022smallest,wang2024crystal}.
To eliminate such representation redundancy, one may define a \textit{canonical base matrix} for each lattice.
The concepts of unimodular matrices and Hermite normal form will be useful, just as they play a crucial role in the Hart-Forcade theory~\cite{hart2008algorithm}.
\begin{dfn}[Unimodular Matrix]
	Let $Q\in\mathbb{Z}^{n\times n}$ be a nonsingular integer matrix. We say that $Q$ is unimodular if any of the following equivalent propositions holds:
	\begin{gather}
	Q(\mathbb{Z}^n)=\mathbb{Z}^n,\label{qznzn}\\
	Q^{-1}\in \mathbb{Z}^{n\times n},\\
	\det Q\in\{+1,-1\}
	.\end{gather}
	We denote the matrix group formed by all unimodular matrices as $\operatorname{GL}(n,\mathbb{Z})$.
\end{dfn}
\begin{table}[t]
	\caption{Frequently used symbols in Section~\ref{ssec:rep}.}\label{tab:symbol-rep-prune}
\begin{ruledtabular}
	\begin{tabular}{cl}
		Symbol & Meaning \\ \hline
		$\mathbb{F}^{n\times m}$ & All $n\times m$ matrices over the field $\mathbb{F}$\\
		$C(\mathbb{Z}^3)$ & Lattice generated by nonsingular $C\in\mathbb{R}^{3\times 3}$ \\
		$C_A$ & Base matrix of $L_A$ (primitive-cell vectors) \\
		$\operatorname{GL}(n,\mathbb{Z})$ & All $n\times n$ unimodular matrices\\
		$\operatorname{hnf}(M)$ & HNF of the integer matrix $M$ \\
		$\operatorname{HNF}(k)$ & All $3\times 3$ HNFs with determinant $k$ \\
		$(H_A,H_B,Q)$ & IMT representation of an SLM \\
		$\operatorname{IMT}(\mu)$ & All IMTs with multiplicity $\mu$ \\
		$p$ & Permutation on the set $\{1,\cdots,\tilde Z\}$ \\
		$(p,\mathbf{t}_1,\cdots,\mathbf{t}_{\tilde Z})$ & PCT representation of a shuffle \\
		$w(S)$ & Estimated strain energy density of $S$ \\
		$\sigma_i(S), s_i$ & $i$-th largest singular value of $S$ \\
		$\tilde{\mathcal{J}}$ & Standard shuffle of $\mathcal{J}$ (generally $\tilde{\mathcal{J}}\neq\mathcal{J}S^{-1}$) \\
		$\tilde{\mathcal{A}},\tilde{\mathcal{B}}$ & Half-distorted structures with sublattice $\tilde L$\\
		$\tilde L$ & Shuffle lattice of $\tilde{\mathcal{J}}$ \\
		$\theta_i$ & Normalized weight of the $i$-th atom \\
		$\ell$ & Type of norm used to define shuffle distance \\
		$\hat{d}(\mathcal{J})$ & Shuffle distance of $\mathcal{J}$ \\
		$d(\mathcal{J})$ & Minimum $\hat{d}$ among all CSMs congruent to $\mathcal{J}$
\end{tabular}
\end{ruledtabular}
\end{table}
\begin{lem}\label{lem:redundancy}
	Two nonsingular matrices $C,C'\in\mathbb{R}^{3\times 3}$ generate the same lattice if and only if $C^{-1}C'\in\operatorname{GL}(3,\mathbb{Z})$.
\end{lem}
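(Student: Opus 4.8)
The plan is to prove the biconditional by treating the two implications separately, leaning entirely on the machinery already set up in Lemmas~\ref{lem:basis} and \ref{lem:sublat}. Writing $Q=C^{-1}C'$, the statement reduces to characterizing when $C'=CQ$ with $Q$ integer and unimodular. The backward direction is immediate, while the forward direction is where Lemma~\ref{lem:sublat} does the real work, applied in both directions of inclusion.

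For the backward direction ($\Leftarrow$), I would assume $Q=C^{-1}C'\in\operatorname{GL}(3,\mathbb{Z})$, so that $C'=CQ$. Then I compute
\begin{equation}
  C'(\mathbb{Z}^3)=CQ(\mathbb{Z}^3)=C\bigl(Q(\mathbb{Z}^3)\bigr)=C(\mathbb{Z}^3),
\end{equation}
where the last equality uses the defining property of unimodularity, Eq.~(\ref{qznzn}), namely $Q(\mathbb{Z}^3)=\mathbb{Z}^3$. This establishes that the two generated lattices coincide.

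For the forward direction ($\Rightarrow$), I would assume $C(\mathbb{Z}^3)=C'(\mathbb{Z}^3)$ and recover the integrality and unimodularity of $Q$. Since $C'(\mathbb{Z}^3)$ is a (in fact improper) sublattice of $C(\mathbb{Z}^3)$, Lemma~\ref{lem:sublat} supplies an $M\in\mathbb{Z}^{3\times 3}$ with $C'=CM$, so $Q=C^{-1}C'=M$ is already an integer matrix. To see it is unimodular, I can finish in either of two ways. The cleanest is to invoke the index clause of Lemma~\ref{lem:sublat}: the index of the subgroup $C'(\mathbb{Z}^3)$ in $C(\mathbb{Z}^3)$ equals $\lvert\det M\rvert$, and since the two lattices are equal this index is $1$, forcing $\lvert\det M\rvert=1$ and hence $M\in\operatorname{GL}(3,\mathbb{Z})$. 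Alternatively, by the symmetric inclusion $C(\mathbb{Z}^3)\subseteq C'(\mathbb{Z}^3)$ there is an $N\in\mathbb{Z}^{3\times 3}$ with $C=C'N$, whence $C'=CM=C'NM$ gives $NM=I$ by nonsingularity of $C'$, exhibiting an integer inverse for $M$.

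I do not anticipate a genuine obstacle here, as the result is essentially a corollary of Lemma~\ref{lem:sublat}; the only point demanding care is being explicit that equality of lattices means each is a sublattice of the other, so that the lemma (stated for sublattices) may be applied in both directions or, more efficiently, that the index-one case of the lemma is exactly the equality case. I would favor the index-based argument for brevity, since it uses Lemma~\ref{lem:sublat} only once.
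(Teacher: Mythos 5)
Your proof is correct, but it takes a somewhat different route from the paper's. The paper proves the lemma as a single two-step chain of equivalences: $C(\mathbb{Z}^3)=C'(\mathbb{Z}^3)$ iff $C^{-1}C'(\mathbb{Z}^3)=\mathbb{Z}^3$ (apply the bijection $C^{-1}$ to both sides) iff $C^{-1}C'\in\operatorname{GL}(3,\mathbb{Z})$, citing only the characterization of unimodularity in Eq.~(\ref{qznzn}). You instead split the biconditional and, for the forward direction, route through Lemma~\ref{lem:sublat}: the mutual inclusion of the two lattices gives an integer $M$ with $C'=CM$, and either the index clause ($\lvert\det M\rvert=1$) or the symmetric factorization $NM=I$ yields unimodularity. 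What your detour buys is a point the paper glosses over: the definition of unimodularity is stated for matrices already known to lie in $\mathbb{Z}^{n\times n}$, so asserting ``$Q(\mathbb{Z}^3)=\mathbb{Z}^3\iff Q\in\operatorname{GL}(3,\mathbb{Z})$'' for an a priori real matrix $Q=C^{-1}C'$ tacitly uses that $Q(\mathbb{Z}^3)\subseteq\mathbb{Z}^3$ forces the columns $Q\mathbf{e}_i$ to be integer vectors; your appeal to Lemma~\ref{lem:sublat} supplies exactly this integrality explicitly. The trade-off is that the paper's argument is shorter and self-contained at the level of the definition, while yours leans on the heavier Lemma~\ref{lem:sublat} (whose index clause rests on the elementary divisor theorem) for what is ultimately a one-line observation. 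Both are valid.
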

\begin{dfn}[HNF]\label{def:hnf}
	Let $H\in \mathbb{Z}^{n\times n}$ be a nonsingular matrix.
	We say that $H$ is in Hermite normal form (HNF) if (1) $H$ is lower triangular and (2) each off-diagonal element of $H$ is nonnegative and \textit{strictly} less than the diagonal element in its row.
	In other words, the elements of $H$ satisfy
	\begin{equation}
	\begin{cases}
		H_{ii} > H_{ij} \ge 0,&\text{if } j<i\\
		H_{ij}=0,&\text{if } j>i
	\end{cases}
	\end{equation}
\end{dfn}
\begin{lem}\label{lem:hnf}
	Let $M\in \mathbb{Z}^{n\times n}$ be a nonsingular matrix.
	There exists a unique $H\in \mathbb{Z}^{n\times n}$ in HNF and a unique $Q\in\operatorname{GL}(n,\mathbb{Z})$ such that
	\begin{equation}\label{qglmz}
		\exists Q\in\operatorname{GL}(n,\mathbb{Z}),\quad M=HQ
	.\end{equation}
	We say that $H$ is the HNF of $M$.
\end{lem}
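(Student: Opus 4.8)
The plan is to prove existence by an explicit reduction algorithm and uniqueness by a short algebraic comparison. For existence, I would exhibit a sequence of integer column operations (right-multiplications by elementary unimodular matrices) that carries $M$ into HNF; since each such operation lies in $\operatorname{GL}(n,\mathbb{Z})$, and this group is closed under products and inverses, the accumulated transformation $Q^{-1}:=E_1\cdots E_k$ is unimodular, and $H=MQ^{-1}$ yields $M=HQ$ with $Q\in\operatorname{GL}(n,\mathbb{Z})$. For uniqueness, I would compare two HNF factorizations directly, without appealing to the algorithm.

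First I would run the reduction in two phases. In the triangularization phase I process rows $i=1,\dots,n$ in turn; within row $i$ I operate only on columns $i,\dots,n$, using a Euclidean-algorithm-style sequence of column swaps, sign flips, and ``add an integer multiple of one column to another'' to replace the row-$i$ entries in columns $i,\dots,n$ (of the matrix obtained so far) by $(g_i,0,\dots,0)$ with $g_i=\gcd(\cdots)>0$. Because rows $1,\dots,i-1$ already carry zeros in columns $i,\dots,n$, combining those columns keeps them zero, so the triangular structure built so far is preserved; nonsingularity of $M$ guarantees $g_i\neq 0$, giving a lower-triangular matrix with positive diagonal. In the reduction phase I make each subdiagonal entry $H_{ij}$ ($j<i$) satisfy $0\le H_{ij}<H_{ii}$ by adding a suitable integer multiple of column $i$ to column $j$; since column $i$ has zeros above row $i$, the diagonal and the previously reduced entries are untouched. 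This produces $H$ in HNF.

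For uniqueness, suppose $M=H_1Q_1=H_2Q_2$ with $H_1,H_2$ in HNF and $Q_1,Q_2\in\operatorname{GL}(n,\mathbb{Z})$, and set $U:=H_2^{-1}H_1=Q_2Q_1^{-1}$. The right-hand expression shows $U$ is integer and unimodular, while the left-hand expression shows $U$ is lower triangular (a product of lower-triangular matrices). Comparing diagonals in $H_1=H_2U$ gives $(H_1)_{ii}=(H_2)_{ii}U_{ii}$; since $\det U=\pm1=\prod_i U_{ii}$ forces each $U_{ii}=\pm1$, and all diagonals of $H_1,H_2$ are positive, we get $U_{ii}=1$ and $(H_1)_{ii}=(H_2)_{ii}$ for every $i$, so $U$ is unit lower triangular. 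If $U\neq I$, pick a column $j_0$ with a nonzero subdiagonal entry and let $i_0>j_0$ be the \emph{smallest} row index with $U_{i_0 j_0}\neq0$; expanding the $(i_0,j_0)$ entry of $H_1=H_2U$ and using $U_{k j_0}=0$ for $j_0<k<i_0$ together with $(H_2)_{i_0 k}=0$ for $k>i_0$ collapses the sum to $(H_1)_{i_0 j_0}=(H_2)_{i_0 j_0}+U_{i_0 j_0}(H_2)_{i_0 i_0}$. But both $(H_1)_{i_0 j_0}$ and $(H_2)_{i_0 j_0}$ lie in $[0,(H_2)_{i_0 i_0})$ by the HNF condition, so they cannot differ by the nonzero multiple $U_{i_0 j_0}(H_2)_{i_0 i_0}$ of the diagonal, a contradiction. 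Hence $U=I$, so $H_1=H_2$, and then $Q_1=H_1^{-1}M=H_2^{-1}M=Q_2$.

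I expect the main obstacle to be bookkeeping rather than any deep idea. In the existence step one must check that the chosen order of column operations never destroys the triangular or reduced structure already achieved, and in the uniqueness step the delicate point is isolating a single subdiagonal entry: choosing $i_0$ to be the smallest offending row index in its column is exactly what makes the intermediate terms vanish so that the HNF bound applies. One should also keep in mind that the convention here is lower-triangular HNF with the unimodular factor on the right ($M=HQ$), which transposes the more familiar upper-triangular, left-factor statements; alternatively, uniqueness can be phrased lattice-theoretically via Lemma~\ref{lem:redundancy}, since $H_1(\mathbb{Z}^n)=H_2(\mathbb{Z}^n)=M(\mathbb{Z}^n)$, but the direct entrywise comparison above is shorter.
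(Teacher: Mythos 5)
Your proof is correct, and it is genuinely more self-contained than the paper's: the paper defers both the existence of the factorization and the uniqueness of $H$ to Ref.~\citenum{cohen2011course}, adding only the one-line observation $Q=H^{-1}M$ for the uniqueness of $Q$. You instead supply the standard arguments in full --- existence via Euclidean-style column reduction by elementary unimodular matrices acting on the right, and uniqueness of $H$ via the entrywise comparison of $H_1=H_2U$ with $U=H_2^{-1}H_1=Q_2Q_1^{-1}$ integer, unimodular, and lower triangular. Your choice of the \emph{smallest} offending row index $i_0$ is exactly what collapses the sum to the single term $U_{i_0j_0}(H_2)_{i_0i_0}$ so that the bound $0\le H_{ij}<H_{ii}$ applies, and your final step $Q_1=H_1^{-1}M=Q_2$ coincides with the paper's. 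Two small remarks. First, your uniqueness argument uses positivity of \emph{all} diagonal entries of an HNF; Definition~\ref{def:hnf} as literally stated only forces $H_{ii}>0$ for rows $i\ge 2$ (row $1$ has no subdiagonal entries), so $H_{11}>0$ must be read into the definition --- it is the standard convention, and without it uniqueness would actually fail already for $n=1$. Second, your alternative lattice-theoretic phrasing via Lemma~\ref{lem:redundancy} (since $H_1(\mathbb{Z}^n)=H_2(\mathbb{Z}^n)=M(\mathbb{Z}^n)$) fits the paper's framework nicely, but it only delivers $H_2^{-1}H_1\in\operatorname{GL}(n,\mathbb{Z})$, i.e.\ the starting point of your comparison, so the entrywise argument is still needed to conclude $H_1=H_2$.
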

Let $\operatorname{hnf}(M)$ denote the HNF of $M$, and $\operatorname{HNF}(k)$ denote all $3\times 3$ HNFs with determinant $k$.
Lemma~\ref{lem:redundancy} implies that $(M_A,M_B)$ and $(M_A',M_B')$ represent the same SLM if and only if $(M_A',M_B')=(M_AQ',M_BQ')$ for some $Q'\in\operatorname{GL}(3,\mathbb{Z})$.
To make the representation of SLM unique, one may use the mapping $(M_A,M_B)\mapsto (H_A,H_B,Q)$, where
\begin{align}
	H_A&=\operatorname{hnf}(M_A),\\
	H_B&=\operatorname{hnf}(M_B),\\
	Q&=(H_B^{-1}M_B)(H_A^{-1}M_A)^{-1}
.\end{align}
It maps the entire $\{(M_AQ',M_BQ')\,|\,Q'\in\operatorname{GL}(3,\mathbb{Z})\}$ to a \textit{single} integer-matrix triplet (IMT)~\cite{wang2024crystal}.
\begin{dfn}[IMT]\label{def:imt}
	Let $\mathcal{A}$, $\mathcal{B}$ be crystal structures and $\mu$ a positive integer.
	We say that
	\begin{multline}
		\operatorname{IMT}(\mu)=\operatorname{HNF}\!\left(\frac{\mu\operatorname{lcm}(Z_A,Z_B)}{Z_A}\right)\\
		\times\operatorname{HNF}\!\left(\frac{\mu\operatorname{lcm}(Z_A,Z_B)}{Z_B}\right)\times\operatorname{GL}(3,\mathbb{Z})
	\end{multline}
	is the set of all IMTs with multiplicity $\mu$.
\end{dfn}
\begin{thm}\label{thm:imt}
	The mapping $(H_A,H_B,Q)\mapsto(\tilde L_A,\tilde L_B,S)$ with
	\begin{align}
		\tilde L_A&=C_AH_A(\mathbb{Z}^3),\label{lacah}\\
		\tilde L_B&=C_BH_B(\mathbb{Z}^3),\label{lacbh}\\
		S&=(C_BH_BQ)(C_AH_A)^{-1}\label{scbhb}
	\end{align}
	is a bijection from $\operatorname{IMT}(\mu)$ to $\operatorname{SLM}(\mathcal{A},\mathcal{B};\mu)$.
\end{thm}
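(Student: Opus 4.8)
The plan is to show the map $\Phi\colon(H_A,H_B,Q)\mapsto(\tilde L_A,\tilde L_B,S)$ of Eqs.~(\ref{lacah}--\ref{scbhb}) is a bijection by checking, in turn, that it is well-defined into $\operatorname{SLM}(\mathcal{A},\mathcal{B};\mu)$, injective, and surjective. Throughout I write $k_A=\mu\operatorname{lcm}(Z_A,Z_B)/Z_A$ and $k_B=\mu\operatorname{lcm}(Z_A,Z_B)/Z_B$; both are positive integers since $\operatorname{lcm}(Z_A,Z_B)$ is a multiple of each of $Z_A,Z_B$, which is also why $\operatorname{HNF}(k_A)$ and $\operatorname{HNF}(k_B)$ in Definition~\ref{def:imt} are meaningful.

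First I would verify well-definedness. Given $(H_A,H_B,Q)\in\operatorname{IMT}(\mu)$, Lemma~\ref{lem:sublat} makes $\tilde L_A=C_AH_A(\mathbb{Z}^3)$ a sublattice of $L_A$ of index $\lvert\det H_A\rvert=k_A$, and likewise $\tilde L_B$ of index $k_B$ in $L_B$, so $S=(C_BH_BQ)(C_AH_A)^{-1}$ is a product of nonsingular matrices. The lattice match is immediate from $SC_AH_A=C_BH_BQ$: since $Q(\mathbb{Z}^3)=\mathbb{Z}^3$, we get $S\tilde L_A=C_BH_BQ(\mathbb{Z}^3)=C_BH_B(\mathbb{Z}^3)=\tilde L_B$. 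Lemma~\ref{lem:quotient} then gives $\lvert\mathcal{A}/\tilde L_A\rvert=k_AZ_A=\mu\operatorname{lcm}(Z_A,Z_B)=k_BZ_B=\lvert\mathcal{B}/\tilde L_B\rvert$, so Eq.~(\ref{alabl}) holds and the SLM has period $\mu\operatorname{lcm}(Z_A,Z_B)$, i.e. multiplicity exactly $\mu$. Hence $\Phi$ lands in $\operatorname{SLM}(\mathcal{A},\mathcal{B};\mu)$.

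Injectivity is where the uniqueness of the HNF is essential. If $\Phi(H_A,H_B,Q)=\Phi(H_A',H_B',Q')$, then $C_AH_A(\mathbb{Z}^3)=C_AH_A'(\mathbb{Z}^3)$, so by Lemma~\ref{lem:redundancy} the matrix $H_A^{-1}H_A'$ is unimodular; because $H_A$ and $H_A'$ are both in HNF, the uniqueness clause of Lemma~\ref{lem:hnf} forces $H_A=H_A'$, and symmetrically $H_B=H_B'$. Substituting these into the common value of $S$ and cancelling $C_BH_B$ on the left and $(C_AH_A)^{-1}$ on the right leaves $Q=Q'$. For surjectivity I would reverse the construction: starting from an arbitrary $(\tilde L_A,\tilde L_B,S)\in\operatorname{SLM}(\mathcal{A},\mathcal{B};\mu)$, Lemmas~\ref{lem:basis} and \ref{lem:sublat} give an integer $M_A$ with $\tilde L_A=C_AM_A(\mathbb{Z}^3)$; setting $H_A=\operatorname{hnf}(M_A)$ and using $M_A=H_AQ_A$ with $Q_A\in\operatorname{GL}(3,\mathbb{Z})$ (Lemma~\ref{lem:hnf}) yields $\tilde L_A=C_AH_A(\mathbb{Z}^3)$, and Lemma~\ref{lem:quotient} fixes $\lvert\det H_A\rvert=\lvert\mathcal{A}/\tilde L_A\rvert/Z_A=k_A$, so $H_A\in\operatorname{HNF}(k_A)$; the same steps produce $H_B\in\operatorname{HNF}(k_B)$. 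I then set $Q=(C_BH_B)^{-1}S(C_AH_A)$, which satisfies Eq.~(\ref{scbhb}) by construction.

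The crux is the final check that $Q\in\operatorname{GL}(3,\mathbb{Z})$. Here I would invoke the SLM's lattice-matching relation $S\tilde L_A=\tilde L_B$, i.e. $SC_AH_A(\mathbb{Z}^3)=C_BH_B(\mathbb{Z}^3)$, which rearranges to $Q(\mathbb{Z}^3)=(C_BH_B)^{-1}S(C_AH_A)(\mathbb{Z}^3)=\mathbb{Z}^3$; by the characterization of Eq.~(\ref{qznzn}) this is precisely $Q\in\operatorname{GL}(3,\mathbb{Z})$, so $(H_A,H_B,Q)\in\operatorname{IMT}(\mu)$ maps to the chosen SLM. I expect the genuinely careful part to be the index/determinant bookkeeping—using the multiplicity hypothesis through Lemma~\ref{lem:quotient} to guarantee that the extracted HNFs carry exactly the determinants $k_A,k_B$ demanded by $\operatorname{IMT}(\mu)$, so that multiplicity is preserved—while the integrality and nonsingularity of $Q$ both fall out for free once $S$ is known to carry $\tilde L_A$ exactly onto $\tilde L_B$.
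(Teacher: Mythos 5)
Your proof is correct and follows essentially the same route as the paper's: well-definedness via Lemmas~\ref{lem:sublat} and \ref{lem:quotient}, and the inverse construction via the uniqueness of the HNF base matrix (Lemmas~\ref{lem:redundancy} and \ref{lem:hnf}). You are in fact slightly more careful than the paper at one point—explicitly verifying $Q\in\operatorname{GL}(3,\mathbb{Z})$ from $S\tilde L_A=\tilde L_B$ and Eq.~(\ref{qznzn}) in the surjectivity step, which the paper's proof leaves implicit in the phrase ``$S$ further determines $Q$''.
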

\begin{proof}[\indent Proof]
	$(\tilde L_A,\tilde L_B,S)$ is an SLM since $S\tilde L_A=\tilde L_B$ and
	\begin{align}
		\lvert \mathcal{A}/\tilde L_A\rvert
		&= Z_A \det H_A \label{zadha}\\
		&= \mu\operatorname{lcm}(Z_A,Z_B) \label{mlcmz}\\
		&= Z_B \det H_B \label{zbdhb}\\
		&=\lvert \mathcal{B}/\tilde L_B\rvert \label{blb}
	,\end{align}
	where Lemmas~\ref{lem:sublat} and \ref{lem:quotient} are used to derive Eqs.~(\ref{zadha}) and (\ref{blb}).
	Definition~\ref{def:imt} ensures Eqs.~(\ref{mlcmz}) and (\ref{zbdhb}), so that the multiplicity of $(\tilde L_A,\tilde L_B,S)$ is $\mu$.
	Conversely, for each $(\tilde L_A,\tilde L_B,S)\in \operatorname{SLM}(\mathcal{A},\mathcal{B};\mu)$, $\tilde L_A$ has a unique base matrix $C_AH_A$, where $H_A$ is in HNF (Lemmas~\ref{lem:redundancy} and \ref{lem:hnf}) with determinant $\tilde Z / Z_A = \mu$ (Lemmas~\ref{lem:sublat} and \ref{lem:quotient}), and so does $\tilde L_B$.
	Since $H_A$ and $H_B$ are determined, $S$ further determines $Q$ by Eq.~(\ref{scbhb}).
	Therefore, every SLM is associated with a unique IMT.
\end{proof}
Theorem~\ref{thm:imt} allows us to focus on the enumeration of $\operatorname{IMT}(\mu)$ from now on, which is far more intuitive than the enumeration of SLMs.
\subsubsection{Representing shuffles}
\begin{dfn}[Motif]\label{def:motif}
	Let $\mathcal{A}$ be a crystal structure and $\tilde L_A$ a sublattice of $\mathcal{A}$.
	We say that $\{\mathbf{a}_1,\mathbf{a}_2\cdots,\mathbf{a}_n\}\subset\mathcal{A}$ is an $\tilde L_A$-motif of $\mathcal{A}$ if $\mathbf{a}_1,\cdots,\mathbf{a}_n$ are $\tilde L_A$-inequivalent and $n=\lvert \mathcal{A}/\tilde L_A\rvert $.
\end{dfn}
\begin{thm}\label{thm:pct}
	Let $(\tilde L_A,\tilde L_B,S)$ be an SLM with period $\tilde Z$, $\{\mathbf{a}_1,\cdots,\mathbf{a}_{\tilde Z}\}$ an $\tilde L_A$-motif of $\mathcal{A}$, and $\{\mathbf{b}_1,\cdots,\mathbf{b}_{\tilde Z}\}$ an $\tilde L_B$-motif of $\mathcal{B}$.
	If $\mathcal{J}\in\operatorname{CSM}(\mathcal{A},\mathcal{B})$ has the SLM $(\tilde L_A,\tilde L_B,S)$, there exist a unique permutation $p\in\operatorname{Sym}(\tilde Z)$ and unique $\mathbf{t}_1,\cdots,\mathbf{t}_{\tilde Z}\in\tilde L_B$ such that
	\begin{equation}\label{jaibp}
		\forall i\in\{1,\cdots,\tilde Z\},\quad\mathcal{J}(\mathbf{a}_i)=\mathbf{b}_{p(i)}+\mathbf{t}_{i}
	.\end{equation}
	Furthermore, if both $\mathcal{J}_1$ and $\mathcal{J}_2$ have $(\tilde L_A,\tilde L_B,S)$, then $\mathcal{J}_1=\mathcal{J}_2$ if and only if they have the same $(p,\mathbf{t}_1,\cdots,\mathbf{t}_{\tilde Z})$.
\end{thm}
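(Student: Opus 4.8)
The plan is to first extract the single \emph{covariance identity} that any CSM with the prescribed SLM must satisfy, and then reduce every assertion of the theorem to the action of $\mathcal{J}$ on the finite motif $\{\mathbf{a}_1,\dots,\mathbf{a}_{\tilde Z}\}$. Since $\mathcal{J}$ has the SLM $(\tilde L_A,\tilde L_B,S)$, the map $\mathcal{J}S^{-1}\colon S\mathcal{A}\to\mathcal{B}$ is a shuffle with shuffle lattice $\tilde L_B$ (Definition~\ref{def:dg}), and $\tilde L_A=S^{-1}\tilde L_B$ (Theorem~\ref{thm:slm}), so $S\tilde L_A=\tilde L_B$. Evaluating the translation-element diagram of Definition~\ref{def:shuffle} at $\mathbf{y}=S\mathbf{a}$ with $\mathbf{t}=S\mathbf{s}\in\tilde L_B$ (where $\mathbf{s}\in\tilde L_A$) gives $\mathcal{J}S^{-1}(S\mathbf{a}+S\mathbf{s})=\mathcal{J}S^{-1}(S\mathbf{a})+S\mathbf{s}$, that is,
\begin{equation}\label{eq:covprop}
  \mathcal{J}(\mathbf{a}+\mathbf{s})=\mathcal{J}(\mathbf{a})+S\mathbf{s},\qquad\forall\,\mathbf{a}\in\mathcal{A},\ \mathbf{s}\in\tilde L_A.
\end{equation}
This identity is the technical heart of the proof: it shows that $\mathcal{J}$ is completely determined by its values on any system of $\tilde L_A$-class representatives.

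For existence and uniqueness of $(p,\mathbf{t}_1,\dots,\mathbf{t}_{\tilde Z})$ I would argue pointwise. Because $\{\mathbf{b}_1,\dots,\mathbf{b}_{\tilde Z}\}$ is an $\tilde L_B$-motif, the classes $\mathbf{b}_j+\tilde L_B$ partition $\mathcal{B}$, so each $\mathcal{J}(\mathbf{a}_i)$ lies in exactly one of them; this produces a unique index $p(i)$ and a unique $\mathbf{t}_i\in\tilde L_B$ with $\mathcal{J}(\mathbf{a}_i)=\mathbf{b}_{p(i)}+\mathbf{t}_i$, which is Eq.~(\ref{jaibp}). To see that $p\in\operatorname{Sym}(\tilde Z)$, I would identify $\{1,\dots,\tilde Z\}$ with $\mathcal{A}/\tilde L_A$ via $i\mapsto\mathbf{a}_i+\tilde L_A$ and with $\mathcal{B}/\tilde L_B$ via $j\mapsto\mathbf{b}_j+\tilde L_B$; both are bijections of finite sets of common cardinality $\tilde Z$ by Definition~\ref{def:motif}. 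Since $\mathbf{t}_i\in\tilde L_B$ yields $\mathcal{J}(\mathbf{a}_i)+\tilde L_B=\mathbf{b}_{p(i)}+\tilde L_B$, under these identifications $p$ coincides with the bijection $\mathbf{a}+\tilde L_A\mapsto\mathcal{J}(\mathbf{a})+\tilde L_B$ of Theorem~\ref{thm:slm}; being a self-bijection of $\{1,\dots,\tilde Z\}$, it is a permutation.

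For the ``furthermore'' clause, the forward direction ($\mathcal{J}_1=\mathcal{J}_2$ implies identical data) is immediate from the pointwise uniqueness just established. The substantive direction is the converse: if $\mathcal{J}_1$ and $\mathcal{J}_2$ share the same $(p,\mathbf{t}_1,\dots,\mathbf{t}_{\tilde Z})$, then $\mathcal{J}_1(\mathbf{a}_i)=\mathbf{b}_{p(i)}+\mathbf{t}_i=\mathcal{J}_2(\mathbf{a}_i)$ for every $i$, so the two agree on the motif. Writing an arbitrary $\mathbf{a}\in\mathcal{A}$ uniquely as $\mathbf{a}=\mathbf{a}_i+\mathbf{s}$ with $\mathbf{s}\in\tilde L_A$ and applying Eq.~(\ref{eq:covprop}) to each $\mathcal{J}_k$ gives $\mathcal{J}_k(\mathbf{a})=\mathcal{J}_k(\mathbf{a}_i)+S\mathbf{s}$; the two right-hand sides coincide, whence $\mathcal{J}_1=\mathcal{J}_2$. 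The main obstacle is purely bookkeeping rather than conceptual: one must derive Eq.~(\ref{eq:covprop}) with the correct placement of the $S^{-1}$ conjugation and the identification $S\tilde L_A=\tilde L_B$, and one must be careful that the ``$p$'' defined here on indices genuinely matches the class-level bijection of Theorem~\ref{thm:slm}, which hinges on both motifs having exactly $\tilde Z$ elements as guaranteed by the period equality of the SLM.
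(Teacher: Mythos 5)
Your proposal is correct and follows essentially the same route as the paper: the covariance identity $\mathcal{J}(\mathbf{a}+\mathbf{s})=\mathcal{J}(\mathbf{a})+S\mathbf{s}$ you extract is exactly what the paper encodes in its commutative diagram leading to Eq.~(\ref{aitaj}), and both arguments rest on the quotient-level bijection $\mathbf{a}+\tilde L_A\mapsto\mathcal{J}(\mathbf{a})+\tilde L_B$ established in the proof of Theorem~\ref{thm:slm} together with the unique decomposition $\mathbf{a}=\mathbf{a}_i+\mathbf{s}$. No gaps.
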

\begin{proof}[\indent Proof]
	In the proof of Theorem~\ref{thm:slm}, we have already shown that $\mathbf{a}+\tilde L_A\mapsto\mathcal{J}(\mathbf{a})+\tilde L_B$ is a bijection from $\mathcal{A}/\tilde L_A$ to $\mathcal{B}/\tilde L_B$.
	Therefore, the images of different $\mathbf{a}_1,\cdots,\mathbf{a}_{\tilde Z}$ under $\mathcal{J}$ are in different $\mathbf{b}_1+\tilde L_B,\cdots,\mathbf{b}_{\tilde Z}+\tilde L_B$.
	This determines $p$ which is a permutation on $\{1,\cdots,\tilde Z\}$, and $\mathbf{t}_1,\cdots,\mathbf{t}_{\tilde Z}$ is also uniquely determined by Eq.~(\ref{jaibp}).
	Furthermore, since each element in $\mathcal{A}$ can be uniquely decomposed as $\mathbf{a}=\mathbf{a}_i+\mathbf{t}$ for some $i\in\{1,\cdots,\tilde Z\}$ and $\mathbf{t}\in\tilde L_A$, we have
	\begin{equation}\label{aitaj}
		\forall \mathbf{a}_i+\mathbf{t}\in\mathcal{A},\quad \mathcal{J}(\mathbf{a}_i+\mathbf{t})=\mathbf{b}_{p(i)}+\mathbf{t}_i+S\mathbf{t}
	,\end{equation}
	which is derived from Eq.~(\ref{jaibp}) and the following commutative diagram:
	\begin{equation}
		\begin{tikzcd}[column sep=large]
		  \mathcal{A}\arrow[r, "S"']&S\mathcal{A}\arrow[r, "\mathcal{J}S^{-1}"',rightsquigarrow]&\mathcal{B}\\
		  \mathcal{A}\arrow[r, "S"]&S\mathcal{A}\arrow[r, "\mathcal{J}S^{-1}",rightsquigarrow]&\mathcal{B}
		  \arrow[from=1-1, to=2-1, "+\mathbf{t}"']
		  \arrow[from=1-2, to=2-2, "+S\mathbf{t}"']
		  \arrow[from=1-3, to=2-3, "+S\mathbf{t}"]
		  \arrow[from=1-1, to=1-3, "\mathcal{J}", bend left=25]
		  \arrow[from=2-1, to=2-3, "\mathcal{J}"', bend right=25]
	  \end{tikzcd}
	.\end{equation}
	From Eq.~(\ref{aitaj}) we can see that $(p,\mathbf{t}_1,\cdots,\mathbf{t}_{\tilde Z})$ completely determines a CSM with the SLM $(\tilde L_A,\tilde L_B,S)$.
\end{proof}
Given the SLM, Theorem~\ref{thm:pct} uniquely represents each CSM as a permutation $p$ and $\tilde Z$ elements of $\tilde L_B$.
The underlying reason is that each $(p,\mathbf{t}_1,\cdots,\mathbf{t}_{\tilde Z})$ actually represents a shuffle, which we refer to as the ``permutation with class-wise translations'' (PCT) representation, as illustrated in Fig.~\ref{fig:pct}.
We denote by $\mathcal{J}_{p,\mathbf{t}_1,\cdots,\mathbf{t}_{\tilde Z}}$ the CSM determined by Eq.~(\ref{aitaj}).
\begin{figure}[t!]
  \centering
  \includegraphics[width=0.95\linewidth]{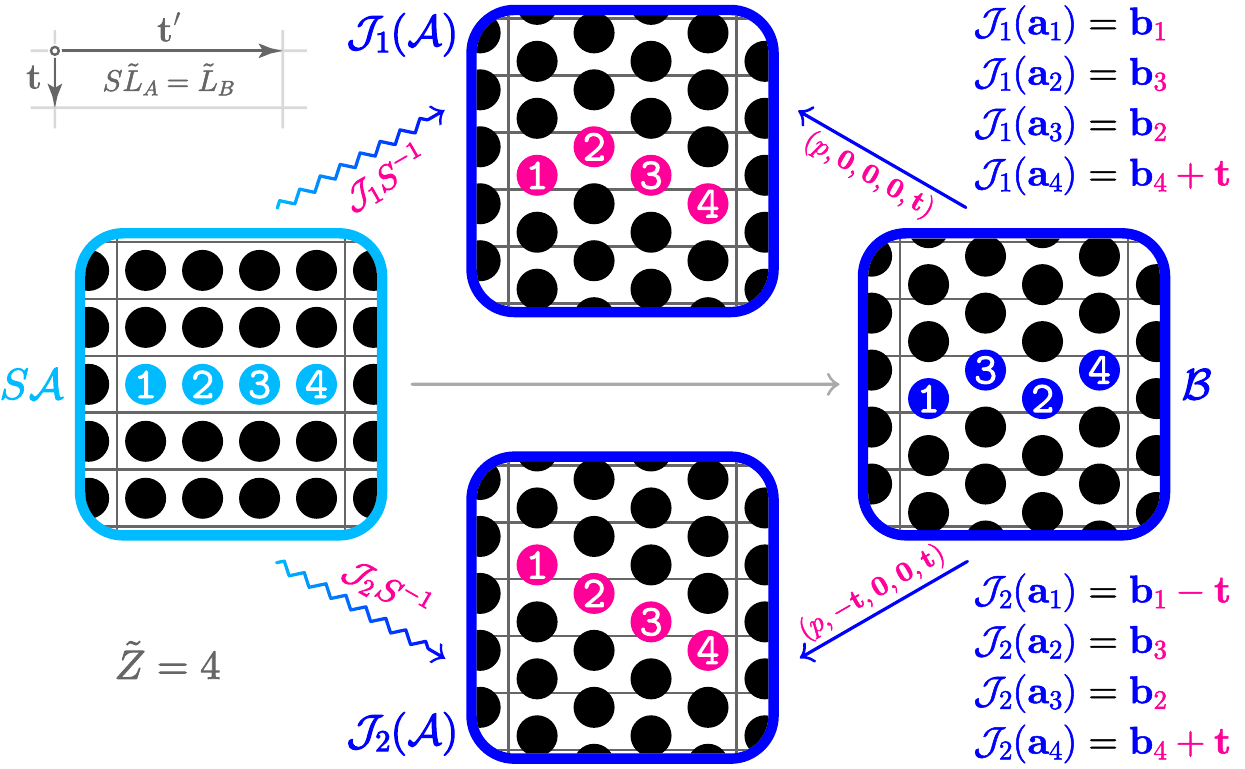}
  \caption{Two PCTs representing $\mathcal{J}_1,\mathcal{J}_2\colon S\mathcal{A}\rightsquigarrow\mathcal{B}$ with the same shuffle lattice (gray grids), respectively.
	  Both $\{S\mathbf{a}_i\}$, the motif of $S\mathcal{A}$ (cyan) and $\{\mathbf{b}_i\}$, the motif of $\mathcal{B}$ (blue) are labeled by \texttt{1-4}.
	  Every pink atom $\mathcal{J}(\mathbf{a}_i)$ is both the image of $S\mathbf{a}_i$ under $\mathcal{J}S^{-1}$ and an element of $\mathcal{B}$, which can be written as $\mathbf{b}_{p(i)}+\mathbf{t}_i$.
	  Therefore, the PCT representation of a shuffle can be viewed as the unique map $\mathbf{b}_i\mapsto\mathbf{b}_{p(i)}+\mathbf{t}_i$ that makes the triangle in the figure commute, where the horizontal gray arrow represents $S\mathbf{a}_i\mapsto \mathbf{b}_i$ (the canonical bijection between motifs).
	  The pink text shows the specific PCTs representing $\mathcal{J}_1$ and $\mathcal{J}_2$, where $p=(\begin{smallmatrix}
		  1&2&3&4\\1&3&2&4
	  \end{smallmatrix})$.
  }
  \label{fig:pct}
\end{figure}
\begin{dfn}[PCT]\label{def:pct}
	Given a mapping
	\begin{equation}\label{lalbs}
	  (\tilde L_A, \tilde L_B, S)\mapsto(\mathbf{a}_1,\cdots,\mathbf{a}_{\tilde Z},\mathbf{b}_1,\cdots,\mathbf{b}_{\tilde Z})
	,\end{equation}
	that determines an $\tilde L_A$-motif of $\mathcal{A}$ and an $\tilde L_B$-motif of $\mathcal{B}$ for each SLM in $\operatorname{SLM}(\mathcal{A}, \mathcal{B})$.
	We say that
	\begin{multline}
		\operatorname{PCT}(\tilde L_A,\tilde L_B,S)=\Big\{(p,\mathbf{t}_1,\cdots,\mathbf{t}_{\tilde Z})\in\operatorname{Sym}(\tilde Z)\times \tilde L_B^{\tilde Z}\,\Big|
		\\\forall i,\,\chi_A(\mathbf{a}_i)=\chi_B(\mathbf{b}_{p(i)})\Big\}
	\end{multline}
	is the set of all PCTs with SLM $(\tilde L_A,\tilde L_B,S)$, which depends only on the SLM and the mapping.
\end{dfn}
Note that $p$ can be stored as a permutation matrix, and $\tilde L_B$ is isomorphic to $\mathbb{Z}^3$.
In this sense, all the ingredients of a CSM can be represented as integer matrices.
Given $C_A$, $C_B$, and a deterministic way to choose the $\tilde L_A$- and $\tilde L_B$-motifs for each SLM, one can reproduce any CSM from its associated IMT and PCT.
\subsection{Pruning criteria}\label{ssec:prune}
As a one-to-one correspondence between infinite sets, the CSM has infinitely many possibilities.
Fortunately, we are only concerned with the CSMs of those MEPs with low energy barriers.
Hence, many CSMs can be excluded, leaving only a finite number of candidates.
\subsubsection{Strain energy density}\label{sssec:strain}
From an energetic perspective, the constraint on strain is necessary, since excessive deformation would lead to a large strain energy.
We are only concerned with those CSMs whose deformation gradients satisfy
\begin{equation}\label{wsdth}
	w(S) \le  w_\text{max}
,\end{equation}
where $w(S)$ denotes the strain energy density associated with the deformation gradient $S$, and $w_\text{max}$ is a truncation threshold.
When searching for the CSM of the lowest-barrier MEP, one may let $w_\text{max}$ be the minimum $\Delta^\ddagger$-density among known SSPTs.
\begin{lem}[SVD]\label{lem:svd}
	If $S\in \mathbb{R}^{3\times 3}$ has a positive determinant, there exist $U,V\in\operatorname{SO}(3)$ and unique $s_1\ge s_2\ge s_3>0$ such that $S=U\Sigma V^\text{T}$, where $\Sigma=\operatorname{diag}(s_1,s_2,s_3)$.
\end{lem}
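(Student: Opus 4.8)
The plan is to reduce this to the spectral theorem for the symmetric positive-definite matrix $S^{\mathrm{T}}S$, then promote the resulting orthogonal factors into $\operatorname{SO}(3)$ using the hypothesis $\det S>0$. Since $\det S>0$ forces $S$ to be invertible, $S^{\mathrm{T}}S$ is symmetric positive definite, so the spectral theorem yields an orthogonal $V$ and eigenvalues $\lambda_1\ge\lambda_2\ge\lambda_3>0$ with $S^{\mathrm{T}}S=V\Lambda V^{\mathrm{T}}$, where $\Lambda=\operatorname{diag}(\lambda_1,\lambda_2,\lambda_3)$. Setting $s_i=\sqrt{\lambda_i}$ and $\Sigma=\operatorname{diag}(s_1,s_2,s_3)$ gives the required strictly positive, decreasingly ordered singular values.

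Next I would construct the left factor as $U=SV\Sigma^{-1}$ and verify orthogonality directly: $U^{\mathrm{T}}U=\Sigma^{-1}V^{\mathrm{T}}(S^{\mathrm{T}}S)V\Sigma^{-1}=\Sigma^{-1}\Lambda\Sigma^{-1}=I$, using $\Sigma^2=\Lambda$. By construction $S=U\Sigma V^{\mathrm{T}}$ with $U,V$ orthogonal, which is the standard SVD; it remains only to control determinants.

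The only genuine step is upgrading $U,V$ from $\operatorname{O}(3)$ to $\operatorname{SO}(3)$. Taking determinants in $S=U\Sigma V^{\mathrm{T}}$ gives $\det S=(\det U)(\det V)\,s_1s_2s_3$; since $\det S>0$ and $s_1s_2s_3>0$, the orthogonal factors satisfy $(\det U)(\det V)=+1$, so $\det U=\det V\in\{+1,-1\}$. If both equal $+1$ we are done. If both equal $-1$, I replace $U\mapsto UD$ and $V\mapsto VD$ with $D=\operatorname{diag}(-1,1,1)$; since $D$ is diagonal with $D^2=I$, we have $D\Sigma D=\Sigma$, so $S=(UD)\Sigma(VD)^{\mathrm{T}}$ is unchanged while $\det(UD)=\det(VD)=+1$, placing both factors in $\operatorname{SO}(3)$.

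Finally, uniqueness of the $s_i$ follows because $s_i^2=\lambda_i$ are exactly the eigenvalues of $S^{\mathrm{T}}S$, which are intrinsic to $S$ and pinned down once ordered by $s_1\ge s_2\ge s_3$; the factors $U,V$ need not be unique when singular values coincide, but the lemma asserts uniqueness only of the $s_i$. The step demanding the most care is the sign correction: one must check that the $\pm1$ diagonal flip leaves $\Sigma$ invariant, which holds precisely because all the matrices entering $D\Sigma D$ are diagonal.
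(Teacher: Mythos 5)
Your proof is correct and follows essentially the same route as the paper: apply the spectral theorem to $S^{\mathrm{T}}S$ to get $V$ and the ordered eigenvalues, set $U=SV\Sigma^{-1}$, verify $U^{\mathrm{T}}U=I$, and note that $s_i=\sqrt{\lambda_i}$ is intrinsic to $S$. The only cosmetic difference is in handling orientation: the paper picks the eigenvectors in right-handed order so that $V\in\operatorname{SO}(3)$ from the outset (whence $\det U=\det S/(s_1s_2s_3\det V)>0$ forces $U\in\operatorname{SO}(3)$), while you correct signs afterwards with the diagonal flip $D$; both are valid.
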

Lemma~\ref{lem:svd} is known as the singular value decomposition (SVD) and $s_i$ is referred to as the $i$-th singular value of $S$~\cite{axler2015linear}, usually denoted by $\sigma_i(S)$.
A direct corollary is the polar decomposition (PD), which states that a deformation gradient can always be \textit{uniquely} factorized into a positive-definite transformation and a rotation, as
\begin{equation}
	S=(UV^{\text{T}})\left(V\Sigma V^{\text{T}}\right)=\left(S\sqrt{S^\text{T}S}^{\,-1}\right)\sqrt{S^\text{T}S}\label{suvtv}
.\end{equation}
Note that the rotation $UV^{\text{T}}$ does not contribute to the strain energy, and the positive-definite $V\Sigma V^{\text{T}}$ simply scales the space by $s_i$ along $\mathbf{v}_i$ for $i=1,2,3$, where $\mathbf{v}_i$ denotes the $i$-th column of $V$.
We estimate the strain energy according to the diagram:
\begin{equation}\label{ajbvs}
  \begin{tikzcd}[column sep=1.8em]
	  \mathcal{A}\arrow[rr, "\mathcal{J}"]&&\mathcal{B}\\
	  V\Sigma^{\frac{1}{2}} V^\text{T}\mathcal{A}\arrow[r, rightsquigarrow, "\tilde{\mathcal{J}}"']&V\Sigma^{-\frac{1}{2}}U^\text{T}\mathcal{B}\arrow[r, "UV^\text{T}"']&U\Sigma^{-\frac{1}{2}} U^\text{T}\mathcal{B} 
	  \arrow[from=1-1, to=2-1, "\sqrt{V\Sigma V^\text{T}}"']
	  \arrow[from=2-3, to=1-3, "\sqrt{U\Sigma U^\text{T}}"']
  \end{tikzcd}
\end{equation}
where $\tilde{\mathcal{J}}$ is defined as the unique shuffle that makes the diagram commute, which we refer to as the \textit{standard shuffle} of $\mathcal{J}$.
Note that the two vertical arrows are pure deformations, while all horizontal arrows in the second row involve no strain.
Based on this, $w$ is estimated as the sum of the strain energies associated with the deformation of $\mathcal{A}$ by $\sqrt{V\Sigma V^\text{T}}$ and the deformation of $\mathcal{B}$ by $\sqrt{U\Sigma U^\text{T}}^{\,-1}$, namely,
\begin{widetext}
\begin{align}
	w(S)&=\frac{1}{2}\sum_{i,j,k,l=1}^3\left\{Y_{ijkl}^A \left[V(\Sigma^{\frac{1}{2}}-I)V^\text{T}\right]_{ij} \left[V(\Sigma^{\frac{1}{2}}-I)V^\text{T}\right]_{kl}
	+Y_{ijkl}^B \left[U(\Sigma^{-\frac{1}{2}}-I)U^\text{T}\right]_{ij} \left[U(\Sigma^{-\frac{1}{2}}-I)U^\text{T}\right]_{kl}\right\}\label{w12ij}
	\\
	 &=\frac{1}{2}\sum_{a,b=1}^3\sum_{i,j,k,l=1}^3\left[\left(\sqrt{s_a} -1\right)\left(\sqrt{s_b} -1\right)Y_{ijkl}^AV_{ia}V_{ja}V_{kb}V_{lb}
		 +\left(\frac{1}{\sqrt{s_a}} -1\right)\left(\frac{1}{\sqrt{s_b}} -1\right)Y_{ijkl}^BU_{ia}U_{ja}U_{kb}U_{lb}\right]\label{12ab1}
,\end{align}
\end{widetext}
where $Y_{ijkl}^{A}$ and $Y_{ijkl}^{B}$ denote the elastic (stiffness) tensors of $\mathcal{A}$ and $\mathcal{B}$, respectively.
%
%

%
Substituting Eq.~(\ref{12ab1}) into Eq.~(\ref{wsdth}) yields the inequality that $S$ must satisfy, which varies depending on the material's mechanical properties.
It is worth noting that $w$ is a \textit{congruence-class function}, i.e.,
\begin{equation}
	\forall R_A\in G_A',\quad\forall R_B\in G_B',\quad w(R_BSR_A^{-1})=w(S)
.\end{equation}
This can be seen from Lemma~\ref{lem:svd}, where $R_A$ and $R_B$ only contributes to $V$ and $U$, respectively.
As symmetry operations of $\mathcal{A}$ and $\mathcal{B}$, the elastic tensors $Y_{ijkl}^A$ and $Y_{ijkl}^B$ are invariant under $R_A$ and $R_B$, respectively.
Hence, SLMs are pruned by congruence classes under Eq.~(\ref{wsdth}).
\subsubsection{Shuffle distance}\label{sssec:rmsd}
For each SLM, there are still infinitely many possible CSMs---even a single atom $\mathbf{a}\in\mathcal{A}$ has infinitely many choices of its counterpart $\mathcal{J}(\mathbf{a})$.
An intuitive constraint is that $\mathcal{J}(\mathbf{a})$ should not be too far from $\mathbf{a}$.
This is not an energetic consideration, but rather an empirical one, as discussed in Ref.~\citenum{therrien2020minimization}, where the average distance traveled by the atoms is shown to be a key indicator.
However, the average distance between $\mathcal{J}(\mathbf{a})$ and $\mathbf{a}$ goes to infinity unless $S=I$, i.e., $\mathcal{J}$ is a shuffle.
Therefore, the appropriate way to define the ``average distance'' is to consider a shuffle associated with $\mathcal{J}$.
Here, we use the standard shuffle $\tilde{\mathcal{J}}$ defined in Eq.~(\ref{ajbvs}) instead of $\mathcal{J}S^{-1}$ to ensure the metric symmetry between the initial and final structures.
As shorthands, denote by $\tilde L$ the shuffle lattice of $\tilde{\mathcal{J}}$ and let $\tilde{\mathcal{A}}=V\Sigma^{\frac{1}{2}}V^\text{T}\mathcal{A}$ and $\tilde{\mathcal{B}}=V\Sigma^{-\frac{1}{2}}U^\text{T}\mathcal{B}$, which are all uniquely determined by the SLM.
\begin{dfn}[Shuffle Distance]\label{def:rmsd}
	Let $\mathcal{J}\in\operatorname{CSM}(\mathcal{A},\mathcal{B})$ be a CSM with period $\tilde Z$.
	We say that the shuffle distance of $\mathcal{J}$ is
	\begin{equation}\label{dj1zi}
		\hat{d}(\mathcal{J})=\left(\sum_{i=1}^{\tilde Z}\theta_i\left|\tilde{\mathcal{J}}(\tilde{\mathbf{a}}_i)-\tilde{\mathbf{a}}_i\right|^\ell\right)^{1 / \ell}
	,\end{equation}
	where $\ell\ge 1$ specifies the norm, $\{\tilde{\mathbf{a}}_1,\cdots,\tilde{\mathbf{a}}_{\tilde Z}\}$ is an $\tilde L$-motif of $\tilde{\mathcal{A}}$, and $\theta_i>0$ is a normalized weight that only depends on the atomic species of $\tilde{\mathbf{a}}_i$ (e.g., a constant, or atomic mass).
	One may adopt $\ell=2$ for the root-mean-square distance (RMSD), or $\ell=1$ for the arithmetic mean distance.
\end{dfn}
Eq.~\ref{ajbtt} ensures the value of Eq.~(\ref{dj1zi}) to be independent of the choice of the motif so that $\hat{d}$ is well-defined.
However, congruent CSMs may not necessarily have the same $\hat{d}$, which is undesirable since they are equivalent as inputs to NEB-like methods.
To address this issue, we define
\begin{equation}\label{djmin}
	d(\mathcal{J})=\min_{\boldsymbol{\tau}\in\mathbb{R}^3}\left(\sum_{i=1}^{\tilde Z}\theta_i\lvert \tilde{\mathcal{J}}(\tilde{\mathbf{a}}_i)+\boldsymbol{\tau}-\tilde{\mathbf{a}}_i\rvert^\ell \right)^{1 / \ell}
,\end{equation}
which equals the minimum $\hat{d}$ among all CSMs of the form $(+\boldsymbol{\tau}_B)\circ\mathcal{J}\circ(-\boldsymbol{\tau}_A)\in\operatorname{CSM}(\mathcal{A}+\boldsymbol{\tau}_A,\mathcal{B}+\boldsymbol{\tau}_B)$.
Similar to the proof of Theorem~\ref{thm:sym}, one can show that any $\mathcal{J}'$ congruent to $\mathcal{J}$ has a standard shuffle that differs from $(R,\mathbf{t})\tilde{\mathcal{J}}(R,\mathbf{t})^{-1}$ by just a translation, implying that
\begin{equation}\label{djdrt}
	d(\mathcal{J}')
	=d[(R,\mathbf{t})\tilde{\mathcal{J}}(R,\mathbf{t})^{-1}]
	=d(\tilde{\mathcal{J}})=d(\mathcal{J})
,\end{equation}
where we have used the fact that $(R,\mathbf{t})$ is an isometry.
Eq.~(\ref{djdrt}) means that congruent CSMs have the same $d$, so that as long as $d(\mathcal{J})$ exceeds the threshold $d_\text{max}$, we can exclude all CSMs congruent to $\mathcal{J}$.
\subsubsection{Finiteness of candidate crystal-structure matches}
As shown in Fig.~\ref{fig:tree}, the enumeration includes a CSM $\mathcal{J}$ with multiplicity $\mu$ and deformation gradient $S$ if
\begin{align}
	\mu&\le \mu_\text{max},\label{mmmax}\\
	w(S)&\le w_\text{max},\label{wswma}\\
	d(\mathcal{J})&\le d_\text{max}\label{djdma}
.\end{align}
They have all been detailed earlier, except for Eq.~(\ref{mmmax}), which can be regarded as an empirical constraint inferred from Table~\ref{tab:mech}---despite MD and MetaD simulations using far more than $10^3$ atoms, the intricacy of the CSMs they produce remain $\tilde{Z} \sim 10^1$.
We emphasize that $w_\text{max}$ can depend on $\mu$, and $d_\text{max}$ can depend on the SLM of $\mathcal{J}$.
For example, one might want to slacken the constraints on $w_\text{max}$ and $d_\text{max}$ when $\mu$ is small.
One may also set $d_\text{max}$ for each SLM to $110\%$ of the lowest $d$ value among all CSMs with that SLM.
In a word, different subtrees in Fig.~\ref{fig:tree} can be pruned with different thresholds.
We denote by $\operatorname{CSM}^\star (\mathcal{A},\mathcal{B})$ the set of all CSMs in $\operatorname{CSM}(\mathcal{A}, \mathcal{B})$ that satisfy Eqs.~(\ref{mmmax}--\ref{djdma}), and their SLMs by $\operatorname{SLM}^\star (\mathcal{A},\mathcal{B})$.

\begin{thm}\label{thm:finite-slm}
	$\operatorname{SLM}^\star (\mathcal{A},\mathcal{B};\mu)$ is finite regardless of the values of $\mu$ and $w_\text{max}$.
\end{thm}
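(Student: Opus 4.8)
The plan is to reduce the statement to a lattice-point counting argument via the integer-matrix representation of Theorem~\ref{thm:imt}. Since $\operatorname{SLM}^\star(\mathcal{A},\mathcal{B};\mu)$ consists of the SLMs of those CSMs obeying Eqs.~(\ref{mmmax})--(\ref{djdma}), and the strain criterion $w(S)\le w_\text{max}$ depends only on the deformation gradient (hence only on the SLM), any member of $\operatorname{SLM}^\star(\mathcal{A},\mathcal{B};\mu)$ already satisfies $w(S)\le w_\text{max}$. It therefore suffices to prove that
\[
	\left\{(\tilde L_A,\tilde L_B,S)\in\operatorname{SLM}(\mathcal{A},\mathcal{B};\mu)\,\middle|\,w(S)\le w_\text{max}\right\}
\]
is finite. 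By Theorem~\ref{thm:imt} this set is in bijection with the subset of $\operatorname{IMT}(\mu)=\operatorname{HNF}(k_A)\times\operatorname{HNF}(k_B)\times\operatorname{GL}(3,\mathbb{Z})$, where $k_A=\mu\operatorname{lcm}(Z_A,Z_B)/Z_A$ and $k_B=\mu\operatorname{lcm}(Z_A,Z_B)/Z_B$, whose image under Eq.~(\ref{scbhb}) satisfies the strain bound.

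First I would dispose of the two HNF factors. For fixed determinant $k$, an HNF is lower triangular with positive diagonal entries whose product is $k$ and off-diagonal entries strictly bounded by the diagonal entry of their row (Definition~\ref{def:hnf}); there are finitely many factorizations of $k$ and finitely many admissible off-diagonal choices, so $\operatorname{HNF}(k)$ is finite. It then remains to show that, for each of the finitely many fixed pairs $(H_A,H_B)$, only finitely many unimodular $Q$ yield a deformation gradient $S=(C_BH_BQ)(C_AH_A)^{-1}$ with $w(S)\le w_\text{max}$.

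The core step is to turn the energy bound into a bound on the singular values of $S$. Using the expression for $w$ in Eq.~(\ref{12ab1}) together with the positive definiteness of the elastic tensors $Y^A,Y^B$ (guaranteed by the mechanical stability of the two phases), the two contributions to $w(S)$ are bounded below by positive multiples of $\sum_i(\sqrt{s_i}-1)^2$ and $\sum_i(1/\sqrt{s_i}-1)^2$ respectively, with $s_i=\sigma_i(S)$ as in Lemma~\ref{lem:svd}. Consequently $w(S)\to\infty$ as any $s_i\to 0^+$ or $s_i\to\infty$, so the sublevel set $\{w\le w_\text{max}\}$ forces $0<s_\text{min}\le s_i\le s_\text{max}$ for constants depending only on $w_\text{max}$ and the elastic tensors. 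This confines $S$ to a bounded (indeed compact) subset of $\operatorname{GL}(3,\mathbb{R})$, since its operator norm equals $s_1\le s_\text{max}$.

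Finally I would recover $Q$ and count. Inverting Eq.~(\ref{scbhb}) gives $Q=(C_BH_B)^{-1}S(C_AH_A)$; with $C_A,C_B,H_A,H_B$ fixed and $S$ ranging over a bounded set, the entries of $Q$ are uniformly bounded. But $Q\in\mathbb{Z}^{3\times3}$, and a bounded region of $\mathbb{R}^{3\times3}$ contains only finitely many integer matrices. Hence finitely many $Q$ survive for each $(H_A,H_B)$, and summing over the finitely many HNF pairs proves the claim. The main obstacle is the core step: making rigorous that the strain functional $w$ is coercive on the positive-determinant component of $\operatorname{GL}(3,\mathbb{R})$, i.e.\ that the single scalar bound on $w$ simultaneously bounds all singular values away from both $0$ and $\infty$. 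This is precisely where the positive definiteness of the elastic tensors must be invoked, and where care is needed because $w$ is expressed through the Biot-type strains $\sqrt{s_i}-1$ rather than directly through $s_i$.
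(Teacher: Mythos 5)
Your proposal is correct and follows essentially the same route as the paper's proof: reduce to $\operatorname{IMT}(\mu)$ via Theorem~\ref{thm:imt}, note that $\operatorname{HNF}(k)$ is finite, use coercivity of $w$ to bound the singular values of $S$ (the paper only needs the upper bound on $\sigma_1(S)$), and then conclude that $Q=(C_BH_B)^{-1}S(C_AH_A)$ has bounded integer entries, hence finitely many values. Your explicit treatment of the coercivity step via positive definiteness of the elastic tensors is a welcome elaboration of what the paper dispatches in one clause, but it is not a different argument.
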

\begin{proof}[\indent Proof]
	There exists a $\sigma_\text{max}$ such that $\sigma_1(S) < \sigma_\text{max}$ implies Eq.~(\ref{wswma}); otherwise, infinite strain would leads to infinite strain energy.
	We only need to prove that only a finite number of $(H_A,H_B,Q)\in\operatorname{IMT}(\mu)$ satisfy $\sigma_1(S)<\sigma_\text{max}$.
	Since $\operatorname{HNF}(k)$ is finite, the number of possible values of $(H_A,H_B)$ is already finite.
	It is sufficient to show that given $(H_A,H_B)$, the number of possible values of $Q$ is finite.
	This is due to the fact that the absolute value of any matrix element of $Q$ does not exceed
	\begin{align}
		\sigma_1(Q)
		&=\sigma_1(H_B^{-1}C_B^{-1}SC_AH_A)\label{s1qs1}
	      \\&\le\sigma_1(H_B^{-1})\sigma_1(C_B^{-1})\sigma_1(C_A)\sigma_1(H_A)\,\sigma_\text{max}
	,\end{align}
	where Eq.~(\ref{s1qs1}) is derived from Eq.~(\ref{scbhb})
\end{proof}
\begin{thm}\label{thm:finite-shuffle}
	For any given SLM and $d_\text{max}$, the number of noncongruent CSMs that have this SLM and satisfy Eq.~(\ref{djdma}) is finite.
\end{thm}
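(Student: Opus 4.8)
The plan is to argue entirely in the standard-shuffle picture of Definition~\ref{def:rmsd}. Fixing the SLM fixes the half-distorted structures $\tilde{\mathcal{A}},\tilde{\mathcal{B}}$ and the shuffle lattice $\tilde L$, and identifies every CSM $\mathcal{J}$ with this SLM with its standard shuffle $\tilde{\mathcal{J}}$. After choosing once and for all an $\tilde L$-motif $\{\tilde{\mathbf{a}}_1,\dots,\tilde{\mathbf{a}}_{\tilde Z}\}$ of $\tilde{\mathcal{A}}$, I would record the displacement vectors $\mathbf{d}_i=\tilde{\mathcal{J}}(\tilde{\mathbf{a}}_i)-\tilde{\mathbf{a}}_i$. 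By Theorem~\ref{thm:pct}, $\tilde{\mathcal{J}}$ is determined by the images $\tilde{\mathcal{J}}(\tilde{\mathbf{a}}_i)\in\tilde{\mathcal{B}}$ together with the induced species-preserving bijection of $\tilde L$-classes, and the latter ranges over only finitely many permutations. Hence the sole source of infinitude is the lattice freedom in the $\mathbf{d}_i$, and the whole problem reduces to bounding that freedom modulo congruence.

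First I would use the distance threshold to control the \emph{shape} of the displacement cloud. Writing $d(\mathcal{J})^\ell=\min_{\boldsymbol{\tau}}\sum_i\theta_i|\mathbf{d}_i+\boldsymbol{\tau}|^\ell$ from Eq.~(\ref{djmin}) and letting $\boldsymbol{\tau}^\star$ be a minimizer, the constraint $d(\mathcal{J})\le d_\text{max}$ forces every summand to obey $\theta_i|\mathbf{d}_i+\boldsymbol{\tau}^\star|^\ell\le d_\text{max}^\ell$, so $|\mathbf{d}_i+\boldsymbol{\tau}^\star|\le(d_\text{max}^\ell/\theta_i)^{1/\ell}$. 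Since the weights $\theta_i$ depend only on the SLM, the pairwise differences $|\mathbf{d}_i-\mathbf{d}_j|$ are bounded by a constant independent of $\mathcal{J}$. The continuous translation $\boldsymbol{\tau}^\star$ absorbs the overall position of the cloud, but its diameter is pinned down.

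Next I would eliminate the remaining global degree of freedom by an \emph{actual} congruence. Because $\tilde L\subseteq L_{\tilde{\mathcal{B}}}$, translating $\tilde{\mathcal{B}}$ by any $\mathbf{t}\in\tilde L$ is a symmetry operation, so by Theorem~\ref{thm:sym} (taking $R_A=R_B=I$) the shuffle with displacements $\{\mathbf{d}_i+\mathbf{t}\}$ is congruent to $\mathcal{J}$, and such shifts leave the pairwise differences untouched. I would choose $\mathbf{t}$ so that $\mathbf{d}_1+\mathbf{t}$ lies in a fixed fundamental domain of $\tilde L$; then $|\mathbf{d}_1+\mathbf{t}|$ is at most that domain's diameter, and the triangle inequality together with the spread bound yields $|\mathbf{d}_i+\mathbf{t}|\le r$ for all $i$, with $r$ depending only on the SLM and $d_\text{max}$. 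Thus, up to this $\tilde L$-translation congruence, every admissible CSM has a representative whose image $\tilde{\mathcal{J}}(\tilde{\mathbf{a}}_i)$ lies within distance $r$ of $\tilde{\mathbf{a}}_i$.

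Finally I would invoke local finiteness: since $\tilde{\mathcal{B}}$ is a crystal structure, Definition~\ref{def:l} guarantees a positive minimum spacing, so any ball of radius $r$ contains finitely many atoms. Each $\tilde{\mathcal{J}}(\tilde{\mathbf{a}}_i)$ therefore has finitely many possibilities, and combined with the finitely many class-bijections this leaves finitely many shuffles modulo the $\tilde L$-translation congruence; as ordinary congruence is a coarser equivalence, the number of congruence classes is a fortiori finite. The main obstacle is the careful bookkeeping of two distinct translational freedoms that must not be conflated: the \emph{continuous} $\boldsymbol{\tau}$ concealed in the definition of $d$, which merely controls the cloud's shape and is what renders $d$ congruence-invariant (Eq.~(\ref{djdrt})), versus the \emph{discrete} $\tilde L$-translation, which is a genuine congruence and is what pins down the cloud's absolute position. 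Keeping these separate is what converts the a priori infinite lattice freedom into a finite count.
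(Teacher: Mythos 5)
Your proof is correct, and it rests on the same two pillars as the paper's: the permutation part of the PCT ranges over finitely many values, and the congruence under translations by shuffle-lattice vectors (the paper's normalization $\mathbf{t}_1=\mathbf{0}$, your reduction of $\mathbf{d}_1$ into a fundamental domain of $\tilde L$) kills the one genuinely free global translation. Where you differ is in how the residual lattice freedom is bounded. The paper argues by contradiction: an infinite family forces some $\lvert\mathbf{t}_j'\rvert$ to be unbounded, and an explicit two-term minimization over $\boldsymbol{\tau}$ (reduced to $\min_{x\ge0}(\theta_1x^\ell+\theta_j\lvert x-y\rvert^\ell)$) shows $d$ then blows up. You instead observe directly that at the minimizer $\boldsymbol{\tau}^\star$ every nonnegative summand of $d^\ell$ is individually at most $d_\text{max}^\ell$, so each $\lvert\mathbf{d}_i+\boldsymbol{\tau}^\star\rvert\le(d_\text{max}^\ell/\theta_i)^{1/\ell}$ and the pairwise spread is bounded; after the lattice normalization you finish by local finiteness of $\tilde{\mathcal{B}}$ in a ball of fixed radius. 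Your route is shorter and avoids the case analysis on $\ell$, and it yields an explicit counting bound rather than a bare contradiction; the paper's version, in exchange, produces the quantitative lower bound on $d$ in terms of $y$ that directly motivates the flood-fill step in \texttt{Fill()}. One small cleanup: Definition~\ref{def:l} gives a minimum spacing for \emph{lattice} vectors, not for atoms of $\tilde{\mathcal{B}}$; the local finiteness you need (finitely many atoms in any ball) follows from that together with the finiteness of $Z_B$, as in Lemma~\ref{lem:finite-z}, so you should cite that combination rather than Definition~\ref{def:l} alone.
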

\begin{proof}[\indent Proof]
	We assume by contradiction that there are infinitely many noncongruent CSMs in $\operatorname{CSM}^\star (\mathcal{A},\mathcal{B})$ that have $(\tilde L_A,\tilde L_B,S)$.
	Among them, each $\mathcal{J}_{p,\mathbf{t}_1,\mathbf{t}_2,\cdots,\mathbf{t}_{\tilde Z}}$ is congruent to a $\mathcal{J}_{p,\mathbf{0},\mathbf{t}_2',\cdots,\mathbf{t}_{\tilde Z}'}$, where $\mathbf{t}_i'=\mathbf{t}_i-\mathbf{t}_1$.
	Since $p$ can take only a finite number of values, there are infinitely many PCTs having the same $p$ and different $(\mathbf{0},\mathbf{t}_2',\cdots,\mathbf{t}_{\tilde Z}')$ that satisfy
	\begin{equation}\label{djp0t}
		d(\mathcal{J}_{p,\mathbf{0},\mathbf{t}_2',\cdots,\mathbf{t}_{\tilde Z}'})\le d_\text{max}
	.\end{equation}
	These $(\mathbf{0},\mathbf{t}_2',\cdots,\mathbf{t}_{\tilde Z}')$ form an infinite subset of $\tilde L_B^{\tilde Z}$, so there exist $j$ such that $\lvert \mathbf{t}_j'\rvert$ is unbounded; otherwise, the subset would be finite.
	Let
	\begin{equation}
	  y=\left|\tilde{\mathbf{b}}_{p(j)}+\tilde{\mathbf{t}}_j'-\tilde{\mathbf{b}}_{p(1)}-\tilde{\mathbf{a}}_j+\tilde{\mathbf{a}}_1\right|
	,\end{equation}
	which is also unbounded. We have 
	\begin{widetext}
	\begin{align}
		[d(\mathcal{J}_{p,\mathbf{0},\mathbf{t}_2',\cdots,\mathbf{t}_{\tilde Z}'})]^\ell
		&=\min_{\boldsymbol{\tau}\in\mathbb{R}^3}\sum_{i=1}^{\tilde Z}\theta_i\left| \tilde{\mathbf{b}}_{p(i)}+\tilde{\mathbf{t}}_i'+\boldsymbol{\tau}-\tilde{\mathbf{a}}_i\right|^\ell
	      \\&\ge\min_{\boldsymbol{\tau}\in\mathbb{R}^3}\left(\theta_1\left| \tilde{\mathbf{b}}_{p(1)}+\boldsymbol{\tau}-\tilde{\mathbf{a}}_1\right|^\ell + \theta_j\left| \tilde{\mathbf{b}}_{p(j)}+\tilde{\mathbf{t}}_j'+\boldsymbol{\tau}-\tilde{\mathbf{a}}_j\right|^\ell\right)
	      \\&=\min_{\boldsymbol{\tau}\in\mathbb{R}^3}\left(\theta_1\left| \tilde{\mathbf{b}}_{p(1)}+\boldsymbol{\tau}-\tilde{\mathbf{a}}_1\right|^\ell + \theta_j\left| \left(\tilde{\mathbf{b}}_{p(1)}+\boldsymbol{\tau}-\tilde{\mathbf{a}}_1\right)+\left(\tilde{\mathbf{b}}_{p(j)}+\tilde{\mathbf{t}}_j'-\tilde{\mathbf{b}}_{p(1)}-\tilde{\mathbf{a}}_j+\tilde{\mathbf{a}}_1\right)\right|^\ell\right)
	      \\&\ge\min_{\boldsymbol{\tau}\in\mathbb{R}^3}\left(\theta_1\left| \tilde{\mathbf{b}}_{p(1)}+\boldsymbol{\tau}-\tilde{\mathbf{a}}_1\right|^\ell + \theta_j\bigg| \left|\tilde{\mathbf{b}}_{p(1)}+\boldsymbol{\tau}-\tilde{\mathbf{a}}_1\right|-\left|\tilde{\mathbf{b}}_{p(j)}+\tilde{\mathbf{t}}_j'-\tilde{\mathbf{b}}_{p(1)}-\tilde{\mathbf{a}}_j+\tilde{\mathbf{a}}_1\right|\bigg|^\ell\right)
	      \\&=\min_{x\ge 0}\left(\theta_1 x^\ell + \theta_j\lvert x-y\rvert^\ell\right)
	      =\begin{cases}
		      y\min\{\theta_1,\theta_j\},&\text{if}~\ell=1\\
		      y^\ell \left(\theta_1^{-\ell+1}+\theta_j^{-\ell+1}\right)^{-\ell+1},&\text{if}~\ell>1
	      \end{cases}\label{ymint}
	.\end{align}
	\end{widetext}
	Since $y$ is unbounded, Eq.~(\ref{ymint}) can be arbitrarily large, which contradicts Eq.~(\ref{djp0t}).
\end{proof}
\begin{thm}\label{thm:finite-csm}
	The number of noncongruent CSMs in $\operatorname{CSM}^\star (\mathcal{A},\mathcal{B})$ is finite regardless of $\mu_\text{max}$, $w_\text{max}$, and $d_\text{max}$.
\end{thm}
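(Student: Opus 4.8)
The plan is to stack the two finiteness results---Theorem~\ref{thm:finite-slm} at the SLM level and Theorem~\ref{thm:finite-shuffle} at the shuffle level---and glue them using the completeness bookkeeping of Theorem~\ref{thm:complete}, following the two-tier structure of Fig.~\ref{fig:tree}. Before counting, I would first record that $\operatorname{CSM}^\star(\mathcal{A},\mathcal{B})$ is a union of \emph{entire} congruence classes, so that ``the number of noncongruent CSMs'' is a well-defined count of classes and any CSM may be freely replaced by a congruent representative. This holds because all three defining constraints Eqs.~(\ref{mmmax}--\ref{djdma}) are congruence invariants: the multiplicity $\mu$ is an invariant of the SLM by Theorem~\ref{thm:sym}, while $w$ is invariant under $R_A,R_B$ (via Lemma~\ref{lem:svd}) and $d$ satisfies Eq.~(\ref{djdrt}). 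The same observation shows that $\operatorname{SLM}^\star(\mathcal{A},\mathcal{B})$ is closed under SLM-congruence.

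Next I would bound the SLM level. By definition every $\mathcal{J}\in\operatorname{CSM}^\star(\mathcal{A},\mathcal{B})$ has finite multiplicity with $\mu\le\mu_\text{max}$, so its SLM lies in $\operatorname{SLM}^\star(\mathcal{A},\mathcal{B};\mu)$ for some $\mu\in\{1,\dots,\mu_\text{max}\}$. Hence $\operatorname{SLM}^\star(\mathcal{A},\mathcal{B})=\bigcup_{\mu=1}^{\mu_\text{max}}\operatorname{SLM}^\star(\mathcal{A},\mathcal{B};\mu)$ is a finite union of sets each finite by Theorem~\ref{thm:finite-slm}, and is therefore finite. In particular it splits into finitely many congruence classes, and I may choose a finite complete subset $X\subset\operatorname{SLM}^\star(\mathcal{A},\mathcal{B})$.

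For the shuffle level I would apply Theorem~\ref{thm:finite-shuffle} once to each of the finitely many $x\in X$: every CSM in $\operatorname{CSM}^\star(\mathcal{A},\mathcal{B})$ with SLM $x$ obeys the distance bound Eq.~(\ref{djdma}) attached to $x$, so the noncongruent such CSMs are finite in number and furnish a finite complete subset $Y_x$. Re-running the argument of Theorem~\ref{thm:complete} on the congruence-closed set $\operatorname{CSM}^\star(\mathcal{A},\mathcal{B})$---given $\mathcal{J}$ there, its SLM is congruent to some $x\in X$, so by Theorem~\ref{thm:sym} a congruent $\mathcal{J}''$ has SLM exactly $x$; congruence-invariance of the constraints keeps $\mathcal{J}''\in\operatorname{CSM}^\star(\mathcal{A},\mathcal{B})$, whence $\mathcal{J}''$, and thus $\mathcal{J}$, is congruent to an element of $Y_x$---shows that $\bigcup_{x\in X}Y_x$ is complete in $\operatorname{CSM}^\star(\mathcal{A},\mathcal{B})$. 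Being a finite union of finite sets, it is finite, so $\operatorname{CSM}^\star(\mathcal{A},\mathcal{B})$ has only finitely many congruence classes.

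I expect the one delicate point---more a matter of care than a genuine obstacle---to be carrying the congruence relation consistently between the two levels: one must verify that replacing a CSM's SLM by its representative $x\in X$ and invoking Theorem~\ref{thm:sym} produces a CSM that is \emph{still} in $\operatorname{CSM}^\star(\mathcal{A},\mathcal{B})$, which is precisely why the congruence-invariance of $\mu$, $w$, and $d$ was isolated at the outset. With that in hand, the theorem reduces to the statement that a finite union (over $\mu$ and over $X$) of finite sets is finite.
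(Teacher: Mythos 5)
Your proposal is correct and follows essentially the same route as the paper, which simply invokes the tree structure of Fig.~\ref{fig:tree} and cites Theorems~\ref{thm:finite-slm} and \ref{thm:finite-shuffle} to bound the breadth at each level. You fill in the details the paper leaves implicit---the congruence-invariance of $\mu$, $w$, and $d$ and the gluing via the argument of Theorem~\ref{thm:complete}---but the underlying decomposition (finitely many multiplicities, finitely many SLMs per multiplicity, finitely many noncongruent shuffles per SLM) is identical.
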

\begin{proof}[\indent Proof]
	Replace the root node ``all CSMs'' in Fig.~\ref{fig:tree} with ``all noncongruent CSMs''.
	Theorems~\ref{thm:finite-slm} and \ref{thm:finite-shuffle} ensure that the breadth of the pruned tree is finite at each level.
\end{proof}
\section{Algorithms}\label{sec:algs}
To enumerate all noncongruent CSMs in $\operatorname{CSM}^\star (\mathcal{A},\mathcal{B})$, Theorem~\ref{thm:complete} allows us to enumerate all noncongruent SLMs in $\operatorname{SLM}^\star (\mathcal{A},\mathcal{B})$ first, and then all noncongruent CSMs within each SLM.
With Theorems~\ref{thm:imt} and \ref{thm:pct}, we only need to generate a complete subset of
\begin{multline}\label{imtmh}
	\operatorname{IMT}^\star (\mu)=\big\{(H_A,H_B,Q)\in\operatorname{IMT}(\mu)\,\big|
	\\w(C_BH_BQH_A^{-1}C_A^{-1})\le w_\text{max}\big\}
\end{multline}
for $\mu=1,\cdots,\mu_\text{max}$, and a complete subset of
\begin{multline}\label{pctla}
	\operatorname{PCT}^\star (\tilde L_A,\tilde L_B,S)=\big\{(p,\mathbf{t}_1,\cdots,\mathbf{t}_{\tilde Z})
	\\\in\operatorname{PCT}(\tilde L_A,\tilde L_B,S)\,\big|\,d(\mathcal{J}_{p,\mathbf{t}_1,\cdots,\mathbf{t}_{\tilde Z}})\le d_\text{max}\big\}
\end{multline}
for each $(\tilde L_A,\tilde L_B,S)\in\operatorname{SLM}^\star (\mathcal{A},\mathcal{B};\mu)$.
In this section, we will introduce Algorithm~\ref{alg:imt} for IMTs as well as Algorithms~\ref{alg:pct} and \ref{alg:best} for PCTs, and then apply them on some typical SSPTs.
These algorithms do not involve electronic structure or atomic force field calculations, i.e., they are purely geometric.
The pseudocodes are presented in an easy-to-understand form, but the actual implementation can be more efficient, e.g., all loops in Algorithm~\ref{alg:imt} can be parallelized.
\subsection{Enumerating sublattice matches}\label{ssec:imt}
Let's first consider how to produce the entire $\operatorname{IMT}^\star(\mu)$ rather than its complete subset.
It is easy to enumerate all possible values of $H_A$ and $H_B$, so the difficulty lies solely in computing all possible $Q$ for a given HNF pair $(H_A,H_B)$, i.e., producing $\{Q\,|\,(H_A,H_B,Q)\in\operatorname{IMT}^\star (\mu)\}$, which we denote by $\{Q\}$ for simplicity.
To this end, we first sample the 9D region
\begin{equation}
	\Omega_{S_0}=\left\{S_0\in\mathbb{R}^{3\times 3}\,\middle|\,w(S_0)\le w_\text{max}\right\}
,\end{equation}
producing a finite set consisting of \textit{trial} matrices, denoted as $\{S_0\}\subset\Omega_{S_0}$.
Then, for each trial matrix $S_0$, we compute the closest IMTs by (1) substituting $S_0$ into Eq.~(\ref{scbhb}), (2) solve for $Q$ for each pair $(H_A,H_B)$, and (3) round each matrix element of $Q$ to the nearest integer.
In other words, we apply the linear mapping
\begin{align}
	\phi_{H_A,H_B}\colon\mathbb{R}^{3\times 3}&\to\mathbb{R}^{3\times 3},
	\\S_0&\mapsto H_B^{-1}C_B^{-1}S_0C_AH_A,
\end{align}
to the set $\{S_0\}$.
As long as $\{S_0\}$ is sampled dense enough, its image under $\phi_{H_A,H_B}$ will intersect each $Q$'s ``preimage of rounding''
\begin{equation}\label{round}
	\operatorname{round}^{-1}(Q)=\left\{Q+X\,\middle|\,X\in\left(-\frac{1}{2},\frac{1}{2}\right)^{3\times 3}\right\}
,\end{equation}
a 9D hypercube centered at $Q$ [Fig.~\ref{fig:alg1}(b)].
In that case, we can extract $\{Q\}$ from $\operatorname{round}[\phi_{H_A,H_B}(\{S_0\})]$.
\begin{figure}[t!]
  \centering
  \includegraphics[width=\linewidth]{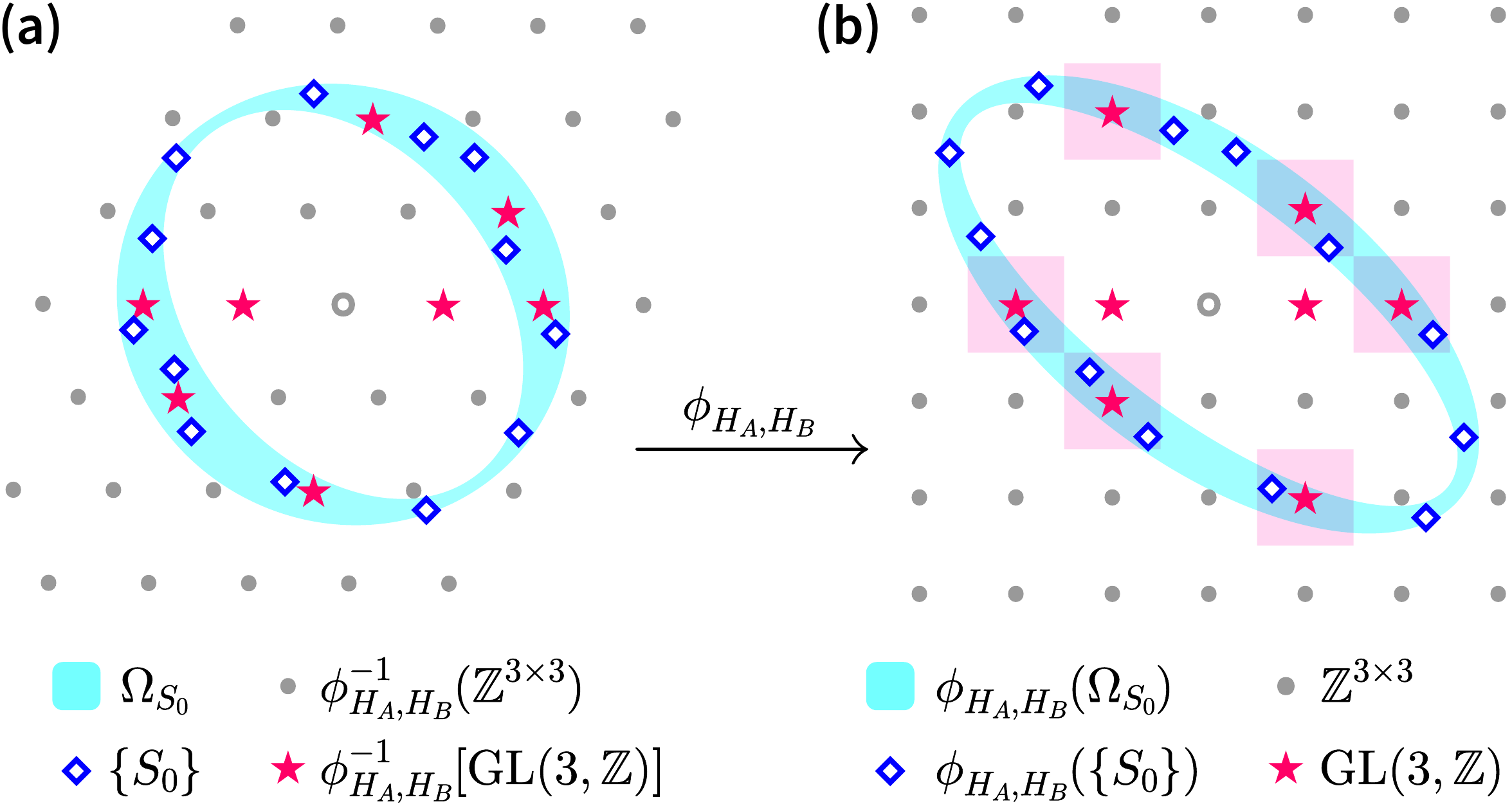}
  \caption{A schematic diagram showing how Algorithm~\ref{alg:imt} works, where 9D regions are drawn as 2D shapes for intuitive understanding.
	  (a) $\Omega_{S_0}$ (cyan) is a neighborhood of $\operatorname{SO}(3)$ determined solely by $Y^A_{ijkl}$, $Y^B_{ijkl}$ and $w_\text{max}$.
	  The trail matrices $\{S_0\}$ (blue diamonds) are generated in $\Omega_{S_0}$, approximating the exact deformation gradients (red stars).
	  (b) $\phi_{H_A,H_B}$ is applied to $\{S_0\}$, and then each $\phi_{H_A,H_B}(S_0)$ is rounded to the nearest integer matrix (gray dots).
	  Each element of $\{Q\}$ (red stars in cyan region) is produced if and only if a $S_0$ lies within its ``preimage of rounding'' (pink squares).
	  Currently, the algorithm should continue to generate a denser $\{S_0\}$ to produce the topmost $Q$ in the figure.
  }
	\label{fig:alg1}
\end{figure}
The only remaining problem is how to tell whether the set $\{S_0\}$ is dense \textit{enough} such that $\{Q\}$ is covered by its image under $\phi_{H_A,H_B}$.
Our approach is to generate $S_0$ one by one until $i_\text{th}$ consecutive iterations fail to produce any new IMT, as described in Algorithm~\ref{alg:imt}.
The value of $i_\text{th}$ can be derived by hypothesis testing.
Whenever the algorithm starts or a new IMT is produced, we initiate a test of the two opposing hypotheses, namely
\begin{align*}
	H_0\colon&\text{The current }\mathsf{myimts}\text{ is complete},\\
	H_1\colon&\text{The current }\mathsf{myimts}\text{ is incomplete}
.\end{align*}
Then, when we perform $i$ iterations without producing any new IMT, the conditional probability of this event under $H_0$ is $1$, whereas under $H_1$ it is only
\begin{equation}
	\Pr(i\,|\,H_1)\le (1-\eta)^i
,\end{equation}
where $\eta$ denotes the minimum probability of a given IMT being generated in a single iteration.
If one would reject $H_1$ when $\Pr(i \,|\, H_1) < \epsilon$, it is sufficient to set
\begin{equation}\label{ithln}
	i_\text{th} = \frac{\ln \epsilon}{\ln (1 - \eta)}.
\end{equation}
Note that the probability $\epsilon$ of incorrectly asserting completeness decreases exponentially with increasing $i_\text{th}$.
In this sense, we claim that Algorithm~\ref{alg:imt} is essentially an \textit{exhaustive} search algorithm.
\begin{algorithm}[t]
	\DontPrintSemicolon
	\caption{Listing all noncongruent IMTs}
	\label{alg:imt}
	\SetArgSty{}
	\SetKwData{MyIMTs}{myimts}
	\SetKwFunction{Standardize}{Standardize}
	\KwIn{$C_A$, $C_B$, $Z_A$, $Z_B$, $\mu$, $w_\text{max}$}
	\vspace{1pt}
	initialize $\MyIMTs\leftarrow\{\}$;~$i\leftarrow 0$\;
	\While{$i\le i_\text{th}$}{
		generate a random $S_0\in\Omega_{S_0}$\;
		\ForEach{$H_A\in\operatorname{HNF}\!\left(\frac{\mu\operatorname{lcm}(Z_A,Z_B)}{Z_A}\right)$}{
			\ForEach{$H_B\in\operatorname{HNF}\!\left(\frac{\mu\operatorname{lcm}(Z_A,Z_B)}{Z_B}\right)$}{
			$Q\leftarrow\operatorname{round}(H_B^{-1}C_B^{-1}S_0C_AH_A)$\;
			 $S\leftarrow C_BH_BQH_A^{-1}C_A^{-1}$\;
			 \If{$Q\in\operatorname{GL}(3,\mathbb{Z})$ {\bf and} $w(S)\le  w_\text{max}$}{
				 $x\leftarrow$ \Standardize{$H_A,H_B,Q$}\;
				 \If{$x\notin\MyIMTs$}{
				 add $x$ to \MyIMTs\;
				 $i\leftarrow 0$\;
			 }
				 \lElse{$i\leftarrow i+1$}
			}
		}
		}
	}
	\KwRet{\MyIMTs}
\end{algorithm}
Recall that our goal is to list all noncongruent elements in $\operatorname{IMT}^\star(\mu)$.
We can treat each IMT as its congruence class---whenever an IMT is generated, if it is congruent to some element already in $\mathsf{myimts}$, it is considered not new and $i$ increases.
However, a better way might be using a \texttt{Standardize()} function to convert each IMT into the representative of its congruence class (e.g., the lexicographically smallest one), as shown in the pseudocode.
Either way will replace $1-\eta$ in Eq.~(\ref{ithln}) with the minimum probability that no IMT from a given congruence class is generated in a single iteration.
A conservative estimate of $i_\text{th}$ is provided in Appendix~\ref{append:alg}, while in practice, $i_\text{th}$ can be dynamically adjusted to accelerate the algorithm.
\subsection{Enumerating shuffles}\label{ssec:pct}
To list all noncongruent PCTs, Algorithm~\ref{alg:pct} involves three functions: (1) \texttt{OptimizePCT()} for finding the PCT with the minimum $d$, (2) \texttt{Fill()} for generating all noncongruent PCTs with a given $p$, and (3) \texttt{Split()} for excluding previously enumerated values of $p$.
All we need to do is to iteratively invoke them, as shown in Fig.~\ref{fig:alg2}.
Next, we will introduce the specific algorithms for implementing these three functions separately.
\begin{algorithm}[t]
 	\DontPrintSemicolon
	\caption{Listing all noncongruent PCTs}
 	\label{alg:pct}
 	\SetArgSty{}
 	\SetKwData{MyPCTs}{mypcts}
	\SetKwData{MyConstraints}{myconstraints}
	\SetKwData{Constraint}{constraint}
	\SetKwData{NewConstraint}{newcons}
	\SetKwFunction{OptimizePCT}{OptimizePCT}
	\SetKwFunction{Fill}{Fill}
	\SetKwFunction{Split}{Split}
	\KwIn{$\mathcal{A}$, $\mathcal{B}$, $(\tilde L_A,\tilde L_B,S)$, $d_\text{max}$}
	\vspace{1pt}
	initialize $\MyPCTs\leftarrow\{\}$;~~$\MyConstraints\leftarrow$ empty queue\;
	enqueue $\{\}$ to \MyConstraints\;
	\While{\MyConstraints is not empty}{
		dequeue \Constraint from \MyConstraints\;
		$d,(p,\mathbf{t}_1,\cdots,\mathbf{t}_{\tilde Z})\leftarrow$ \OptimizePCT{\Constraint}\;
		\If{$d\le d_\text{max}$}{
			\ForEach{$(\mathbf{t}_2',\cdots,\mathbf{t}_{\tilde Z}')\in$ \Fill{$p$}}{
				add $(p,\mathbf{0},\mathbf{t}_2',\cdots,\mathbf{t}_{\tilde Z}')$ to \MyPCTs\;
			}
			\ForEach{$\NewConstraint\in$ \Split{$\Constraint, p$}}{
				enqueue \NewConstraint to \MyConstraints\;
			}
		}
	}
	\KwRet{\MyPCTs}
\end{algorithm}
\begin{figure}[b]
  \centering
  \includegraphics[width=0.8\linewidth]{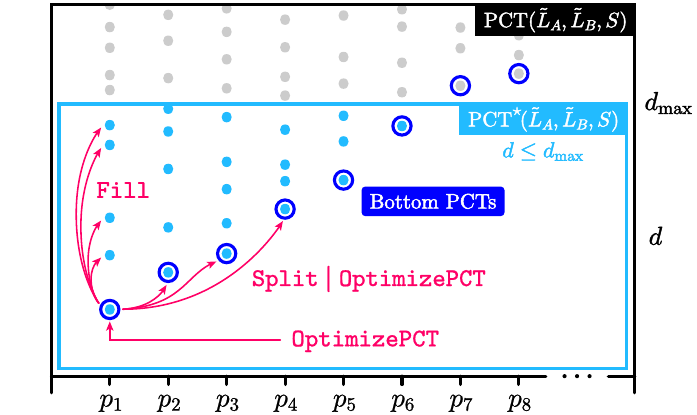}
  \caption{A schematic diagram showing how Algorithm~\ref{alg:pct} works, where each point represents a PCT.
	  First, \texttt{OptimizePCT()} is invoked to obtain the PCT with the smallest $d$.
	  Then, \texttt{Fill()} generates all PCTs that share the same permutation with it, while \texttt{Split()} produces a set of constraints to disable this permutation in subsequent computations.
	  By passing these constraints as parameters to \texttt{OptimizePCT()}, other bottom PCTs (dots in blue circles) can be generated.
  }
	\label{fig:alg2}
\end{figure}
\begin{algorithm*}[t]
 	\DontPrintSemicolon
	\caption{Optimizing PCT under a given constraint (producing the PCT with least shuffle distance)}
 	\label{alg:best}
 	\SetArgSty{}
	\SetKwProg{Def}{def}{}{}
	\SetKwData{Constraint}{constraint}
	\SetKwData{ReturnPCT}{returnpct}
	\SetKwFunction{Motifs}{Motifs}
	\SetKwFunction{ShuffleDistance}{ShuffleDistance}
	\SetKwFunction{BestAssignment}{BestAssignment}
	\KwIn{\Constraint, and all inputs of Algorithm~\ref{alg:pct}}
	\vspace{1pt}
	$\mathbf{a}_1^{(1)},\cdots,\mathbf{a}_{\tilde Z_1}^{(1)},\mathbf{a}_1^{(2)},\cdots,\mathbf{a}_{\tilde Z_2}^{(2)},\cdots,\mathbf{a}_1^{(n)},\cdots,\mathbf{a}_{\tilde Z_n}^{(n)}, \mathbf{b}_1^{(1)},\cdots,\mathbf{b}_{\tilde Z_1}^{(1)},\mathbf{b}_1^{(2)},\cdots,\mathbf{b}_{\tilde Z_2}^{(2)},\cdots,\mathbf{b}_1^{(n)},\cdots,\mathbf{b}_{\tilde Z_n}^{(n)}\leftarrow$ \Motifs{$\tilde L_A,\tilde L_B,S$}\;
	\Def{\ShuffleDistance{$\boldsymbol{\tau},\ReturnPCT=\mathit{false}$}}{
		\For{$\alpha=1$ \KwTo $n$}{
			\For{$i=1$ \KwTo $\tilde Z_\alpha$}{
				\For{$j=1$ \KwTo $\tilde Z_\alpha$}{
					$\mathbf{t}_{ij}^{(\alpha)}\leftarrow \operatorname{argmin}_{\mathbf{t}\in\tilde L_B}\left\lvert \tilde{\mathbf{b}}_{j}^{(\alpha)}+\tilde{\mathbf{t}}+\boldsymbol{\tau}-\tilde{\mathbf{a}}_{i}^{(\alpha)}\right\rvert^\ell$;\qquad $D_{ij}^{(\alpha)}\leftarrow \left\lvert \tilde{\mathbf{b}}_{j}^{(\alpha)}+\tilde{\mathbf{t}}_{ij}^{(\alpha)}+\boldsymbol{\tau}-\tilde{\mathbf{a}}_{i}^{(\alpha)}\right\rvert^\ell$\;
				}
			}
			$p^{(\alpha)}\leftarrow$ \BestAssignment{$D^{(\alpha)},\Constraint$};\qquad \lFor{$i=1$ \KwTo $\tilde Z_\alpha$}{$\mathbf{t}_{i}^{(\alpha)}\leftarrow \mathbf{t}_{i,p^{(\alpha)}(i)}^{(\alpha)}$}
		}
		$d\leftarrow \left(\sum_{\alpha=1}^n\sum_{i=1}^{\tilde Z_\alpha}\theta_i^{(\alpha)} D_{i,p^{(\alpha)}(i)}^{(\alpha)}\right)^{1 / \ell}$\;
		\lIf{\ReturnPCT}{\KwRet $d,\Big((p^{(1)}|p^{(2)}|\cdots|p^{(n)}),\mathbf{t}_{1}^{(1)},\cdots,\mathbf{t}_{\tilde Z_1}^{(1)},\mathbf{t}_{1}^{(2)},\cdots,\mathbf{t}_{\tilde Z_2}^{(2)},\cdots,\mathbf{t}_{1}^{(n)},\cdots,\mathbf{t}_{\tilde Z_n}^{(n)}\Big)$}
		\lElse{\KwRet $d$}
	}
	$\boldsymbol{\tau}^*\leftarrow \operatorname{argmin}_{\boldsymbol{\tau}\in \mathbb{R}^3}$ \ShuffleDistance{$\boldsymbol{\tau}$};\qquad \KwRet \ShuffleDistance{$\boldsymbol{\tau}^*, \ReturnPCT=\mathit{true}$}
\end{algorithm*}
We use \texttt{OptimizePCT(\textsf{constraint})} to generate the PCT with the smallest $d$ under the constraints specified by \textsf{constraint}.
In each iteration step, the overall translation $\boldsymbol{\tau}$ is fixed, and the best translation $\mathbf{t}_{ij}\in\tilde L_B$ for each atom pair $(\mathbf{a}_i, \mathbf{b}_j)$ is computed.
Then, the best permutation $p$ under this $\boldsymbol{\tau}$ is determined using \texttt{BestAssignment()} (e.g., the Jonker-Volgenant algorithm~\cite{jonker1987shortest,crouse2016implementing}), so that $\mathbf{t}_1,\cdots,\mathbf{t}_{\tilde Z}$ and
\begin{equation}
	\hat{d}_0(\boldsymbol{\tau})=\min_{p,\mathbf{t}_1,\cdots,\mathbf{t}_{\tilde Z}}\hat{d}[(+\boldsymbol{\tau})\circ\mathcal{J}_{p,\mathbf{t}_1,\cdots,\mathbf{t}_{\tilde Z}}]
\end{equation}
are also obtained.
This process is iterated with different $\boldsymbol{\tau}$ values until the global minimum of $\hat{d}_0(\boldsymbol{\tau})$ is found, where strategies such as basin-hopping are employed~\cite{wales1997global}.
The final iteration uses $\boldsymbol{\tau}^*=\operatorname{argmin}_{\boldsymbol{\tau}}\hat{d}_0(\boldsymbol{\tau})$, at which point the PCT given by \texttt{BestAssignment()} is
\begin{align}
	&\quad\,\mathop{\operatorname{argmin}}_{p,\mathbf{t}_1,\cdots,\mathbf{t}_{\tilde Z}}\hat{d}[(+\boldsymbol{\tau}^*)\circ\mathcal{J}_{p,\mathbf{t}_1,\cdots,\mathbf{t}_{\tilde Z}}]\\
	&=\mathop{\operatorname{argmin}}_{p,\mathbf{t}_1,\cdots,\mathbf{t}_{\tilde Z}}\min_{\boldsymbol{\tau}\in\mathbb{R}^3}\hat{d}[(+\boldsymbol{\tau})\circ\mathcal{J}_{p,\mathbf{t}_1,\cdots,\mathbf{t}_{\tilde Z}}]\\
	&=\mathop{\operatorname{argmin}}_{p,\mathbf{t}_1,\cdots,\mathbf{t}_{\tilde Z}}d(\mathcal{J}_{p,\mathbf{t}_1,\cdots,\mathbf{t}_{\tilde Z}})
,\end{align}
which is exactly what we need.
To formalize this idea and handle atomic species, the pseodocode is presented in Algorithm~\ref{alg:best}.
The function \texttt{Motifs($\tilde L_A, \tilde L_B, S$)} is a mapping like Eq.~(\ref{lalbs}) but also sorts the atoms by atomic species, where $n$ denotes the number of atomic species in the initial and final structures, $\tilde Z_\alpha$ denotes the number of atoms with the $\alpha$-th species, and other variables related to the $\alpha$-th atomic species are denoted with superscripts as $\mathbf{a}_i^{(\alpha)}$.
The function \texttt{BestAssignment($D, \mathsf{constraint}$)} returns a permutation $p$ that minimizes $\sum_{i}D_{i,p(i)}$, where $D$ is the cost matrix and \textsf{constraint} is a set of the form
\begin{equation}
	\left\{(i,j),(i',j'),\cdots;\overline{(k,l)},\overline{(k',l')},\cdots\right\}
,\end{equation}
where $(i,j)$ means enforcing $p(i)=j$, and $\overline{(k,l)}$ means requiring $p(k)\neq l$~\cite{murty1968algorithm}.
The correspondence of each atomic species is optimized separately and then combined into a single PCT, where $(p^{(1)}|p^{(2)}|\cdots|p^{(n)})\in\operatorname{Sym}(\tilde Z)$ represents performing $n$ permutations separately within $n$ atomic species.
We say that $(p,\mathbf{t}_1,\cdots,\mathbf{t}_{\tilde Z})$ is a bottom PCT if its $d$ is the smallest among all PCTs with the same permutation.
Whenever we obtain a bottom PCT, we invoke \texttt{Fill($p$)} to list all PCTs of the form $(p,\mathbf{0},\mathbf{t}_2',\cdots,\mathbf{t}_{\tilde Z}')$ that satisfy Eq.~\ref{djp0t}.
Here, we have let $\mathbf{t}_1'=\mathbf{0}$ to make the number of PCTs finite without loss of completeness (see the proof of Lemma~\ref{thm:finite-shuffle}).
Since $\mathbf{v}_i=\tilde{\mathbf{b}}_{p(i)}-\tilde{\mathbf{a}}_i$ is constant during each invocation of \texttt{Fill()}, one may simply use a flood fill algorithm to produce
\begin{multline}
	\big\{(\mathbf{t}_2',\cdots,\mathbf{t}_{\tilde Z}')\in\tilde L_B^{\tilde Z-1}\big|
	\\\min_{\boldsymbol{\tau}\in\mathbb{R}^3}\sum_{i=1}^{\tilde Z}\theta_i\lvert \mathbf{v}_i+\tilde{\mathbf{t}}_i+\boldsymbol{\tau}\rvert^\ell\le d_\text{max}^\ell \big\}
.\end{multline}
When $\ell=2$, Lemma~\ref{lem:rmsd} implies that $(\mathbf{t}_2,\cdots,\mathbf{t}_{\tilde Z}')$ takes values in a known hyperellipsoid so that one may employ a more efficient algorithm.
The last function \texttt{Split($\mathsf{constraint}, p$)} excludes the solution $p$ by splitting the assignment problem that satisfies \textsf{constraint} into several subproblems, which was originally proposed in Ref.~\citenum{murty1968algorithm}. It incorporates ``excluding $p$'' into \textsf{constraint} by returning $\tilde Z-1$ mutually disjoint sets
\begin{align}\label{const}
	\mathsf{constraint}&\cup \Big\{\overline{\big(1,p(1)\big)}\Big\},\\
	\mathsf{constraint}&\cup \Big\{\big(1,p(1)\big);\overline{\big(2,p(2)\big)}\Big\},\\
	\mathsf{constraint}&\cup \Big\{\big(1,p(1)\big),\big(2,p(2)\big);\overline{\big(3,p(3)\big)}\Big\},\\
	\begin{split}
		&\cdots\\
	\mathsf{constraint}&\cup \Big\{\big(1,p(1)\big),\cdots,\big(\tilde Z-2,p(\tilde Z-2)\big);
			 \\&\qquad\overline{\big(\tilde Z-1,p(\tilde Z-1)\big)}\Big\}.
	\end{split}
\end{align}
This partitions all feasible solutions that satisfy \textsf{constraint} and are not $p$.
If a set contains incompatible requirements, such as $(i, j)$ and $\overline{(i, j)}$, it will be discarded. 
In Algorithm~\ref{alg:pct}, whenever we obtain a bottom PCT, at most $\tilde Z-1$ sets like Eq.~(\ref{const}) are enqueued to \textsf{myconstraints}, each of which can yield a different bottom PCT via \texttt{OptimizePCT()}.
It should be noted that \textsf{mypcts} produced by Algorithm~\ref{alg:pct} is overcomplete---its elements are not noncongruent but rather appear in equivalence classes.
This is because for any $\mathbf{t}_A \in L_A$ and $\mathbf{t}_B \in L_B$, we have $(+\mathbf{t}_B)\circ\mathcal{J}\circ(-\mathbf{t}_A)$ congruent to $\mathcal{J}$, and their PCTs generally have different permutations unless $\mathbf{t}_B-S\mathbf{t}_A\in\tilde L_B$.
One may eliminate this redundancy by additionally invoking \texttt{Split()} to all ``congruent'' permutations, which is omitted here for brevity.
Another point to note is that a PCT generated by Algorithm~\ref{alg:pct} or \ref{alg:best} may represent a CSM whose SLM is not $(\tilde{L}_A, \tilde{L}_B, S)$, but rather $(\tilde{L}_A', \tilde{L}_B', S)$, where $\tilde{L}_A' \supset \tilde{L}_A$ and $\tilde{L}_B' \supset \tilde{L}_B$ are finer lattices.
This can only happen when $S\tilde L_A=\tilde L_B$ is a \textit{proper} subset of $SL_A\cap L_B$, and is a direct consequence of the fact that the PCT representation does not forbid shuffles with shuffle lattices finer than $\tilde L_B$.
However, this feature is useful in practical applications (see Section~\ref{ssec:or}) and does not affect the completeness of the enumeration.
\section{Application}\label{sec:app}
In this section, we demonstrate the applications of Algorithms~\ref{alg:imt}--\ref{alg:best} through case studies of the B1--B2 and graphite-to-diamond transitions.
For implementation details, see our Python package \textsc{crystmatch}~\cite{code}, which utilizes \textsc{spglib} for symmetry detection~\cite{togo2024spglib}.
We will see that the enumeration range easily covers and goes beyond previously proposed SSPT mechanisms.
The resulting database can be further analyzed from multiple perspectives, or used as inputs to NEB-like methods.
%
%
%
%
%
\subsection{Orientation relationship analysis}\label{ssec:or}
As one of the simplest SSPTs in compounds, the B1--B2 transition is similar to the FCC-to-BCC transition, but involves two atomic species.
Multiple distinct mechanisms---i.e., CSMs---have been proposed~\cite{stokes2004mechanisms}, including the purely distortive ($d=0$) Buerger one~\cite{buerger1951phase}, the Watanabe-Tokonami-Morimoto (WTM) one which can explain the observed orientation relationship (OR)~\cite{watanabe1977transition}, and the Tol{\' e}dano-Knorr-Ehm-Depmeier (TKED) one which is supported by MetaD simulations involving ${\sim}10^4$ atoms~\cite{toledano2003phenomenological,badin2021nucleating}.
Recently, the Therrien-Graf-Stevanovi{\' c} (TGS) mechanism was identified using the \textsc{p2ptrans} method~\cite{therrien2020minimization}, which has a significantly lower strain.
These CSMs and their evidence are summarized in Table~\ref{tab:mech}.
A CSM can be decomposed into an SLM and a PCT, where the former consists of a deformation gradient $S$ and a shuffle lattice.
Among these components, the $S$ alone determines the OR and the habit plane, and can thus be compared directly with experimental observations.
In contrast, the shuffle lattice and the PCT describe microscopic details that are, currently, accessible only through simulations.
In this sense, listing all possible $S$ is of broader interest, which can be achieved using Algorithm~\ref{alg:imt}.
To further characterize $S$ from a CSM perspective, we used Algorithm~\ref{alg:best} to compute the \textit{minimal multiplicity} and \textit{minimal shuffle distance} among all CSMs with deformation gradient $S$, denoted by $\mu_0(S)$ and $d_0(S)$, respectively.
For simplicity, we let $\ell=2$ and assign equal shuffle-distance weights to Cs and Cl atoms, i.e., set $\theta_i$ as a constant.
A more physical choice of weights will be discussed in Section~\ref{ssec:exhaust}.
\begin{figure}[t!]
  \centering
  \includegraphics[width=\linewidth]{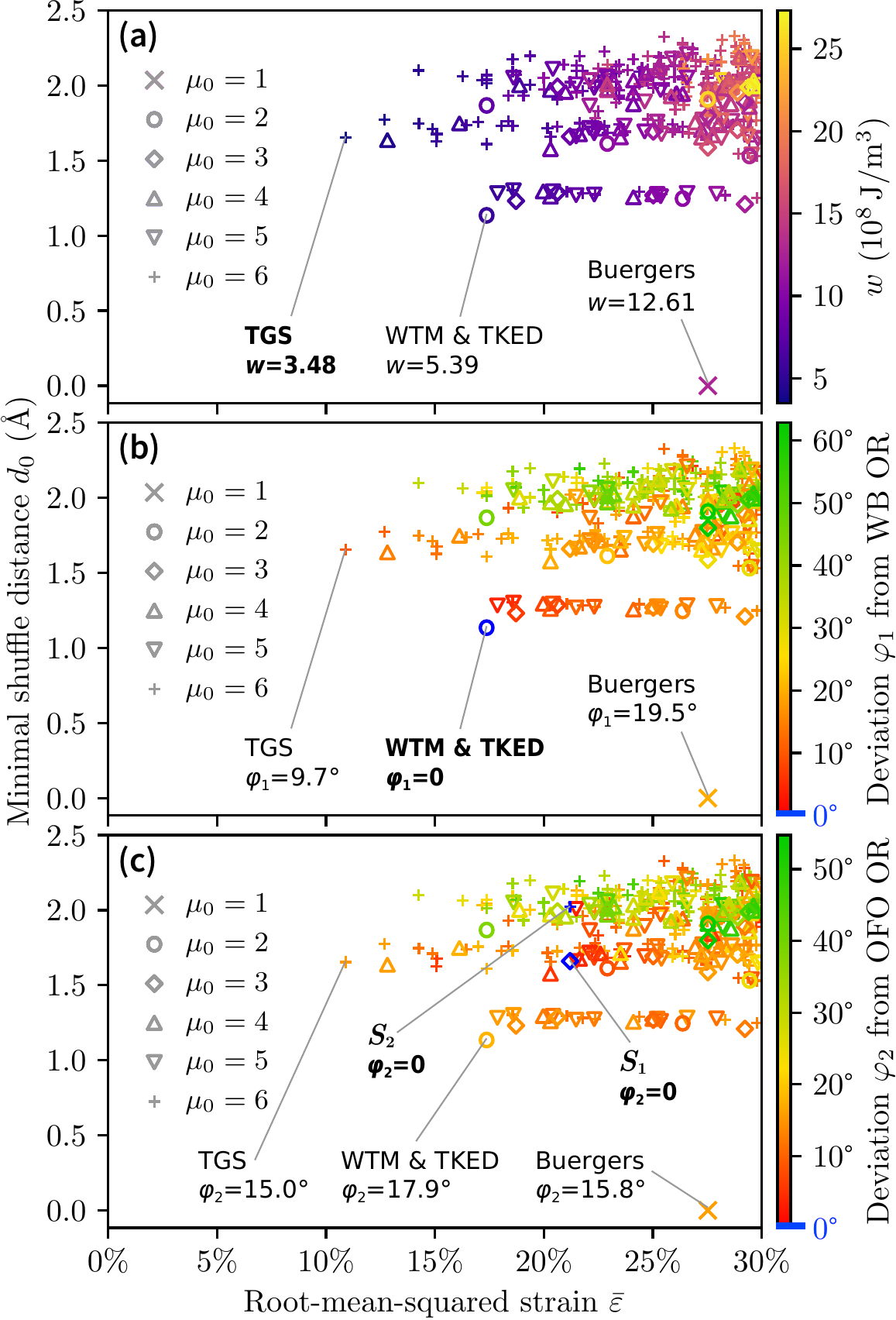}
  \caption{
All deformation gradients $S$ with $\mu_0\le 6$ and $\bar\varepsilon\le 30\%$.
Each point represents a unique $S$, whose marker shape indicates its $\mu_0$.
(a) Each $S$ is colored according to its estimated strain energy density $w(S)$.
Those $S$ of previously proposed mechanisms are pointed out by gray solid lines, where the WTM one and TKED one share the same $S$, and the TGS one has the lowest $w$ among all $S$ enumerated.
(b)(c) Each $S$ is colored according to its deviation angles $\varphi_1$ from the WB OR and $\varphi_2$ from the OFO OR.
Those $S$ with strictly $\varphi_1=0$ or $\varphi_2=0$ are highlighted in blue.}
	\label{fig:rmss}
\end{figure}
\begin{figure}[t!]
  \centering
  \includegraphics[width=\linewidth]{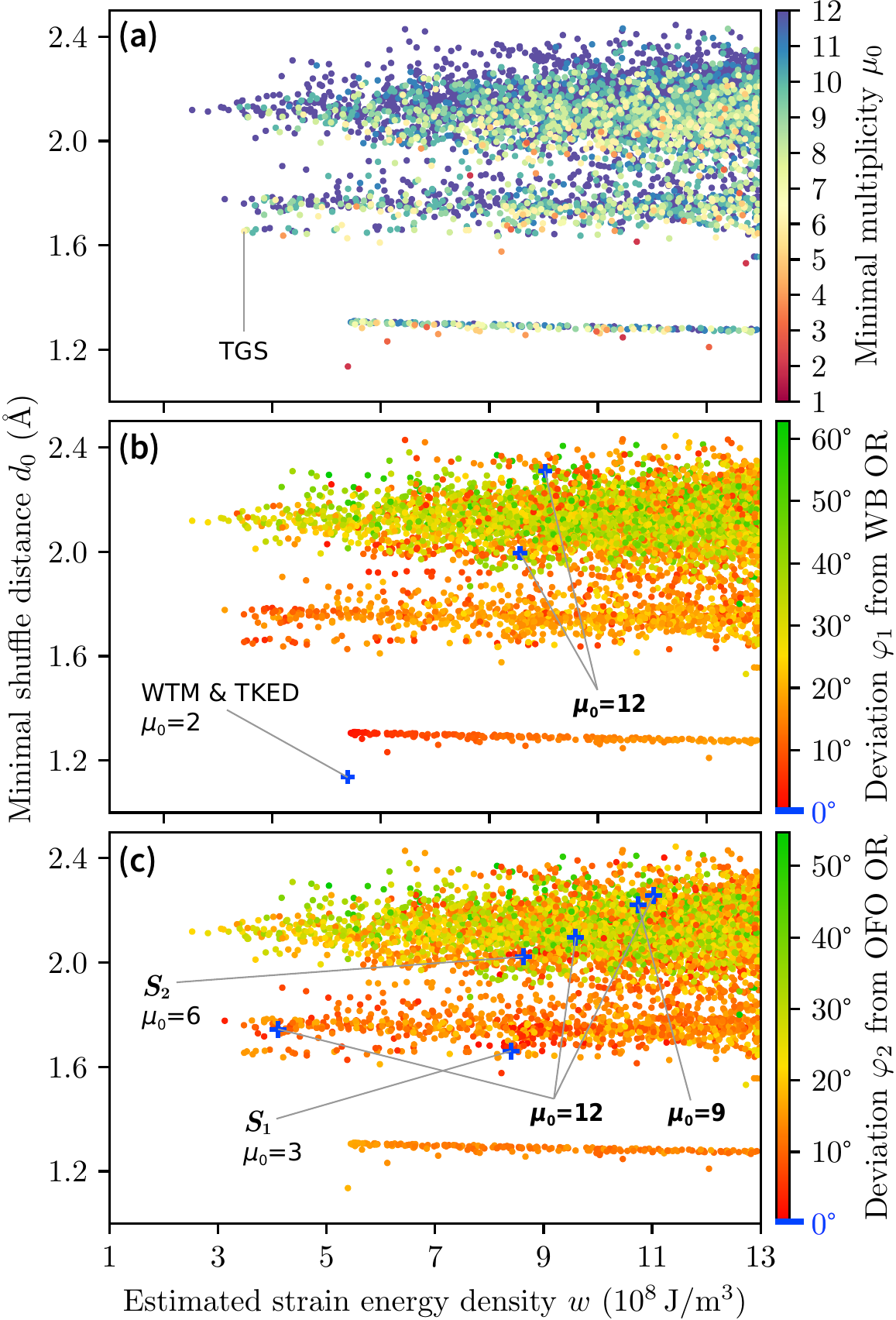}
  \caption{All deformation gradients $S$ with $\mu_0\le 12$ and $w\le 13\times 10^8\,\text{J}/\text{m}^3$.
	Only the Buerger mechanism is omitted, as it has already been shown in Fig.~\ref{fig:rmss}.	  
  (a) Each $S$ is colored according to $\mu_0$, the minimal multiplicity required for an SLM to have $S$.
  There are several $S$ with $w$ less than the TGS mechanism, but also higher $d_0$ and $\mu_0$.
  (b)(c) Each $S$ is colored according to the deviation angles.
  Blue crosses highlight those $S$ with $\varphi_1=0$ or $\varphi_2=0$.}
  \label{fig:b1b2}
\end{figure}
As a preliminary investigation, we enumerated all SLMs for the B1--B2 transition with multiplicity $\mu \leq 6$ (i.e., period $\tilde Z\le 12$) and RMSS
\begin{equation}
	\bar\varepsilon (S) = \sqrt{\frac{1}{3}\sum_{i=1}^3 [\sigma_i(S)-1]^2} 
\end{equation}
less than 30\%.
This was achieved by replacing $\Omega_{S_0}$ in Algorithm~\ref{alg:imt} with $\{ S\in\mathbb{R}^{3\times 3} \,|\, \bar\varepsilon(S) \le 0.3 \}$.
From the enumeration result, we identified the SLMs of all established CSMs above, as shown in Fig.~\ref{fig:rmss}(a).
A total of 788 noncongruent SLMs with 393 distinct deformation gradients are produced, taking 1\,min\,11\,s on a single core of a Mac M1 chip.
The subsequent $d_0$ calculation took only 13\,s, producing 393 CSMs with the smallest $d$ among all CSMs sharing their respective deformation gradients, i.e., satisfying
\begin{equation}
	d(\mathcal{J})=d_0(S)
.\end{equation}
We refer to a CSM with this property as a \textit{representative} CSM, whose PCT is necessarily a bottom PCT, though the converse does not hold.
\begin{figure}[b!]
  \centering
  \includegraphics[width=\linewidth]{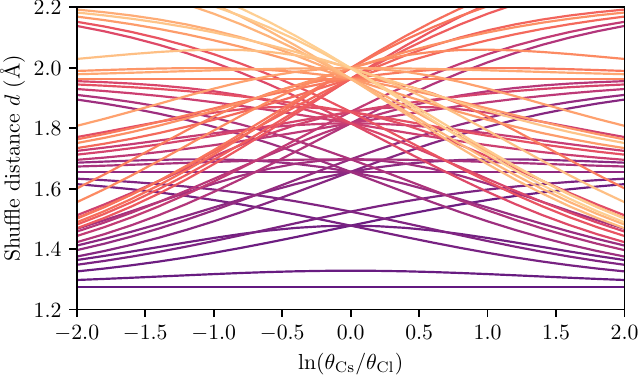}
  \caption{
How the shuffle distance $d$ depends on the weighting scheme, i.e., the ratio $\theta_\text{Cs} / \theta_\text{Cl}$ for SSPTs in CsCl.
The figure shows 51 CSMs with the same SLM ($\mu=6$), whose $d$ has only 8 distinct values when $\theta_\text{Cs}=\theta_\text{Cl}$.
Colors are used to distinguish between different CSMs.
}
  \label{fig:sensitivity}
\end{figure}
It should be noted that all the 12 mechanisms in Table~\ref{tab:mech} are representative CSMs---except for the TKED one.
This means that Algorithms~\ref{alg:imt} and \ref{alg:best} are already capable of reproducing most known CSMs, while Algorithm~\ref{alg:pct} serves as a last resort for exhaustive enumeration.
One could certainly apply Algorithm~\ref{alg:pct} to every SLM, but a more efficient approach is to filter them using experimentally observed ORs.
As detailed in Ref.~\citenum{stokes2004mechanisms}, the most commonly observed ORs in the B1--B2 transition are the Watanabe-Blaschko (WB) OR
\begin{equation}
	[001][110]_{\text{B1}}\parallel [110][001]_{\text{B2}} 
,\end{equation}
and the Okai-Fujiwara-Onodera (OFO) OR
\begin{equation}
	[100][111]_{\text{B1}}\parallel [111][100]_{\text{B2}}
.\end{equation}
For both ORs, we computed the respective deviation angles $\varphi_1$ and $\varphi_2$ of each deformation gradient, defined as the minimal rotation required to produce the OR, as shown in Fig.~\ref{fig:rmss}(b)(c).
We found that the deformation gradient of the WTM and TKED mechanisms is the only one that can produce the WB OR without rotation, which we denote by $S_\text{W}$.
On the other hand, the OFO OR can only be produced without rotation by two previously unreported deformation gradients, denoted as $S_1$ and $S_2$.
At this point, we can see that \textsc{crystmatch} provides a systematic way to infer the deformation gradient---and thus the SLM---based on experimental observations.
One naturally wonders whether, as $\mu_\text{max}$ increases, there exists a deformation gradient $S$ with lower strain energy $w$ than that of the TGS mechanism, or other $S$ that can reproduce either OR without rotation.
The answer to both questions is affirmative as long as we set $\mu_{\text{max}} = 12$, as shown in Fig.~\ref{fig:b1b2}.
Since the Buerger mechanism has $d = 0$, we need not consider mechanisms with higher strain energy than it.
Accordingly, we set $w_{\text{max}}=13 \times 10^{8} \,\mathrm{J}/\mathrm{m}^3$ and performed a more comprehensive enumeration that took about 2 CPU hours, obtaining 9987 noncongruent SLMs and 5432 distinct deformation gradients.
Despite the total numbers increasing by more than an order of magnitude, only a handful of new cases emerged with $w$ lower than that of the TGS mechanism, $\varphi_{1} = 0$, or $\varphi_{2} = 0$.
\begin{figure*}[t!]
  \centering
  \includegraphics[width=\textwidth]{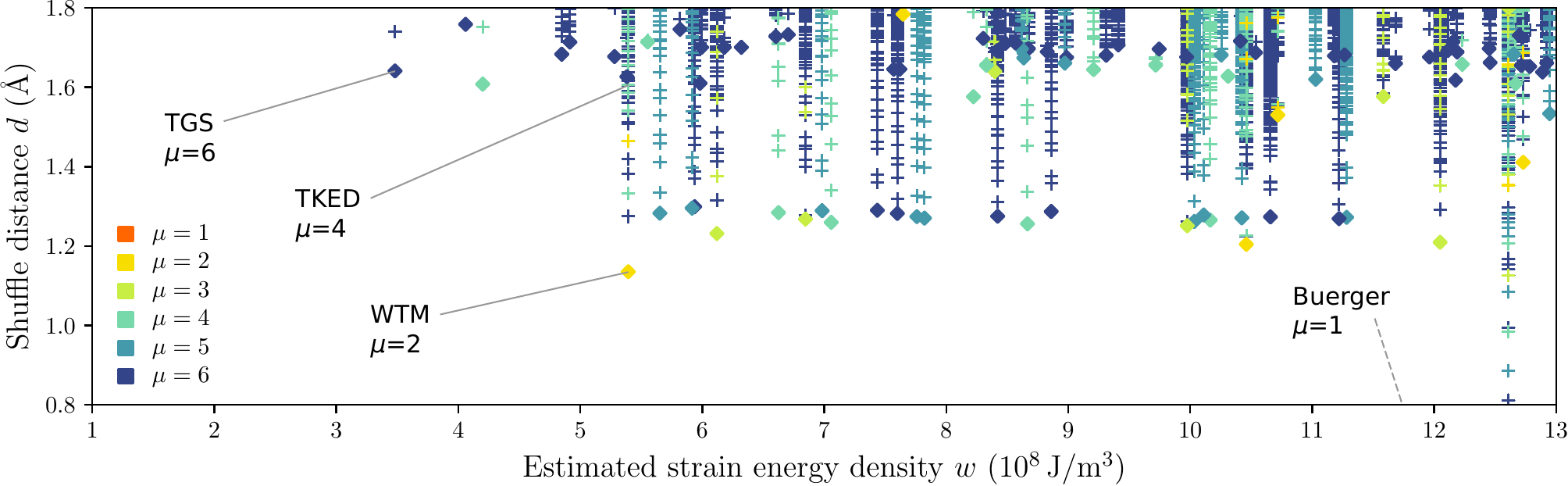}
  \caption{
  All CSMs with $\mu\le 6$, $w\le 13\times 10^8\,\text{J}/\text{m}^3$, and $d\le 1.8\,\text{\AA}$.
  Only the Buerger CSM is ommitted, as it has already been shown in Fig.~\ref{fig:rmss}.
  Diamonds and crosses indicate representative and nonrepresentative CSMs, respectively.
  Each CSM is colored according to its multiplicity.
}
  \label{fig:exhaust}
\end{figure*}
\subsection{Exhaustive enumeration}\label{ssec:exhaust}
Now we assign shuffle-distance weights to Cs and Cl atoms.
If equal weights are used, a large number of CSMs will exhibit identical $d$ values, as shown in Fig.~\ref{fig:sensitivity}.
A physically reasonable approach is to consider the initial velocity $v$ required to move from the initial state to the transition state on the PES within an average free time, which is proportional to the unweighted $d$.
Statistical mechanics tells us that heavier atoms are less likely to acquire velocities, since kinetic energy $T_i=\frac{1}{2}m_iv_i^2$.
Therefore, each atom's contribution to $d^2$ should be weighted as $\theta_i\propto m_i$, and $\ell = 2$ is thus an appropriate choice.
This leads to $\ln(\theta_\text{Cs} / \theta_\text{Cl})\approx 1.32$, which is sufficient to lift the degeneracy in $d$.
\begin{figure}[b!]
  \centering
  \includegraphics[width=\linewidth]{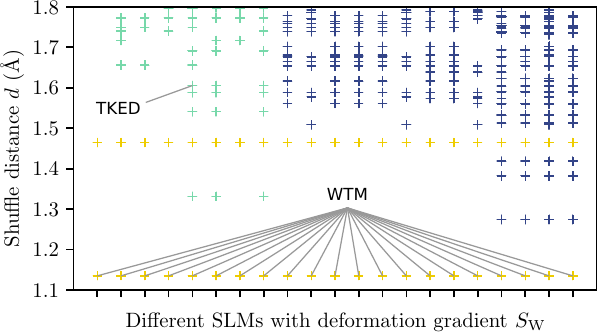}
  \caption{All CSMs with $\mu\le 6$ and deformation gradient $S_\text{W}$.
	  CSMs that can be produced by Algorithm~\ref{alg:pct} using the same SLM are drawn in the same horizontal coordinate.
  Yellow, green, and indigo represent $\mu=2,4,6$, respectively.
  Gray lines point out the CSMs of the WTM and TKED mechanisms.
  Note that CSMs with $\mu=2$ also acquire PCT representations in SLMs with $\mu=4,6$.}
  \label{fig:wtm}
\end{figure}
To investigate the distribution of nonrepresentative CSMs and to reproduce the TKED mechanism, Algorithm~\ref{alg:pct} is applied to each SLM with $\mu\le 6$ and $w\le 13\times 10^8\,\text{J}/\text{m}^3$, using $d_\text{max}=1.8\,\text{\AA}$.
This took about 3 hours, producing 320\,757 CSMs, as shown in Fig.~\ref{fig:exhaust}.
However, if one is only interested in the CSMs that exactly conform to the WB OR, it is sufficient to apply Algorithm~\ref{alg:pct} only to those SLMs with $S_\text{W}$, taking merely 7\,min\,39\,s.
Figure~\ref{fig:wtm} shows all CSMs with $S_\text{W}$, from which one can see that while $S_\text{W}$ belongs to the SLM of the WTM mechanism ($\mu=2$), it also belongs to 7 noncongruent SLMs with $\mu=4$ and 13 noncongruent SLMs with $\mu=6$.
This is because for a given $(H_A,H_B,Q)\in\operatorname{IMT}(\mu)$, any $M\in\mathbb{Z}^{3\times 3}$ with $\det M>0$ induces an $(H_A',H_B',Q')\in\operatorname{IMT}(\mu\det M)$ with the same deformation gradient, where
\begin{align}
	H_A'&=\operatorname{hnf}(H_AM),\label{hahnf}
	\\H_B'&=\operatorname{hnf}(H_BQM),\label{hbhnf}
	\\Q'&=H_B'^{-1}H_BQH_A^{-1}H_A'\label{qhb-1}
.\end{align}
Overall, the total number of CSMs increases rapidly with both $\mu$ and $d_\text{max}$.
\begin{figure*}[t!]
  \centering
  \includegraphics[width=\textwidth]{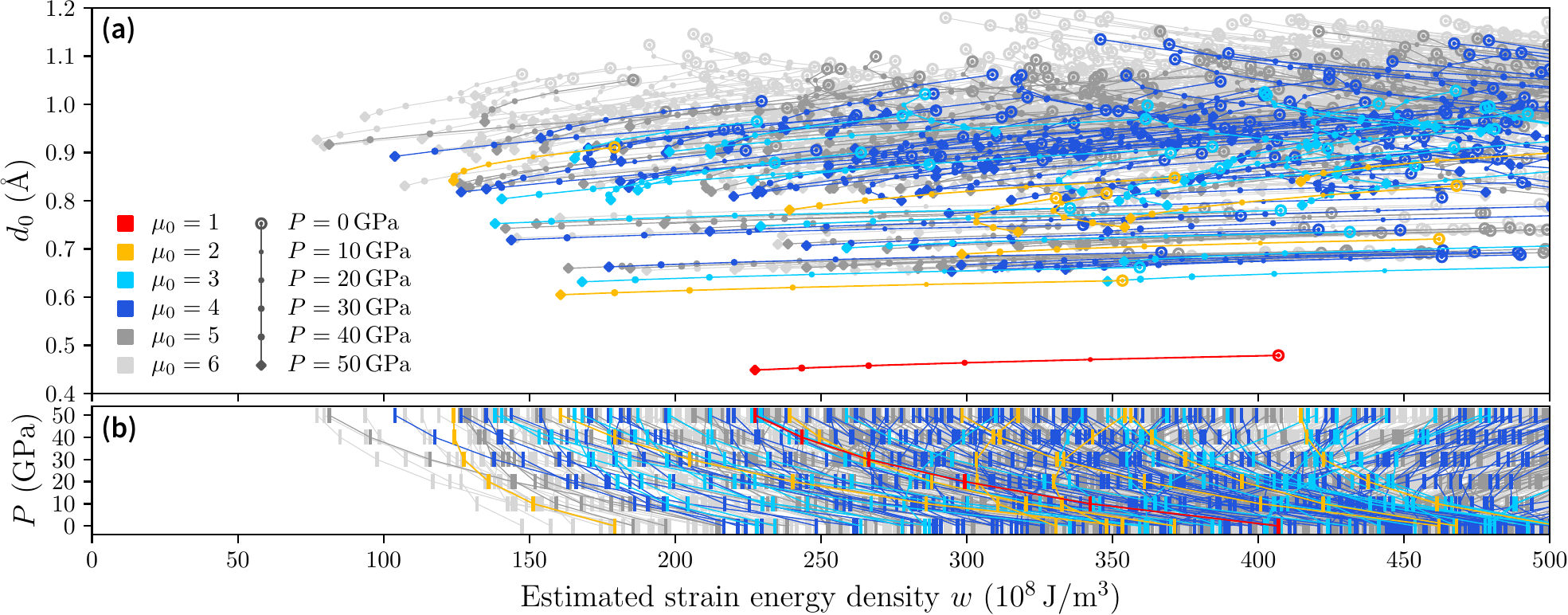}
  \caption{
  All deformation gradients $S$ with $\mu_0\le 6$ and $w\le 500\times 10^8\,\text{J}/\text{m}^3$.
  Colors are used to distinguish between different multiplicities.
  (a) The distribution of $w(S)$ and $d_0(S)$ at each pressure $P$.
  Solid lines connect points representing the same SLM under different $P$, whose marker shapes change with increasing pressure, as indicated in the legend at the lower left.
  (b) The lattice mismatch spectrum at each $P$.
  Solid lines show the diverse behaviors of $w$ as a function of $P$ for different SLMs.
}
  \label{fig:spec}
\end{figure*}
Note that $\mathcal{J}_{p,\mathbf{t}_1,\cdots,\mathbf{t}_{\tilde Z}}$ satisfying Eq.~(\ref{aitaj}) may have a period smaller than $\tilde Z$.
Consequently, the WTM CSM $\mathcal{J}_\text{W}$ is reproduced in every $\operatorname{PCT}^\star (\tilde L_A,\tilde L_B,S_\text{W})$ such that $\tilde L_B$ is a sublattice of the shuffle lattice of $\mathcal{J}_\text{W}\circ S_\text{W}^{-1}$; see Fig.~\ref{fig:wtm}.
In fact, all CSMs exhibit this repetitive property under the PCT representation.
Although one could easily eliminate it by modifying Algorithm~\ref{alg:pct} and Definition~\ref{def:pct}, doing so is unnecessary, since the current formalism ensures that $\mathrm{PCT}(\tilde L_A, \tilde L_B, S)$ precisely consists of all possible CSM that a molecular-dynamics simulation employing periodic boundary conditions $\tilde L_A$ (initial) and $\tilde L_B$ (final), undergoing the deformation $S$, may produce---even when the resulting CSM possesses a smaller intrinsic period.
\subsection{Effect of pressure on lattice mismatch}
The densities of two solid phases of the same material can differ significantly, making their lattice mismatch and strain energy inevitably large.
In such cases, pressure plays a decisive role in the occurrence of SSPTs.
For example, at 1500\,K, the graphite-diamond coexistence pressure estimated by the Berman-Simon line is $\sim$5\,GPa, while the SSPT is not observed until $\sim$12\,GPa~\cite{bundy1961diamond,irifune2003ultrahard}.
Previous studies also suggest that when the pressure is further increased to 20--50\,GPa, both the critical nucleus size and nucleation energy decrease, as shown in Fig.~2 of Ref.~\citenum{khaliullin2011nucleation}.
However, lattice mismatch cannot be quantified unless how lattices match is self-evident, or defined in advance---this is precisely what the concept of SLM provides.
By enumerating all possible deformation gradients, Algorithm~\ref{alg:imt} enables a comprehensive analysis of lattice mismatch.
We investigated the graphite-to-diamond transition under pressures $P=0,10,20,30,40,50\,\text{GPa}$.
The enumeration range was specified as $\mu_\text{max} = 6$ (i.e., $\tilde Z\le 24$) and $w_\text{max} = 5\times 10^{10}\,\text{J}/\text{m}^3$ to ensure inclusion of all established mechanisms, with the initial and final structures as well as elastic tensors calculated at each $P$.
The minimal multiplicity $\mu_0$ and minimal shuffle distance $d_0$ for each deformation gradient is also computed as we did in Section~\ref{ssec:or}.
For the case of $P = 0$, enumerating the SLMs and computing $d_0$ took 9\,min\,29\,s and 37\,s, respectively.
The joint distribution of $w$ and $d_0$, as well as the spectrum of $w$ are shown in Fig.~\ref{fig:spec}.
We can see that as pressure increases, the strain energy of those deformation gradients with small $d_0$ or small $w$ decreases significantly.
However, some other deformation gradients---particularly those with larger $\mu_0$ and $d_0$---exhibit opposite or even nonmonotonic trends.
This suggests that the actual SSPT mechanism may vary under different pressures.
Overall, the ``lattice mismatch spectrum'' in Fig.~\ref{fig:spec}(b) extends leftward with increasing pressure, consistent with previous experimental and computational studies.
\section{Summary}
We have introduced a cell-independent formalism that unequivocally defines the deformation gradient, SLM, and standard shuffle of a CSM, as well as the symmetry-induced congruence relations on $\operatorname{CSM}(\mathcal{A},\mathcal{B})$ and $\operatorname{SLM}(\mathcal{A},\mathcal{B})$.
This maps all possible CSMs of an SSPT onto a tree structure, which is finite up to congruence classes as long as the multiplicity, strain energy and shuffle distance are bounded.
Using the IMT and PCT representation, each CSM is encoded in a tuple of integer matrices with minimal redundancy.
Based on this, Algorithms~\ref{alg:imt}--\ref{alg:best} have been proposed for the exhaustive enumeration of CSMs and SLMs, unveiling novel SSPT mechanisms with unprecedentedly low strain or desired ORs.
They lay the groundwork for predicting CSM energy barriers via machine learning and can be seamlessly integrated with widely used methods with negligible computational cost.
One may also use them to study defect migration or heteroepitaxy~\cite{wei2021direct,li2022smallest,li2025silicon}.
All algorithms are implemented in the command-line tool \textsc{crystmatch}, which is open-sourced and freely available to the community~\cite{code}.
\appendix
\begin{acknowledgements}
We acknowledge helpful discussions with Xu-Yuan Zhou, Yuze Sun, Wei-Jian Jiang, Yi-Chi Zhang, Jia-Xi Zeng, Qi Liu, Yi Yang, Qing-Yang Zheng, and Jia-Cheng Gao.
We are supported by the National Science Foundation of China under Grants No.~123B2048, No.~12204015, No.~12234001, No.~12404257, No.~12474215, and No.~62321004, and the National Basic Research Program of China under Grants No.~2021YFA1400500 and No.~2022YFA1403500.
The computational resources were provided by the supercomputer center at Peking University, China.
\end{acknowledgements}
\section{Lemmas and proofs}\label{append:lemma}
\begin{proof}[\indent Proof of Lemma~\ref{lem:inducedshuffle}]
	$\mathbf{t}\in\mathbb{R}^3$ is a translation element of $\mathcal{J}$ if and only if Eq.~(\ref{ajbtt}) holds, which is equivalent to:
	\begin{equation}
		\begin{tikzcd}[column sep=scriptsize]
			\mathcal{A}+\mathbf{t}_A&&\mathcal{A}\arrow[ll, "+\mathbf{t}_A"']\arrow[r, "\mathcal{J}", rightsquigarrow]&\mathcal{B}\arrow[rr, "+\mathbf{t}_B"]&&\mathcal{B}+\mathbf{t}_B\\
			\mathcal{A}+\mathbf{t}_A&&\mathcal{A}\arrow[ll, "+\mathbf{t}_A"]\arrow[r, "\mathcal{J}"', rightsquigarrow]&\mathcal{B}\arrow[rr, "+\mathbf{t}_B"']&&\mathcal{B}+\mathbf{t}_B
		  \arrow[from=1-1, to=2-1, "+\mathbf{t}"']
		  \arrow[from=1-3, to=2-3, "+\mathbf{t}"']
		  \arrow[from=1-4, to=2-4, "+\mathbf{t}"]
		  \arrow[from=1-6, to=2-6, "+\mathbf{t}"]
	  \end{tikzcd}
	.\end{equation}
	The outer loop of the above diagram means that $\mathbf{t}$ is a translation element of $(+\mathbf{t}_B)\circ \mathcal{J}\circ(-\mathbf{t}_A)$.
	The above derivation is reversible, so that the two CSMs have the same translation elements.
\end{proof}
\begin{proof}[\indent Proof of Lemma~\ref{lem:basis}]
	We prove the following proposition for each positive integer $n$ by induction: If $L$ is an addition group on $\mathbb{R}^n$ with rank $n$ and
	\begin{equation}\label{ilinf}
		\lambda(L)=\inf_{\mathbf{t}\in L\setminus\{\mathbf{0}\}}\lvert \mathbf{t}\rvert > 0
	,\end{equation}
	then
	\begin{equation}\label{t1tnl}
	  \exists \mathbf{t}_1,\cdots,\mathbf{t}_n,\quad L=\left\{\sum_{i=1}^nk_i\mathbf{t}_i\,\middle|\,k_1,\cdots,k_n\in\mathbb{Z}\right\}
	.\end{equation}
	Since Eq.~(\ref{ilinf}) implies Eq.~(\ref{t1t2l}), Lemma~\ref{lem:basis} is nothing but a special case of this proposition for $n=3$.

	\textit{Base case. }For $n=1$, let $\mathbf{t}_1$ denote an element of $L$ with length $\lambda(L)$.
	Such an element must exist; otherwise, $L$ would have infinitely many elements with bounded lengths, and hence a cluster point (by the Bolzano-Weierstrass theorem), which contradicts Eq.~(\ref{t1t2l}).
	Since $L$ is an addition group, we have $\{k_1\mathbf{t}_1\,|\,k_1\in\mathbb{Z}\}\subset L$.
	There are no other elements in $L$; otherwise, the minimum distance between elements in $L$ would be smaller than $\lambda(L)$.

	\textit{Induction step. }Now we show that for every $m\ge 1$, if Eq.~(\ref{t1tnl}) holds for $n=m$, it also holds for $n=m+1$.
	Let $\mathbf{t}_1$ denote an element of $L\subset \mathbb{R}^{m+1}$ with length $\lambda(L)$.
	Any element in $\mathbb{R}^{m+1}$ can be uniquely decomposed as
	\begin{equation}\label{vat1v}
	  \mathbf{v}=c_1\mathbf{t}_1+\mathbf{v}'
	,\end{equation}
	where $c_1\in\mathbb{R}$ and $\mathbf{v}'\in\operatorname{span}_{\mathbb{R}}\{\mathbf{t}_1\}^\perp$ is orthogonal to $\mathbf{t}_1$.
	This defines a projection mapping
	\begin{align}
		\pi\colon \mathbb{R}^{m+1}&\to \operatorname{span}_{\mathbb{R}}\{\mathbf{t}_1\}^\perp,\\
		\mathbf{v}&\mapsto\mathbf{v}'
	.\end{align}
	Since $\pi$ is a linear mapping, the image of $L$ under this mapping, $\pi(L)$, is an additive group with rank $m$.
	In fact, we also have $\lambda(\pi(L)) > 0$; otherwise, $L$ would have infinitely many elements with bounded lengths, and hence a cluster point.
	Therefore, $\pi(L)$ satisfies the proposition for $n=m$, so there must exist $\mathbf{t}_2,\cdots,\mathbf{t}_{m+1}\in L$ whose images under $\pi$ generate
	\begin{equation}\label{pli2m}
		\pi(L)=\left\{\sum_{i=2}^{m+1}k_i\pi(\mathbf{t}_i)\,\middle|\,k_2,\cdots,k_{m+1}\in\mathbb{Z}\right\}
	.\end{equation}
	Also, since $L$ is an addition group, we have
	\begin{equation}
	  \left\{\sum_{i=1}^{m+1}k_i\mathbf{t}_i\,\middle|\,k_1,\cdots,k_{m+1}\in\mathbb{Z}\right\}\subset L
	,\end{equation}
	and we only need to show that there are no other elements in $L$.
	Note that any $\mathbf{t}\in L$ can be expressed as
	\begin{equation}\label{tc1t1}
	  \mathbf{t}=c_1\mathbf{t}_1+\sum_{i=2}^{m+1}c_i\mathbf{t}_i
	,\end{equation}
	for some $c_1,\cdots,c_{m+1}\in\mathbb{R}$.
	Applying $\pi$ to Eq.~(\ref{tc1t1}) and use Eq.~(\ref{pli2m}), we can see that $c_2,\cdots,c_{m+1}$ must be integers.
	This implies that
	\begin{equation}\label{c1c1t}
		\big(c_1-\lfloor c_1\rfloor\big)\mathbf{t}_1\in L
	,\end{equation}
	where $\lfloor c_1\rfloor$ denotes the greatest integer less than or equal to $c_1$.
	If $c_1\notin \mathbb{Z}$, the length of Eq.~(\ref{c1c1t}) would be nonzero and smaller than $\lambda(L)$, which contradicts Eq.~(\ref{ilinf}).
	At this point, we can see that all coefficients in Eq.~(\ref{tc1t1}) are integers, so that the proposition holds for $n=m+1$.
\end{proof}
\begin{proof}[\indent Proof of Lemma~\ref{lem:sublat}]
	If $\tilde C=CM$ for some $M\in\mathbb{Z}^{3\times 3}$, then $\tilde C(\mathbb{Z}^3)=C(M(\mathbb{Z}^3))$ is a sublattice of $C(\mathbb{Z}^3)$ since $M(\mathbb{Z}^3)$ is a sublattice of $\mathbb{Z}^3$.
	Conversely, if $\tilde C(\mathbb{Z}^3)$ is a sublattice of $C(\mathbb{Z}^3)$, there exist $\mathbf{k}_1,\mathbf{k}_2,\mathbf{k}_3\in\mathbb{Z}^3$ such that the $i$-th column of $\tilde C$ equals $C\mathbf{k}_i$, i.e., $\tilde C=C[\mathbf{k}_1,\mathbf{k}_2,\mathbf{k}_3]$.
	To see that the index of this sublattice is $\lvert\det M\rvert$, we only need to use the elementary divisor theorem~\cite{cohen2011course}, which states that the quotient group $C(\mathbb{Z}^3) / \tilde C(\mathbb{Z}^3)$ is isomorphic to a $\mathbb{Z}$-module of size $\lvert \det M\rvert $.
\end{proof}
\begin{proof}[\indent Proof of Lemma~\ref{lem:quotient}]
	Arbitrarily take $\mathbf{a}_1,\cdots,\mathbf{a}_{Z_A}\in\mathcal{A}$ that are $L_A$-inequivalent, and $\mathbf{t}_1,\cdots,\mathbf{t}_k\in L_A$ that are $\tilde L_A$-inequivalent.
	We have
	\begin{align}
	  \mathcal{A}
	  &=\bigsqcup_{i=1}^{Z_A} (\mathbf{a}_i+L_A)
	\\&=\bigsqcup_{i=1}^{Z_A} \left(\mathbf{a}_i+\bigsqcup_{j=1}^k(\mathbf{t}_j+\tilde L_A)\right)
	\\&=\bigsqcup_{i=1}^{Z_A}\bigsqcup_{j=1}^k\left((\mathbf{a}_i+\mathbf{t}_j)+\tilde L_A\right)
	,\end{align}
	where $\sqcup$ denotes the disjoint union of sets.
	Since each $(\mathbf{a}_i+\mathbf{t}_j)+\tilde L_A$ is a distinct $\tilde L_A$-equivalence class, we have $\lvert \mathcal{A}/\tilde L_A\rvert=kZ_A$.
\end{proof}
\begin{proof}[\indent Proof of Lemma~\ref{lem:redundancy}]
	Note that
	\begin{align}
		C(\mathbb{Z}^3)=C'(\mathbb{Z}^3)\quad
		&\iff\quad C^{-1}C'(\mathbb{Z}^3)=\mathbb{Z}^3\\
		&\iff\quad C^{-1}C'\in\operatorname{GL}(3,\mathbb{Z})
	,\end{align}
	where we have used Eq.~(\ref{qznzn}).
\end{proof}
\begin{proof}[\indent Proof of Lemma~\ref{lem:hnf}]
	The existence of $H$ and $Q$ as well as the uniqueness of $H$ are proved in Ref.~\citenum{cohen2011course}.
	To see that $Q$ is also unique, note that $Q=H^{-1}M$.
\end{proof}
\begin{proof}[\indent Proof of Lemma~\ref{lem:svd}]
	$S^{\text{T}}S$ is positive-definite, so it has three orthogonal eigenvectors $\mathbf{v}_1,\mathbf{v}_2,\mathbf{v}_3$ (arranged in right-handed order) with eigenvalues $\lambda_1\ge\lambda_2\ge\lambda_3>0$.
	Let $s_i=\sqrt{\lambda_i} $, $V=[\mathbf{v}_1,\mathbf{v}_2,\mathbf{v}_3]\in\operatorname{SO}(3)$, and $U=SV\Sigma^{-1}$.
	We have $U\in \operatorname{SO}(3)$ since $U^\text{T}U=I$ and $\det U=1$.
	Since $\lambda_i$ is uniquely determined by $S$, so is $s_i$.
\end{proof}
\begin{lem}\label{lem:rmsd}
	When $\ell = 2$, the right-hand side of Eq.~(\ref{djmin}) attains the minimum if and only if $\boldsymbol{\tau}=\overline{\tilde{\mathbf{a}}_i }-\overline{\tilde{\mathcal{J}}(\tilde{\mathbf{a}}_i )}$, where the overline denotes the mean value
	\begin{equation}
		\overline{\mathbf{v}_i}=\sum_{i=1}^{\tilde Z}\theta_i\mathbf{v}_i
	.\end{equation}
	Consequently, a CSM with PCT $(p,\mathbf{t}_1,\cdots,\mathbf{t}_{\tilde Z})$ has
	\begin{equation}\label{djbpi}
		d(\mathcal{J}_{p,\mathbf{t}_1,\cdots,\mathbf{t}_{\tilde Z}})=\sqrt{\overline{\left|\tilde{\mathbf{b}}_{p(i)}+\tilde{\mathbf{t}}_i-\tilde{\mathbf{a}}_i-\overline{\tilde{\mathbf{b}}_j}-\overline{\tilde{\mathbf{t}}_j}+\overline{\tilde{\mathbf{a}}_j}\right|^2}}
	,\end{equation}
	where $\tilde{\mathbf{b}}_i=V\Sigma^{-\frac{1}{2}}U^\text{T}\mathbf{b}_i$ is deformed from the $L_B$-motif that defines the PCT, and so is $\tilde{\mathbf{t}}_i=V\Sigma^{-\frac{1}{2}}U^\text{T}\mathbf{t}_i$.
\end{lem}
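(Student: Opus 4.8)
The plan is to treat the $\ell=2$ case of Eq.~(\ref{djmin}) as a weighted least-squares problem in $\boldsymbol{\tau}$ and exploit its strict convexity. Writing $\mathbf{w}_i=\tilde{\mathcal{J}}(\tilde{\mathbf{a}}_i)-\tilde{\mathbf{a}}_i$, the objective becomes $f(\boldsymbol{\tau})=\sum_{i=1}^{\tilde Z}\theta_i\lvert\mathbf{w}_i+\boldsymbol{\tau}\rvert^2$, a positive-definite quadratic in $\boldsymbol{\tau}$ since all $\theta_i>0$. First I would impose $\nabla f=2\sum_i\theta_i(\mathbf{w}_i+\boldsymbol{\tau})=\mathbf{0}$; because the weights are normalized, i.e.\ $\sum_i\theta_i=1$, this collapses to $\boldsymbol{\tau}=-\overline{\mathbf{w}_i}=\overline{\tilde{\mathbf{a}}_i}-\overline{\tilde{\mathcal{J}}(\tilde{\mathbf{a}}_i)}$, and strict convexity guarantees that this unique critical point is the global minimizer. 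This settles the first assertion.

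For the second assertion, I would substitute the optimal $\boldsymbol{\tau}$ back into $f$, turning the minimum into a weighted variance, $d^2=\sum_i\theta_i\lvert\mathbf{w}_i-\overline{\mathbf{w}_j}\rvert^2$. It then remains only to rewrite $\mathbf{w}_i$ in terms of PCT data. The key step is to trace the standard-shuffle diagram Eq.~(\ref{ajbvs}): following $\tilde{\mathbf{a}}_i=V\Sigma^{1/2}V^\text{T}\mathbf{a}_i$ back up to $\mathbf{a}_i$, across by $\mathcal{J}$, and down through $\sqrt{U\Sigma U^\text{T}}^{\,-1}$ and $VU^\text{T}$ shows that $\tilde{\mathcal{J}}(\tilde{\mathbf{a}}_i)=V\Sigma^{-1/2}U^\text{T}\mathcal{J}(\mathbf{a}_i)$. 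Inserting $\mathcal{J}(\mathbf{a}_i)=\mathbf{b}_{p(i)}+\mathbf{t}_i$ from Eq.~(\ref{aitaj}) together with the definitions $\tilde{\mathbf{b}}_i=V\Sigma^{-1/2}U^\text{T}\mathbf{b}_i$ and $\tilde{\mathbf{t}}_i=V\Sigma^{-1/2}U^\text{T}\mathbf{t}_i$ yields $\tilde{\mathcal{J}}(\tilde{\mathbf{a}}_i)=\tilde{\mathbf{b}}_{p(i)}+\tilde{\mathbf{t}}_i$, so that $\mathbf{w}_i=\tilde{\mathbf{b}}_{p(i)}+\tilde{\mathbf{t}}_i-\tilde{\mathbf{a}}_i$ and the variance reproduces Eq.~(\ref{djbpi}) once written out.

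I expect the main obstacle to be a bookkeeping subtlety rather than a conceptual one: the target formula Eq.~(\ref{djbpi}) carries $\overline{\tilde{\mathbf{b}}_j}$ rather than $\overline{\tilde{\mathbf{b}}_{p(j)}}$, so I must justify that reindexing by $p$ leaves the weighted mean invariant. This holds because $\theta_i$ depends only on atomic species and $p$ is species-preserving (by Definition~\ref{def:pct}, $\chi_A(\mathbf{a}_i)=\chi_B(\mathbf{b}_{p(i)})$), whence $\theta_{p^{-1}(k)}=\theta_k$ and $\sum_j\theta_j\tilde{\mathbf{b}}_{p(j)}=\sum_k\theta_k\tilde{\mathbf{b}}_k$. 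Stating this species-consistency carefully, and confirming that the motifs $\{\tilde{\mathbf{a}}_i\}$ and $\{\tilde{\mathbf{b}}_i\}$ are indexed compatibly by species so that the same $\theta_i$ governs both, is the only place demanding genuine care.
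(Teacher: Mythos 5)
Your proof is correct and follows essentially the same route as the paper's: the paper completes the square to obtain $\overline{\lvert\mathbf{v}_i+\boldsymbol{\tau}\rvert^2}=\overline{\lvert\mathbf{v}_i-\overline{\mathbf{v}_j}\rvert^2}+\lvert\boldsymbol{\tau}+\overline{\mathbf{v}_j}\rvert^2$, which is equivalent to your gradient-plus-strict-convexity argument for the same weighted quadratic, and both then read off the minimizer $\boldsymbol{\tau}=-\overline{\mathbf{v}_i}$ and the minimum as the weighted variance. You additionally supply two details the paper leaves implicit when it asserts the result ``is identical to Eq.~(\ref{djbpi})'' --- the diagram chase showing $\tilde{\mathcal{J}}(\tilde{\mathbf{a}}_i)=\tilde{\mathbf{b}}_{p(i)}+\tilde{\mathbf{t}}_i$, and the species-preserving reindexing argument giving $\sum_j\theta_j\tilde{\mathbf{b}}_{p(j)}=\sum_j\theta_j\tilde{\mathbf{b}}_j$ --- and you handle both correctly.
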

\begin{proof}[\indent Proof]
	For any $\mathbf{v}_1,\cdots,\mathbf{v}_{\tilde Z}\in\mathbb{R}^3$ and $\boldsymbol{\tau}\in\mathbb{R}^3$, we have
	\begin{align}
		&\phantom{{}={}}\overline{\lvert\mathbf{v}_i+\boldsymbol{\tau}\rvert^2}
	      \\&=\overline{\lvert(\mathbf{v}_i-\overline{\mathbf{v}_j})+(\boldsymbol{\tau}+\overline{\mathbf{v}_j})\rvert^2}
	      \\&=\overline{\lvert \mathbf{v}_i-\overline{\mathbf{v}_j}\rvert^2}+2\,\overline{(\mathbf{v}_i-\overline{\mathbf{v}_j})\cdot(\boldsymbol{\tau}+\overline{\mathbf{v}_j})}+\overline{\lvert \boldsymbol{\tau}+\overline{\mathbf{v}_j}\rvert^2 }
	      \\&=\overline{\lvert \mathbf{v}_i-\overline{\mathbf{v}_j}\rvert^2}+2\,\overline{(\mathbf{v}_i-\overline{\mathbf{v}_j})}\cdot(\boldsymbol{\tau}+\overline{\mathbf{v}_j})+\lvert \boldsymbol{\tau}+\overline{\mathbf{v}_j}\rvert^2
	      \\&=\overline{\lvert \mathbf{v}_i-\overline{\mathbf{v}_j}\rvert^2}+\lvert \boldsymbol{\tau}+\overline{\mathbf{v}_j}\rvert^2
	.\end{align}
	We can see that $\overline{\lvert \mathbf{v}_i+\boldsymbol{\tau}\rvert^2}$ attains its minimum $\overline{\lvert \mathbf{v}_i-\overline{\mathbf{v}_j}\rvert^2 }$ if and only if $\boldsymbol{\tau}=-\overline{\mathbf{v}_i}$.
	Hence, when $\ell=2$, we have
	\begin{equation}
		d(\mathcal{J})=\sqrt{\overline{\left|\tilde{\mathcal{J}}(\tilde{\mathbf{a}}_i)-\tilde{\mathbf{a}}_i-\overline{\tilde{\mathcal{J}}(\tilde{\mathbf{a}}_j)}+\overline{\tilde{\mathbf{a}}_j}\right|^2}}
	,\end{equation}
	which is identical to Eq.~(\ref{djbpi}).
\end{proof}
\begin{lem}\label{lem:finite-z}
	Let $\mathcal{A}$ be a crystal structure and $L$ a sublattice of $\mathcal{A}$.
	If we additionally require that $\mathcal{A}\subset\mathbb{R}^3$ has no cluster points, then $\lvert \mathcal{A}/L\rvert$ is finite.
\end{lem}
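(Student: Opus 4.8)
The plan is to reduce the counting of $L$-equivalence classes to counting atoms of $\mathcal{A}$ inside a single fundamental domain of $L$, and then to use the no-cluster-point hypothesis to make that count finite. Concretely, by Lemma~\ref{lem:basis} the lattice $L$ admits a base matrix $C\in\mathbb{R}^{3\times 3}$, so $L=C(\mathbb{Z}^3)$, and I would take the half-open parallelepiped $\mathcal{F}=\{C\mathbf{x}\mid\mathbf{x}\in[0,1)^3\}$ as a fundamental domain. Because $L$ is full rank (Definition~\ref{def:l}), $\mathcal{F}$ is bounded, and the translates $\{\mathcal{F}+\mathbf{t}\mid\mathbf{t}\in L\}$ partition $\mathbb{R}^3$ disjointly: writing any $\mathbf{y}=C\mathbf{x}$ and splitting $\mathbf{x}$ into its componentwise integer and fractional parts exhibits $\mathbf{y}$ uniquely as an element of $L$ plus an element of $\mathcal{F}$.

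Next I would establish a bijection between $\mathcal{A}/L$ and $\mathcal{A}\cap\mathcal{F}$. Since $L$ is a sublattice of $\mathcal{A}$, every $\mathbf{t}\in L$ is a translation element of $\mathcal{A}$, so $\mathcal{A}+\mathbf{t}=\mathcal{A}$; in particular all points $L$-equivalent to a given $\mathbf{a}$ lie in $\mathcal{A}$. Given any $\mathbf{a}\in\mathcal{A}$, the disjoint tiling provides a unique $\mathbf{t}\in L$ with $\mathbf{a}-\mathbf{t}\in\mathcal{F}$, and $\mathbf{a}-\mathbf{t}\in\mathcal{A}$ by this $L$-invariance. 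Hence each class $\mathbf{a}+L$ has exactly one representative in $\mathcal{F}$, giving $\lvert\mathcal{A}/L\rvert=\lvert\mathcal{A}\cap\mathcal{F}\rvert$.

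Finally, I would show $\mathcal{A}\cap\mathcal{F}$ is finite. Its closure sits inside the compact set $\overline{\mathcal{F}}$, so $\mathcal{A}\cap\overline{\mathcal{F}}$ is bounded. Were it infinite, the Bolzano-Weierstrass theorem would furnish a cluster point $\mathbf{x}\in\overline{\mathcal{F}}\subset\mathbb{R}^3$; since $\mathcal{A}\cap\overline{\mathcal{F}}\subseteq\mathcal{A}$, this $\mathbf{x}$ would be a cluster point of $\mathcal{A}$, contradicting the hypothesis. Thus $\mathcal{A}\cap\overline{\mathcal{F}}$---and therefore $\mathcal{A}\cap\mathcal{F}$---is finite, so $\lvert\mathcal{A}/L\rvert$ is finite.

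There is no deep obstacle here; the statement is essentially that a discrete, accumulation-free set meets a bounded region only finitely often. The one point that deserves care is the bijection step: I must invoke the $L$-invariance of $\mathcal{A}$ (which rests on $L\subseteq L_A$) to guarantee that the unique fundamental-domain representative of each class genuinely lies in $\mathcal{A}$, rather than merely in $\mathbb{R}^3$. Full rank of $L$ is equally essential, since otherwise $\mathcal{F}$ would be an unbounded slab and the Bolzano-Weierstrass argument would collapse.
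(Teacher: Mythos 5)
Your proof is correct and follows essentially the same route as the paper's: both pick a half-open fundamental parallelepiped of $L$, use the $L$-invariance of $\mathcal{A}$ to put the classes of $\mathcal{A}/L$ in one-to-one correspondence with the atoms in that bounded domain, and invoke Bolzano--Weierstrass to rule out infinitely many such atoms. The only cosmetic difference is that you argue directly via a bijection while the paper phrases the same injection as a proof by contradiction.
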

\begin{proof}[\indent Proof]
	We assume by contradiction that the quotient set $\mathcal{A}/L$ is infinite.
	Take $\mathbf{t}_1,\mathbf{t}_2,\mathbf{t}_3\in L$ that satisfy Eq.~(\ref{t1tnl}) and consider the parallelepiped
	\begin{equation}
		P=\left\{\sum_{i=1}^3x_i\mathbf{t}_i\,\middle|\,x_1,x_2,x_3\in[0,1)\right\}
	.\end{equation}
	Any element in $\mathbb{R}^3$ can be uniquely decomposed as
	\begin{equation}
	  \mathbf{v}=\sum_{i=1}^3c_i\mathbf{t}_i
	,\end{equation}
	where $c_1,c_2,c_3\in\mathbb{R}$.
	Therefore, the mapping
	\begin{align}
		\pi\colon\qquad \mathbb{R}^3&\to P,\\
		\sum_{i=1}^3c_i\mathbf{t}_i&\mapsto\sum_{i=1}^3\big(c_i-\lfloor c_i\rfloor\big)\mathbf{t}_i
	,\end{align}
	is well-defined.
	Note that the image of each $\mathbf{a}+L\in\mathcal{A}/L$ under $\pi$ is a single element in $P\cap\mathcal{A}$, and
	\begin{equation}
		\pi(\mathbf{a}_1+L)=\pi(\mathbf{a}_2+L)\quad\implies\quad \mathbf{a}_1+L=\mathbf{a}_2+L
	,\end{equation}
	i.e., $\pi\colon\mathcal{A}/L\to P\cap\mathcal{A}$ is injective.
	Since $\mathcal{A}/L$ is infinite, its image $\pi(\mathcal{A}/L)$ is also infinite, and so is $P\cap\mathcal{A}$.
	The Bolzano-Weierstrass theorem then states that $P\cap\mathcal{A}$ must have a cluster point.
\end{proof}
\begin{lem}\label{lem:intersect}
	Let $L$ be a lattice, and $L_1,L_2$ its sublattice.
	The intersection set $L_1\cap L_2$ is full rank.
\end{lem}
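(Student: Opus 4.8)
The plan is to exploit the finiteness of the index of a sublattice, which is exactly what Lemmas~\ref{lem:basis} and \ref{lem:sublat} provide. Since $L_1$ and $L_2$ are sublattices of $L$, fixing a base matrix $C$ of $L$ lets me write $L_i = CM_i(\mathbb{Z}^3)$ with nonsingular $M_i\in\mathbb{Z}^{3\times 3}$, so that the index $k_i=\lvert\det M_i\rvert$ of $L_i$ in $L$ is a finite positive integer. In particular $L/L_2$ is a finite abelian group of order $k_2$, and each $L_i$ is itself full rank (nonsingular $M_i$), hence contains three linearly independent vectors.

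First I would apply Lagrange's theorem to the finite group $L/L_2$: every element is annihilated by the group order, so $k_2\mathbf{t}\in L_2$ for every $\mathbf{t}\in L$, and in particular for every $\mathbf{t}\in L_1\subset L$. Next, since $L_1$ is a group under addition, $\mathbf{t}\in L_1$ also forces $k_2\mathbf{t}\in L_1$; combining the two memberships gives $k_2\mathbf{t}\in L_1\cap L_2$ for all $\mathbf{t}\in L_1$. Finally, picking three linearly independent vectors $\mathbf{t}_1,\mathbf{t}_2,\mathbf{t}_3\in L_1$ (possible because $L_1$ is full rank in the sense of Definition~\ref{def:l}) and scaling by the nonzero integer $k_2$ preserves linear independence, so $k_2\mathbf{t}_1,k_2\mathbf{t}_2,k_2\mathbf{t}_3$ are three linearly independent elements of $L_1\cap L_2$. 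This shows the intersection is full rank.

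The only genuinely substantive step — the crux of the argument — is the passage $k_2\mathbf{t}\in L_2$, which relies purely on the finiteness of the index rather than on any explicit description of the intersection lattice; everything after it is immediate. As an alternative I could transport the whole problem to $\mathbb{Z}^3$ via $C^{-1}$ and reason directly about finite-index subgroups of $\mathbb{Z}^3$ using integer matrix arithmetic, but the Lagrange argument is cleaner and sidesteps having to construct a basis of $L_1\cap L_2$ (e.g.\ via Smith or Hermite normal forms). I would therefore present the Lagrange version.
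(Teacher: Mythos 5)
Your proof is correct. It follows essentially the same strategy as the paper---exhibiting three linearly independent vectors of $L_1\cap L_2$ by scaling a full-rank set by a suitable positive integer---except that you obtain the scaling integer from Lagrange's theorem on the finite quotient $L/L_2$ (giving $k_2L_1\subseteq L_1\cap L_2$), whereas the paper clears the denominators of the rational matrix $M_1^{-1}M_2$ (giving $mL_2\subseteq L_1\cap L_2$); both steps are valid and the arguments are of equal weight.
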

\begin{proof}[\indent Proof]
	Let $C,C_1,C_2$ be base matrices of $L,L_1,L_2$, respectively.
	We only need to prove that there exist nonsingular $N_1,N_2\in\mathbb{Z}^{3\times 3}$ such that
	\begin{equation}\label{c1n1c}
	  C_1N_1=C_2N_2
	.\end{equation}
	Lemma~\ref{lem:sublat} states that $C_1=CM_1$ and $C_2=CM_2$ for some nonsingular $M_1,M_2\in\mathbb{Z}^{3\times 3}$.
	Therefore, Eq.~\ref{c1n1c} is equivalent to
	\begin{equation}\label{n1m1m}
	  N_1=M_1^{-1}M_2N_2
	.\end{equation}
	Since $M_1^{-1}M_2$ is rational, we can always let $m$ be a common multiple of the denominators of its elements.
	Then we have nonsingular $N_1=mM_1^{-1}M_2$ and $N_2=mI$ that satisfies Eq.~\ref{n1m1m}.
\end{proof}
\section{Details of Algorithm~\ref{alg:imt}}\label{append:alg}
For empirical reasons, we generate $S_0\in\Omega_{S_0}$ using the SVD $S_0=U\Sigma V^\text{T}$.
Here, $U$ and $V$ are independently, uniformly sampled in $\operatorname{SO}(3)$, while the generation of $\Sigma=\operatorname{diag}(s_1,s_2,s_3)$ depends on $U$ and $V$.
We sample $(s_1,s_2,s_3)$ within the region $w(U\Sigma V^\text{T})\le w_\text{max}$ according to the distribution
\begin{equation}
	f(s_1,s_2,s_3)\propto \left\lvert s_1^2-s_2^2\right\rvert\left\lvert s_2^2-s_3^2\right\rvert\left\lvert s_1^2-s_3^2\right\rvert
,\end{equation}
and then sort them to restore $s_1\ge s_2\ge s_3$.
This procedure ensures that the projection of $S_0$ on $\operatorname{SO}(3)$, i.e., $UV^\text{T}$, is uniformly distributed, while for each $(U,V)$, the conditional distribution of $S_0$ follows the standard measure on $\mathbb{R}^{3\times 3}$.
The following algorithm analysis is based on this distribution, but one may well use other distributions to improve the performance of Algorithm~\ref{alg:imt}.
We begin by considering the probability that a given IMT $(H_A,H_B,Q)$ is generated in a single iteration.
This can be estimated using the intersection between the cyan region and the pink hypercube centered at $Q$ in Fig.~\ref{fig:alg1}.
Given that strains in SSPTs are usually small, the intersection can be considered as a thin vicinity of the 3D section of Eq.~(\ref{round}) intercepted by $\operatorname{SO}(3)$.
Therefore, the probability of $(H_A,H_B,Q)$ being generated is approximately equal to the probability that $\phi_{H_A,H_B}(UV^\text{T})$ is generated within this section.
Since the cyan region contains $Q$, the volume of this section can be assumed to be no less than 1.
On the other hand, $UV^\text{T}$ is uniformly distributed over $\operatorname{SO}(3)$, whose volume as a 3D submanifold is $16\sqrt{2}\,\pi^2$.
Hence, after the deformation by $\phi_{H_A,H_B}$, the probability density near $Q$ is at least
\begin{equation}
	\eta_{H_A,H_B}=\frac{\sigma_7(\phi_{H_A,H_B})\sigma_8(\phi_{H_A,H_B})\sigma_9(\phi_{H_A,H_B})}{16 \sqrt{2}\, \pi^2}
,\end{equation}
where
\begin{equation}
	\phi_{H_A,H_B}=(H_B^{-1}C_B^{-1})\otimes(H_A^\text{T}C_A^\text{T})
\end{equation}
is treated as a $9\times 9$ matrix.
Now we consider the probability that no IMT in a given congruence class is generated in a single iteration.
Since each IMT corresponds to an SLM, $(H_A',H_B',Q')$ is congruent to $(H_A,H_B,Q)$ if and only if there exist $R_A\in G_A'$ and $R_B\in G_B'$ such that
\begin{align}
	H_A'&=\operatorname{hnf}(\hat{R}_AH_A),\\
	H_B'&=\operatorname{hnf}(\hat{R}_BH_B),\\
	H_B'Q'H_A'^{-1}&=H_BQH_A^{-1}
,\end{align}
where $\hat{R}_A=C_A^{-1}R_AC_A$ and $\hat{R}_B=C_B^{-1}R_BC_B$ are unimodular matrices that describe how $R_A$ and $R_B$ change the base matrices of $L_A$ and $L_B$, respectively.
Therefore, the probability that no IMT congruent to $(H_A,H_B,Q)$ is generated has a lower bound
\begin{equation}
	\zeta_{H_A,H_B}=\prod_{R_A\in G_A'}\prod_{R_B\in G_B'} (1-\eta_{H_A',H_B'})
.\end{equation}
Replacing $(1-\eta)$ in Eq.~(\ref{ithln}) with this expression yields a significantly smaller
\begin{equation}
	i_\text{th}=\frac{\ln \epsilon}{\ln \displaystyle\max_{H_A,H_B}\zeta_{H_A,H_B}}
,\end{equation}
which can be computed numerically before Algorithm~\ref{alg:imt} starts.
%

%

\end{document}